\documentclass[submission,copyright]{eptcs}

\usepackage{stmaryrd}  %apparently, this is needed for [[ ]] and align*
\usepackage{tikz}
\usetikzlibrary{arrows,positioning}

\usepackage{bussproofs}
\usepackage{enumerate}
\usepackage{cite} %ordering groups of citations
\usepackage{caption}
\usepackage{subcaption}
\usepackage{xspace,centernot}

\newcommand*{\nat}{\ensuremath{\mathbb{N}}}

\newcommand{\logicize}[1]{{\ensuremath{\mathbf{#1}}}\xspace}
\renewcommand{\k}{\logicize{K}}
\renewcommand{\d}{\logicize{D}}
\renewcommand{\t}{\logicize{T}}
\renewcommand{\b}{\logicize{B}}
\newcommand{\kf}{\logicize{K4}}
\newcommand{\df}{\logicize{D4}}
\newcommand{\sr}{\logicize{S4}}
\newcommand{\sv}{\logicize{S5}}
\newcommand{\kv}{\logicize{K5}}

%================ THIS IS THE MACRO FOR CHANGES ===========
%\newcommand\change[1]{\textcolor{blue}{\textbf{#1}}}
\newcommand\change[1]{#1}
%=====================================================
%\newcommand{\LG}{\ensuremath{\mathtt{L}}}
%\newcommand{\LG}{\logicize{l}}
\newcommand{\LG}{\logicize{L}}

\usepackage{complexity}

\renewcommand{\M}{\mathcal{M}}

\usepackage{amsmath}
\usepackage{amsthm}
\usepackage{amsfonts}
\usepackage{amssymb}

\usepackage{multicol}
\usepackage{thmtools}

\newtheorem{theorem}{Theorem}
\newtheorem{definition}{Definition}
\newtheorem{corollary}[theorem]{Corollary}
\newtheorem{lemma}[theorem]{Lemma}

\newtheorem{proposition}[theorem]{Proposition}

\theoremstyle{remark}
\newtheorem{remark}{Remark}
\newtheorem{example}{Example}

\theoremstyle{definition}
\newtheorem{translation}{Translation}

\newcommand{\mycap}[1]{\text{\sc#1}}

\newcommand{\trueset}[1]{{\left\llbracket #1 \right\rrbracket}}
\newcommand{\diam}[1]{\langle #1 \rangle}

\newcommand{\stbox}[1]{[ #1 ]}
\newcommand{\al}{\alpha}
\newcommand{\alb}{\beta}

\newcommand{\proc}{\ensuremath{W}}  %% dealing with Kripke models now

\newcommand{\impl}{\rightarrow}
\newcommand{\tran}{\ensuremath{R}}
\newcommand{\trana}{\ensuremath{R_\al}}

\newcommand{\false}{\mathtt{ff}}
\newcommand{\true}{\mathtt{tt}}

\newcommand{\act}{\mycap{Ag}}

\newcommand{\ML}{modal logic}
\newcommand{\NI}{Negative Introspection}

\newcommand{\mn}{\ensuremath{\mu}}
\newcommand{\mx}{\ensuremath{\nu}}

\newcommand{\transl}[3]{\ensuremath{\mathsf{F}_{#1}^{#2}({#3})}}

\newcommand{\nStt}[2]{.#1\langle#2\rangle}
\newcommand{\fix}{\ensuremath{\mathsf{fx}}}
\newcommand{\form}{\ensuremath{\Phi}}

\newcommand{\resp}{\emph{resp.}~}

\newcommand{\sub}{\ensuremath{\mathsf{sub}}}
\newcommand{\subc}{\ensuremath{\overline{\mathsf{sub}}}}

\newcommand{\ax}{\ensuremath{\mathsf{ax}}}

\newcommand{\qedd}{\hfill$\blacksquare$}

\author{Luca Aceto 
\institute{Department of Computer Science\\ Reykjavik University \\
Reykjavik, Iceland}
\institute{Gran Sasso Science Institute \\ L'Aquila, Italy}
\email{luca@ru.is} 
\and Antonis Achilleos  
\institute{Department of Computer Science\\ Reykjavik University \\
Reykjavik, Iceland}
\email{\quad antonios@ru.is }
\and Elli Anastasiadi
\institute{Department of Computer Science\\ Reykjavik University \\
Reykjavik, Iceland}
\email{\quad elli19@ru.is }
\and Adrian Francalanza
\institute{Department of Computer Science, ICT \\ University of Malta \\ Msida, Malta
}
\email{adrian.francalanza@um.edu.mt}
\and Anna Ingolfsdottir
\institute{Department of Computer Science\\ Reykjavik University \\ Reykjavik, Iceland}
\email{annai@ru.is}
}
%\institute{School of Computer Science\\ Reykjavik University \\
%Reykjavik, Iceland
% \and 
%	Gran Sasso Science Institute, L'Aquila, Italy
%	\and
%	Dept. of Computer Science, ICT, University of Malta, Msida, Malta
%}
%\author{Luca Aceto \inst{1,2} \and Antonis Achilleos\inst{1} \and Elli Anastasiadi\inst{1} \and Adrian Francalanza\inst{3} \and Anna Ing\'{o}lfsd\'{o}ttir\inst{1}}

%%\institute{
%	School of Computer Science, Reykjavik University,
%	Reykjavik, Iceland
%	\and 
%	Gran Sasso Science Institute, L'Aquila, Italy
%	\and
%	Dept. of Computer Science, ICT, University of Malta, Msida, Malta
%}

%This research was supported by the project ``TheoFoMon: Theoretical Foundations for Monitorability'' (grant number: 163406-051) and the project ``Epistemic Logic for Distributed Runtime Monitoring'' (grant number: 184940-051) of the Icelandic Research Fund.
%}
%\titlerunning{}

%\keywords{Modal Logic, Completeness, Computational Complexity, Bisimulation}

%\begin{document}

%	\maketitle
	
%\begin{frontmatter}
	
\title{Complexity through Translations for Modal Logic with Recursion\thanks{ 
	This work has been funded by the projects ``Open Problems in the Equational Logic of Processes (OPEL)'' (grant no.~196050), ``Epistemic Logic for Distributed Runtime Monitoring'' (grant no.~184940), ``Mode(l)s of Verification and Monitorability'' (MoVeMent) (grant no~217987) of the Icelandic Research Fund, and the project ``Runtime and Equational Verification of Concurrent Programs'' (ReVoCoP) (grant 222021) of the Reykjavik University Research Fund.}
}

%\author[ru,gssi]{Luca Aceto}
%\ead{luca@ru.is}
%\author[ru]{Antonis Achilleos}
%\ead{antonios@ru.is}
%\author[um]{Adrian Francalanza}
%\ead{adrian.francalanza@um.edu.mt}
%\author[ru]{Anna Ing\'{o}lfsd\'{o}ttir}
%\ead{annai@ru.is}
%
%\address[ru]{
%	School of Computer Science, Reykjavik University,
%	Reykjavik, Iceland}
%
%\address[gssi]{ 
%	Gran Sasso Science Institute, L'Aquila, Italy}
%
%\address[um]{
%	Dept. of Computer Science, ICT, University of Malta, Msida, Malta
%}
%\address{
%	School of Computer Science, Reykjavik University,
%	Reykjavik, Iceland
%}

\begin{document}
	
	\maketitle
	
\begin{abstract} 

This paper studies the complexity of classical modal logics and of their extension with fixed-point operators, using 
%the concept of a 
translations to transfer results across logics. In particular, we show several complexity results
for 
%single- and 
multi-agent logics via translations to and from the $\mu$-calculus and modal logic, which allow us to transfer known upper and lower bounds. We also use these translations to introduce a terminating tableau system for the logics we study, based on Kozen's tableau for the $\mu$-calculus, and the one of Fitting and Massacci for modal logic.
%	We consider classical modal logics, their extension with fixpoint operators, and examine their complexity.
%	%These can be thought of as extensions of the $\mu$-calculus with modal axioms. 
%	Our approach is though the concept of a \textit{translation}, which is a general method for associating logics, and transferring results. 
%	Via those, we show several complexity results, for single and multi-agent logics, by associating them to the $\mu$-calculus and \ML, and thus transferring existing upper and lower bounds. 
%	%Certain single-agent cases have lower complexity than the original logics. Specifically, similarly to Halpern and R\^{e}go's results for \ML, the complexity of satisfiability for a single-agent logic drops to \NP, whenever negative introspection is introduced.
%	We also use these translations to introduce a terminating tableau system for our logics, based on the tableaux of Kozen for the $\mu$-calculus, and of Fitting and Massacci for \ML. 
\end{abstract}

%\begin{keyword}
%	Modal Logic \sep $\mu$-calculus 
%%	\sep Hennessy-Milner Logic with Recursion \sep Completeness \sep Characteristic Formulae 
%	\sep Computational Complexity 
%%	\sep Tableaux
%%	
%%	\PACS 32
%%	
%%	\MSC 42
%%	
%\end{keyword}
%
%\end{frontmatter}

\section{Introduction}
\label{sec:intro}
% !TeX root = draft.tex

We introduce a family of multi-modal logics with fixed-point operators that are interpreted on restricted classes of Kripke models.
One can consider these logics as extensions of the usual multi-agent logics of knowledge and belief  \cite{Fagin1995ReasoningAboutKnowledge} by adding recursion to their syntax or of the $\mu$-calculus \cite{Kozen1983a} by interpreting formulas on different classes of frames and thus giving an epistemic interpretation to the modalities.
We define \emph{translations} between these logics, and we demonstrate how 
one can rely on these translations to
prove  finite-model theorems, complexity bounds, and tableau termination for each logic in the family.

Modal logic comes in several variations \cite{blackburn2006handbook}. 
Some of these, such as multi-modal logics of knowledge and belief  \cite{Fagin1995ReasoningAboutKnowledge}, are of particular interest to Epistemology and other application areas. 
Semantically, the classical modal logics used in epistemic (but also other) contexts result from imposing certain restrictions on their models.
On the other hand, the modal $\mu$-calculus \cite{Kozen1983a} can be seen as an extension of the smallest normal modal logic \k\ with greatest and least fixed-point operators, $\mx X$ and $\mn X$ respectively.
We explore the situation where one allows both recursion (\emph{i.e.} fixed-point) operators in a multi-modal language and imposes restrictions on the semantic models.

We are interested in the complexity of satisfiability for the resulting logics.
Satisfiability for the 
%plain 
$\mu$-calculus is known to be \EXP-complete \cite{Kozen1983a}, while for the modal logics between $\k$ and $\sv$ the problem is \PSPACE-complete or \NP-complete, depending on whether they have \NI\ \cite{ladnermodcomp,Halpern2007Characterizing}.
In the multi-modal case, satisfiability for those modal logics becomes \PSPACE-complete, and is \EXP-complete with the addition of a common knowledge operator \cite{Halpern1992}.

%\paragraph{Related work}
\change{
There is plenty of relevant work on 
the $\mu$-calculus on restricted frames, 
mainly in its single-agent form.
Alberucci and Facchini examine the alternation hierarchy of the $\mu$-calculus over reflexive, symmetric, and transitive frames in \cite{alberucci_facchini_2009}.
D'Agostino and Lenzi have studied the $\mu$-calculus on different classes of frames in great detail.  
%this setting
In \cite{DAGOSTINO20104273transitive}, they reduce the $\mu$-calculus over finite transitive frames to first-order logic.
In \cite{Dagostino2013S5}, they prove that $\sv^\mu$-satisfiability is \NP-complete, and that the two-agent version of $\sv^\mu$ does not have the finite model property.
In \cite{Dagostino2015modal}, they consider finite symmetric frames, and they prove that $\b^\mu$-satisfiability is in 2\EXP, and \EXP-hard.
They also examine planar frames in \cite{DAGOSTINO201840planar}, where they show that the alternation hierarchy of the $\mu$-calculus over planar frames is infinite.
}

Our primary method of proving complexity results is through translations to and from the multi-modal $\mu$-calculus.
We show that we can use surprisingly simple translations from modal logics without recursion to the base modal logic $\k_n$, reproving the \PSPACE\ upper bound for these logics \change{(Theorem \ref{thm:no-fixpoint-translations-work} and Corollary \ref{cor:modalupper})}.
These translations and our constructions to prove their correctness do not generally transfer to the corresponding logics with recursion.
We present translations from specific logics to the $\mu$-calculus and back, and we discuss the remaining open cases.
We discover, through the properties of our translations, that several behaviors induced on the transitions do not affect the complexity of the satisfiability problem.
\change{%
	As a result, we prove that all logics with axioms among $D$, $T$, and $4$, and the  least-fixed-point fragments of logics that also have $B$,  
have their satisfiability in \EXP, and a matching lower bound for the logics with axioms from $D,T,B$ (Corollaries \ref{cor:mu-upper1} and \ref{cor:mu-upper2}).%
}
Finally, we present tableaux for the discussed logics, based on the ones by Kozen for the $\mu$-calculus \cite{Kozen1983a}, and by Fitting and Massacci for modal logic \cite{fitting1972tableau,Massacci1994}.
We give tableau-termination conditions for every logic with a finite model property \change{(Theorem \ref{thm:tableaux})}.

The addition of recursive operators to \ML\ increases expressiveness.
An important example is that of \emph{common knowledge} or \emph{common belief}, which can be expressed with a greatest fixed-point thus:
 $\mx X.(\varphi \land \bigwedge_\al [\al] X)$. 
But the combination of epistemic logics and fixed-points
can potentially express more interesting epistemic concepts.
For instance, the formula
$\mn X. \bigvee
_{\al % \in \act
}(
[\al]\varphi \vee [\al] X)$, 
in the context of a belief interpretation, 
can be thought to claim that there is a rumour of $\varphi$.
It would be interesting to see what other meaningful sentences of epistemic interest one can express using recursion. 
Furthermore, the family of logics we consider allows each agent to behave according to a different logic. 
This flexibility allows one to mix different interpretations of modalities, such as a temporal interpretation for one agent and an epistemic interpretation for another.
Such logics can even resemble hyper-logics \cite{Hyperproperties} if a set of agents represents different streams, and combinations of epistemic and temporal or hyper-logics have recently been used to express safety and privacy properties of systems \cite{EpistLogicRV}.

The paper is organized as follows. Section \ref{sec:back} gives the necessary background and an overview of current results.
Section \ref{sec:translations} defines a class of translations that provide us with several upper and lower bounds, and identifies  conditions under which they can be composed.
In Section \ref{sec:multi} we finally give tableaux for our multi-modal logics with recursion. 
We conclude in Section \ref{sec:conclusion} with a set of open questions and directions.
Omitted proofs can be found in the 
full version of the paper \cite{CTMLR2022onlinepreprint}.
%appendix.

\section{Definitions and Background}
\label{sec:back} 
This section introduces the logics that we study and the necessary background on the complexity of  \ML\ and the $\mu$-calculus.

\subsection{The Multi-Modal Logics with Recursion}
We start by defining the syntax of the logics.
\begin{definition}
	We consider formulas  constructed from the following grammar:
	\begin{align*}
	\varphi,\psi \in L
%	\in\muhml         
	::&=
	p &&|~~
	\neg p &&|~~
	\true                     &&|~~\false      &	&|~~X     %\\
	&&|~~\varphi\land\psi            &&|~~\varphi\lor\psi    \\
	&|~~\langle\al\rangle\varphi &&|~~[\al]\varphi    %\\
	&&|~~\mn X.\varphi              &&|~~\mx X.\varphi                                                     ,
	\end{align*}
	where $X$ comes from a countable set of logical (or fixed-point) variables, \change{$\mycap{LVar}$}, $\al$ from a finite set of agents, $\act$, and $p$ from a finite 
%	or countably infinite 
	set of propositional variables\change{, $\mycap{PVar}$}. 
%	$P$. Let 
%	$\mycap{PVr}$. 
%	be the (countably infinite) set of \emph{all} propositional variables; therefore, $P \subseteq \mycap{PVr}$.
	When $\act = \{\al \}$, 
%	we may use 
	$\Box \varphi$ 
%	instead of 
	stands for
	$[\al]\varphi$, and $\Diamond \varphi$ 
	for
%	instead of 
	$\diam{\al}\varphi$. 
	We also write $[A]\varphi$ to mean $\displaystyle\bigwedge_{\al \in A} [\al]\varphi$ and $\diam{A}\varphi$ 
	\change{for} 
%	to mean 
	$\displaystyle\bigvee_{\al \in A} \diam{\al}\varphi$.
\end{definition}

A formula is closed when every occurrence of a variable $X$  is in the scope of recursive operator $\mx X$ or $\mn X$. 
Henceforth we 
consider
%mention
%work 
only 
%with 
closed formulas, unless we specify otherwise.

\change{Moreover, for recursion-free closed formulas we associate the notion of \textit{modal depth}, which 
	is the nesting depth
%	expresses the largest nesting 
	of the modal operators% 
%	$\diam{a}$ and $\stbox{a}$ 
	\footnote{\change{The modal depth of recursive formulas can be either zero, or infinite. However, this is not relevant for the spectrum of this work.}}. The modal depth of $\varphi$ is defined inductively as:
\begin{itemize}
\item $md(p) = md(\neg p) = md(\true) = md(\false) = 0$, where $p \in \mycap{PVar}$,
\item $md(\varphi \lor \psi) = md(\varphi \land \psi) = max(md(\varphi),md(\psi))$, and
%\item $md(\varphi \land \psi) = max(md(\varphi),md(\psi))$,
\item $md(\stbox{a}\varphi) = md(\diam{a}\varphi)=  1 + md(\varphi)$, where $a \in \act$.
%\item $md(\diam{a}\varphi) = 1 + md(\varphi)$, where $a \in \act$, 
\end{itemize}}
We assume that in formulas, each recursion variable $X$ appears in a unique fixed-point formula $\fix(X)$, which is either of the form $\mn X.\varphi$ or $\mx X.\varphi$.
If $\fix(X)$ is a least-fixed-point (\resp greatest-fixed-point) formula, then $X$ is called a least-fixed-point (\resp greatest-fixed-point) variable.
We can define a partial order on fixed-point variables, such that 
$X\leq Y$ iff $\fix(X)$ is a subformula of $\fix(Y)$, and $X < Y$ when $X \leq Y$ and $X \neq Y$.
%We define $|\varphi|$ as the length of $\varphi$ as a string of symbols.
%Let $X_1,X_2,\ldots,X_k$ be a sequence of fixed-point variables, such that 
%for every $i < j \leq k$, $X_j \not \leq X_i$.
If $X$ is $\leq$-minimal among the free variables of $\varphi$, then 
%are among $X_1,X_2,\ldots,X_k$, then 
we define the \emph{closure} of $\varphi$ to be 
%the 
%closure of 
$cl(\varphi)=cl(\varphi[\fix(X)/X])$,
%(closed) formula $\varphi[\fix(X_1)/X_1][\fix(X_2)/X_2]\cdots[\fix(X_k)/X_k]$,
 where $\varphi[\psi/X]$ is the usual substitution operation, and if $\varphi$ is closed, then 
% it is the closure of 
 $cl(\varphi)=\varphi$.
 
 We define $\sub(\varphi)$ as the set of subformulas of $\varphi$, and  $|\varphi|=|\sub(\varphi)|$ is bounded by the length of $\varphi$ as a string of symbols.
 Negation, $\neg \varphi$, and implication, $\varphi \impl \psi$, can be defined in the usual way.
 Then, we define $\subc(\varphi) = \sub(\varphi) \cup \{ \neg \psi \in L \mid \psi \in \sub(\varphi) \}$.
 
% \ac{changed the defnition of size due to complications; should we introduce [A] as the standard notation and then define [a]?}

%$L(P)$ is a multi-modal version of the language of the $\mu$-calculus \cite{Kozen1983a} --- or an extension of \muhml\ \cite{Larsen1990} with propositional variables. It is the most general language that we consider in this paper.

\paragraph{Semantics}

We interpret formulas on 
the states of a
\emph{Kripke model}.
%\emph{processes}, which are
%the states of a labelled transition system (LTS).
%
%\begin{definition}
%An LTS, or 
A Kripke model, or simply 
model, 
%labelled transition system 
is
a quadruple
%\[
\change {$M=$} $( \proc,\tran,V)$
%\]
where $\proc$ is a nonempty set of states, 
%or processes, $\act$ is the set of actions,  
${\tran}\subseteq\proc\times\act \times \proc$ is a transition relation, and $V:\proc\to 2$\change{$^{\mycap{PVar}}$} determines on which states a propositional variable is true. 
$( \proc,\tran)$ is called a \emph{frame}.
We usually write $(u,v)\in \trana$ or $u \trana v$ instead of $(u,\al,v)\in \tran$, or $u \tran v$, \change{when $\act$ is a singleton $\{\al\}$}.

Formulas are evaluated in the context of 
%an model $\M$ and 
an \emph{environment} $\rho:\mycap{LVar}\to 2^\proc$, which gives values to the logical variables. 
%in the formula.
For an environment $\rho$, variable $X$, and set $S \subseteq \proc$, we write $\rho[X \mapsto S]$ for the environment that maps $X$ to $S$ and all $Y \neq X$ to $\rho(Y)$.
The semantics for 
our
%$\muhml$ 
formulas is given through a function $ \trueset{ -  }_\M$, defined in Table \ref{table:semantics}.
%, which, given an environment $\rho$, maps each formula to the set of processes that satisfy the formula:
%--- namely the processes which satisfy the formula under the assumption that each $X \in \mycap{LVar}$ is satisfied by the processes in $\rho(X)$. $\llbracket \cdot \rrbracket$ is defined as follows:
\begin{table}
\begin{align*}
\noalign{\noindent
$ \trueset{\true, \rho }= \proc, 
\hfill 
 \trueset{\false,\rho }=\emptyset ,
\hfill
%\llbracket X, \rho\rrbracket = \rho(X),
 \trueset{ p, \rho } = \{s \mid p\in V(s) \}
,
\hfill
 \trueset{ \neg p, \rho } = \proc {\setminus}  \trueset{ p, \rho } 
,
%~
%\llbracket X, \rho\rrbracket = \rho(X) ,
$}
%\\
%\noalign{\noindent $
%\llbracket\varphi_1{\land}\varphi_2, \rho\rrbracket&=\llbracket\varphi_1,\rho\rrbracket\cap\llbracket\varphi_2,\rho\rrbracket, \hfill 
%        & 
%\llbracket\varphi_1{\lor}\varphi_2, \rho\rrbracket&=\llbracket\varphi_1,\rho\rrbracket\cup\llbracket\varphi_2,\rho\rrbracket ,
%$}
%\\
 \trueset{[\al]\varphi,\rho }&=\left\{s~\big|~ \forall t. \ sR_{\al}t\text{ implies } t\in \trueset{\varphi,\rho }\right\},
& \trueset{\varphi_1{\land}\varphi_2, \rho }&= \trueset{\varphi_1,\rho }\cap \trueset{\varphi_2,\rho }, 
\\
 \trueset{\langle\al\rangle\varphi,\rho }&=\left\{s~\big|~ \exists t. \ sR_{\al}t\text{ and } t\in \trueset{\varphi,\rho }\right\},
 & 
 \trueset{\varphi_1{\lor}\varphi_2, \rho }&= \trueset{\varphi_1,\rho }\cup \trueset{\varphi_2,\rho } , 
\\
%\\
%&
 \trueset{\mn X.\varphi,\rho }&=\bigcap\left\{S~\big|~ S \supseteq  \trueset{\varphi,\rho[X\mapsto S] }\right\},
%& \trueset{ p, \rho } &= \{s \mid p\in V(s) \},
 &
%\\ 
 \trueset{ X, \rho }& = \rho(X),
\\
%&\text{ and }
%&
 \trueset{\mx X.\varphi,\rho }&=\bigcup\left\{S~\big|~ S \subseteq  \trueset{\varphi, \rho[X\mapsto S] }\right\} .
%& \trueset{ \neg p, \rho } &= \{s \mid p\notin V(s) \},
%        &
%%\\ 
%\llbracket X, \rho\rrbracket& = \rho(X) .
\end{align*}
\caption{Semantics of modal formulas on a model $\M = (W,R,V)$. We omit $\M$ from the 
	notation.}
%	subscript of $\trueset{-}_\M$.}
\label{table:semantics}
\end{table}
The semantics of $\neg \varphi$ 
%of a formula and implication $\varphi \impl \psi$
%(to be read as ``$\varphi$ implies $\psi$'')
are constructed 
as usual, where $ \trueset{ \neg X, \rho }_\M = \proc {\setminus} \rho(X)$.

%, or
We sometimes use  
$\M,s \models_\rho \varphi$ for $s \in \trueset{\varphi,\rho}_\M$, and 
as the environment 
%$\rho$ 
has no effect on the semantics of a closed formula $\varphi$, 
%for a closed formula $\varphi$, 
we 
often 
drop it from the notation 
%for $\trueset{\cdot}$
 and 
write 
$\M,s \models \varphi$ or $s \in \trueset{\varphi}_\M$.
%$s \in \llbracket\varphi\rrbracket_\M$. 
%instead of $\llbracket\varphi,\rho\rrbracket$. 
%Henceforth we 
%consider
%%mention
%%work 
%only 
%%with 
%closed formulas.
%, unless stated otherwise.
%We assume that our LTS is general enough to have enough states to exhibit all possible behaviours.
%
If $\M,s \models \varphi$, we say that $\varphi$ is true, or satisfied, in $s$. 
%When $s$ can be a state of more than one LTS, to distinguish between the semantics that result from one LTS to another, we add the name of the LTS to our notation, and, for instance, for LTS $\M$ we write $\M,s \models \varphi$, and $\trueset{\varphi}_\M$.
%of $\M$. 
%$(W,R)$ is called a \emph{frame}. We call a Kripke model $(W,R,V)$ (resp. frame $(W,R)$) finite if $W$ is finite.\footnote{According to our definition, for a finite model $\M=(W,R,V)$ and $a \in W$, $V(a)$ can be infinite. However, we are mainly interested in $(W,R,V_P)$ for finite sets of propositions $P$, which justifies calling $\M$ finite.} 
%If $\M$ is a model (for logic $\LG$) and $a$ is a state of $\M$, then $(\M,a)$ is a pointed model (resp. for $\LG$).
%For a state $x \in W$ of a frame $(W,R)$, $Reach(x)\subseteq W$ is the set of states reachable from $x$; i.e. it is the smallest set such that $x \in Reach(x)$ and if $y \in Reach(x)$ and $y R z$, 
%for some $z \in W$, 
%then $z \in Reach(x)$.
%For each of the logics that we consider in this paper, we assume a fixed LTS 
%and 
%we assume that the LTS 
%that
%contains all the possible finite behaviours and only those.
%That is, we can think of the fixed LTS for the logic as the collection of (isomorphic copies of)  all other LTSs that are suitable for the logic.
When the particular model does not matter, or is clear from the context, we may omit it. 
%from the notation. 
%and we can assume that we work in that fixed LTS.

%For a formula $\varphi$, $P(\varphi)$ is the set of propositional variables that appear in $\varphi$;
%%, so $\varphi \in L(P(\varphi))$;
%%If $\varphi \in L$, then 
%$sub(\varphi)$ is the set of subformulas of $\varphi$ and $\overline{sub}(\varphi) = sub(\varphi) \cup \{\neg \psi \mid \psi \in sub(\varphi) \}$.
%For $\Phi$ a nonempty finite 
%set of formulas,
%%subset of $\LG$, 
%$\bigwedge \Phi$ is 
%a conjunction of all elements of $\Phi$ and
%$\bigwedge \emptyset = \true$; we define $\bigvee \Phi$ similarly.
%The modal depth $md(\varphi)$ of a formula $\varphi$ without recursion is the largest nesting depth of its modal operators:
%\begin{align*}
%	md(p)&=md(\neg p)=md(\false)=0; \\
%	md(\varphi \wedge \psi) &= md(\varphi \vee \psi) = \max \{md(\varphi),md(\psi) \}; ~\text{ and} \\
%	md([\al] \varphi) &=md(\diam{\al} \varphi) = md(\varphi)+1.
%\end{align*}
%%The size of $\varphi$ is $|\varphi| = |sub(\varphi)|$.
%For every $d\geq 0$, $\overline{sub}_d(\varphi) = \{ \psi \in \overline{sub}(\varphi) \mid md(\psi) \leq d \}$.

Depending on how we 
further 
restrict our 
%the 
syntax 
%of the formulas, 
and 
%on possible additional restrictions that we can impose on 
the model, we can describe several logics.
%For instance, w
Without further restrictions, the resulting logic 
%we have described 
is the $\mu$-calculus \cite{Kozen1983a}. The max-fragment (resp. min-fragment) of the $\mu$-calculus is the fragment that only allows 
%formulas that do not use 
the $\mx X$ (resp. the $\mn X$) recursive operator.
%If we do not allow any propositional variables, the resulting logic is $\muhml$, and if we further disallow all recursive operators, the resulting logic is HML.
If 
%we allow 
%propositional variables, but 
%only one action (
$|\act| = k$
%) 
and we allow no recursive operators (or recursion variables), then we have the basic modal logic $\k_k$ (or $\k$, if $k=1$), and further restrictions on the frames can result in a wide variety of modal logics (see \cite{MLBlackburnRijkeVenema}). 
We give names to the following frame conditions, or frame constraints, for the case where $\act=\{\al\}$.
These conditions correspond to the usual axioms for normal modal logics --- see \cite{blackburn2006handbook,MLBlackburnRijkeVenema,Fagin1995ReasoningAboutKnowledge}, which we will revisit in Section \ref{sec:translations}.
%\footnote{These conditions correspond to the usual axioms for normal modal logics --- see \cite{blackburn2006handbook,MLBlackburnRijkeVenema,Fagin1995ReasoningAboutKnowledge}.} 
%For each $\al \in \act$:
\begin{multicols}{2}
\begin{description}
	\item[$D$:] 
%	there is no $\nil$ state in the model --- in other words, 
	$\tran$ is serial: 
%	--- in other words, 
	$\forall s.\exists t. s  \tran t $;
%	then $R$ must be serial (for every state $a \in W$ there must be some $b \in W$ such that $a R b$);
	\item[$T$:] $\tran$ is reflexive: 
%	--- in other words, 
	$\forall s. s  \tran s $;
	\item[$B$:] $\tran$ is symmetric: 
%	--- in other words, 
	$\forall s,t. {(s \tran t   \Rightarrow  t \tran s)}$;
	\item[$4$:] $\tran$ is transitive: 
%	--- in other words, 
	$\forall s,t,r. {(s \tran t \tran r  \Rightarrow  s \tran r)}$;
	\item[$5$:] $\tran$ is euclidean: 
%	--- in other words, 
	$\forall s,t,r.$ if $s \tran t$ and $s \tran r$, \\  then $t \tran r$.
\end{description}
\end{multicols}
%These conditions correspond to the usual axioms for normal modal logics --- see \cite{blackburn2006handbook,MLBlackburnRijkeVenema,Fagin1995ReasoningAboutKnowledge}, which we will revisit in Section \ref{sec:translations}.
%When we refer to these conditions as conditions of a frame or an accessibility relation, we may refer to them with relational 
%We note that in any LTS that satisfies  constraints $T$, $4$, and $5$, each $\trana$ is an equivalence relation. Though, as we will see in Section \ref{sec:NI}, constraints $T$ and $5$ suffice.

We consider modal logics that are interpreted over models that satisfy a combination of these constraints for each agent. 
%Of course, not all combinations make sense: 
$D$, which we call Consistency, is a special case of $T$, called Factivity. Constraint $4$ is 
%called 
Positive Introspection and $5$ is called Negative Introspection.\footnote{These are names for properties or axioms of a logic. When we refer to these conditions as conditions of a frame or model, we may refer to them with the name of the corresponding relation condition: seriality, reflexivity, symmetry, transitivity, and euclidicity.}
Given a logic $\LG$ and constraint $c$, $\LG+c$ is the logic that is interpreted over all models with frames that satisfy  all the constraints of $\LG$ and $c$.
%Logic $\d_k$ is $\k_k + D$, $\t_k$ is $\k_k+T$, $\kf_k = \k_k + 4$, $\df_k = \k_k + D + 4 = \d_k + 4$, $\sr_k = \k_k + T +4 = \t_k + 4 = \kf_k +T$, $\text{\logicize{KD45}}_k = \df_k+5$, and $\sv_k = \sr_k + 5$. 
%When $k = 1$, we usually omit it from the subscript of a logic's name.
%For $\LG$ being one of the logics above, $\LG^\mu$ is the logic that resumodel from $\LG$ after we allow recursive operators in the syntax.
%Therefore, the $\mu$-calculus is $\k^\mu_k$.
The name of a single-agent logic is a combination of the constraints that apply to its frames, including $K$, if the constraints are among $4$ and $5$.
Therefore, 
logic $\d$ is $\k + D$, $\t$ is $\k+T$, $\b$ is $\k+B$, $\kf = \k + 4$, 
$\df = \k + D + 4 = \d + 4$, 
and so on.
%$\sr = \k + T +4 = \t + 4 = \kf +T$, $\text{\logicize{KD45}} = \df+5$, and $\sv = \sr + 5$.
We use the special names $\sr$ for $\logicize{T4}$ and $\sv$ for $\logicize{T45}$.
We define a (multi-agent) logic $\LG$ on $\act$ as a map from agents to single-agent logics.
$\LG$ is interpreted on Kripke models of the form $(\proc,\tran,V)$, where for every $\al\in\act$, $(\proc,\trana)$ is a frame for $\LG(\al)$.

%When $k = 1$, we usually omit it from the subscript of a logic's name.
For a logic $\LG$, 
%being one of the logics above, 
$\LG^\mu$ is the logic that results from $\LG$ after we allow recursive operators in the syntax --- in case they were not allowed in \LG.
Furthermore, if for every $\al \in \act$, $\LG(\al)$ is the same single-agent logic ${\LG}$, we write $\LG$ as ${\LG}_k$, where $|\act|=k$.
Therefore, the $\mu$-calculus is $\k^\mu_k$.

From now on, unless we explicitly say otherwise, by a logic, we mean one of the logics we 
have 
defined above.
%
%%For a logic $\LG$ and finite set of variables $P$, $L(P)$ is the set of formulas that $\LG$ allows, and use only variables from $P$.
%For a logic $\LG$ and formulas $\varphi,\psi$, we say that $\psi$ is a consequence of $\varphi$ in $\LG$ and write $\varphi \models_\LG \psi$ when for every state $s$ of every LTS $\M$ for $\LG$, $\M,s \models \varphi$ implies $\M,s \models \psi$.
%
We call a formula satisfiable  for a logic $\LG$, if it is satisfied in some  state of a model  for $\LG$. 

\begin{example}
	For a formula $\varphi$, we  define $Inv(\varphi)= \mx X.(\varphi \land [\act]X)$. $Inv(\varphi)$ asserts that $\varphi$ is true in \emph{all} reachable states, or, alternatively, it can be read as an assertion that $\varphi$ is common knowledge.
	We  dually define 
	$Eve(\varphi)= \mn X.(\varphi \lor \diam{\act}X)$, which asserts that $\varphi$ is true in \emph{some} reachable state. 
\end{example}

\subsection{Known Results}
\label{sec:known-results}
For logic $\LG$, the satisfiability problem for $\LG$, or $\LG$-satisfiability is the problem that asks, given a formula $\varphi$, if $\varphi$ is satisfiable. Similarly, the model checking problem for $\LG$ asks if $\varphi$ is true at a given state of a given \change{finite} model. 
%\emph{the completeness problem for $\LG$} asks if $\varphi$ is complete for $\LG$.

%The classical complexity results for \ML\ are due to
 Ladner \cite{ladnermodcomp} established the classical result of \PSPACE-completeness for the satisfiability of \k, \t, \d, \kf, \df, and \sr\ and \NP-completeness for the satisfiability of \sv.  Halpern and R\^{e}go later characterized the \NP--\PSPACE\ gap  for one-action logics by the presence or absence of Negative Introspection \cite{Halpern2007Characterizing}, resulting in Theorem \ref{thm:ladhalp}. Later, Rybakov and Shkatov \cite{RybakovShkatovBComp} proved the \PSPACE-completeness of \b\ and \t\b.
 For formulas with fixed-point operators, 
 D'Agostino and Lenzi in \cite{Dagostino2013S5} show that satisfiability for single-agent logics with constraint $5$ is also \NP-complete.

\begin{theorem}[\cite{ladnermodcomp,Halpern2007Characterizing,RybakovShkatovBComp}]\label{thm:ladhalp}
	If $\LG\in \{\k, \t, \d, \b, \t\b, \kf, \df, \sr\}$, then 
	%	$\LG$-satisfiability is \PSPACE-complete and $\LG+5$-satisfiability is \NP-complete. Equivalently, 
	$\LG$-satisfiability is \PSPACE-complete; and $\LG+5$-satisfiability and $(\LG+5)^\mu$-satisfiability is \NP-complete.
\end{theorem}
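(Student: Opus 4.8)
This theorem collects results of \cite{ladnermodcomp,Halpern2007Characterizing,RybakovShkatovBComp}; I sketch how one would reprove it. For the \PSPACE\ upper bounds, the plan is to give, for each $\LG\in\{\k,\t,\d,\kf,\df,\sr\}$, a Ladner-style polynomial-space search over Hintikka-style tableaux: one keeps a single branch of an AND/OR tree whose nodes are saturated sets of subformulas of the input, expands $\land$ and $\lor$ locally, and for each demand $\diam{\al}\psi$ recursively tests the successor set obtained by collecting the $[\al]$-formulas of the current node together with $\psi$ (adding the current set itself for $T$, propagating $[\al]$-formulas for $4$, and forcing a successor for $D$). The recursion depth is bounded by the modal depth $md(\varphi)\le|\varphi|$ and each node needs only polynomial space, which yields the bound. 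For $\b$ and $\t\b$ symmetry destroys the tree-model property, so here I would instead use Massacci-style prefixed tableaux (essentially the system this paper reuses in Section~\ref{sec:multi}), where prefixes record the path, the symmetry rule permits travelling back along an edge, and termination --- hence the space bound obtained by reusing memory down a branch --- follows from the loop-checking / bounded-length analysis of \cite{Massacci1994,RybakovShkatovBComp}.

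For \PSPACE-hardness it is enough to treat the weakest logic $\k$ and check robustness of the reduction. I would use Ladner's encoding of the validity problem for quantified Boolean formulas: from $Q_1x_1\cdots Q_nx_n\,\chi$ build a formula forcing every model to contain a binary tree of depth $n$ in which level $i$ branches on the value of $x_i$ at a universal quantifier and is existential otherwise, with boxes propagating already-chosen literals downward, diamonds realising the choices, and a leaf evaluating $\chi$. Since the construction only asserts existence of successors and persistence of literals --- and one can tag states with their level via fresh propositions and guard the modalities accordingly --- it stays correct over serial, reflexive, and transitive frames, covering all eight logics. \NP-hardness for every logic with $5$ is immediate, as each conservatively extends propositional logic.

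The delicate part is the \NP\ upper bound for $\LG+5$ and $(\LG+5)^\mu$. The engine is a small-model property for euclidean frames: if $\varphi$ is satisfied at $s$ in a model for $\LG+5$, it is satisfied at the root of a pointed model whose $\al$-reachable part, for each agent, is $s$ together with one cluster (an $\trana$-equivalence class --- with $s$ inside it when $T$ or $B$ holds, and otherwise reached by a single edge from $s$), seriality from $D$ merely forcing that cluster non-empty. Filtering the cluster by the subformulas of $\varphi$ (merging states with the same $\sub(\varphi)$-type) keeps it of size at most $|\sub(\varphi)|$, giving a model polynomial in $|\varphi|$; so for fixed-point-free $\LG+5$ one gets \NP\ membership by guessing such a model and model-checking $\varphi$ in polynomial time. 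For $(\LG+5)^\mu$ I would show the small-model property \emph{survives recursion}: inside a model of this shape the values of $\mn X.\psi$ and $\mx X.\psi$ are extremal fixed points of a monotone operator on the $2^{|W|}$-element lattice of state sets and therefore stabilise in $\le|W|$ iterations, so $\trueset{\varphi}_\M$ is computable in time polynomial in $|W|\cdot|\varphi|$, while a filtration by $\subc(\varphi)$ closed under fixed-point unfolding (as for the $\mu$-calculus) again keeps $|W|$ polynomial; this is essentially the argument of \cite{Dagostino2013S5}. The main obstacle is precisely this last point: arranging a filtration that simultaneously respects the euclidean constraint (so the collapsed structure is still a model for $\LG+5$) and the fixed-point semantics --- routine for plain modal logic, but for $\mu$ one must verify that filtration preserves the fixed-point values, or at least satisfiability, and this is where the rigidity of euclidean frames does the real work.
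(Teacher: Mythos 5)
The paper does not actually prove this theorem: it is a background result imported wholesale from the cited works of Ladner, Halpern--R\^{e}go, and Rybakov--Shkatov (see also Remark~\ref{remark:Halpern-and-Moses-should-have-proved-more}, which only notes that the cited methods extend to the remaining cases). Measured against those sources, your sketch is on the right track for \k, \t, \d, \kf, \df, and \sr: the Ladner-style polynomial-space tableau search for the upper bounds, the QBF encoding for hardness, and the single-final-cluster structure of euclidean models for the \NP\ bounds are exactly the standard arguments.

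The genuine gap is your lower bound for \b\ and \t\b. You assert that the QBF reduction ``stays correct over serial, reflexive, and transitive frames, covering all eight logics'', but \b\ and \t\b\ are logics of \emph{symmetric} frames, which your robustness check never addresses: in a symmetric model every edge of the intended binary tree can be traversed backwards, so the boxes that propagate chosen literals to the children also impose constraints on the parent, and diamonds can be ``realised'' by returning to an ancestor instead of creating a fresh successor. Level-tagging alone does not dispose of this; one must re-analyse which formulas survive the round trip, and this is precisely why the \PSPACE-completeness of \b\ and \t\b\ is credited to the separate and much later work of Rybakov and Shkatov rather than obtained as a corollary of Ladner's argument for \k, \t, and \sr. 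A second, smaller soft spot is the \NP\ upper bound for $(\LG+5)^\mu$: you correctly isolate the real difficulty (a filtration that simultaneously preserves euclideanness and the fixed-point semantics) but leave it unresolved, whereas the cited proof of D'Agostino and Lenzi resolves it by showing that over euclidean frames the fixed points stabilise within the final cluster after a bounded number of unfoldings, so every formula of $(\LG+5)^\mu$ collapses to a recursion-free one and the \NP\ bound is inherited from $\LG+5$. Your ``guess a polynomial-size model and model-check'' plan is sound only once that collapse, or an equivalent polynomial small-model theorem for the fixed-point language, has actually been established.
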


\begin{theorem}[\cite{Halpern1992}]\label{thm:halpKkcomp}
	If $k>1$ and $\LG$ has a combination of constraints from $D, T, 4, 5$ and no recursive operators, then 
	$\LG_k$-satisfiability is \PSPACE-complete.
\end{theorem}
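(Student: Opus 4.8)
The plan is to prove both inclusions — membership in \PSPACE\ and \PSPACE-hardness — following the classical argument of Halpern and Moses \cite{Halpern1992}, which adapts Ladner's technique for single-agent \ML\ to the multi-agent setting.

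For the upper bound I would first establish a \emph{bounded model property} suited to several agents: if $\varphi$ is $\LG_k$-satisfiable, then it is satisfiable in a model that is a \emph{tree of clusters}. Starting from any model of $\varphi$, take the submodel generated by the satisfying state and unravel it into a tree whose nodes are clusters — maximal groups of worlds mutually related by an agent's relation, which are needed only for agents whose logic includes $4$ or $5$ and which otherwise collapse to single worlds. One then argues that (i) the depth of the tree may be taken to be at most $md(\varphi)\le|\varphi|$, since a demand $\diam{\al}\psi$ occurring at depth $i$ only needs a witness at depth $i{+}1$, and a $\psi$ with $md(\psi)=0$ needs nothing below it; (ii) each cluster may be taken to contain at most $|\varphi|$ worlds, one per diamond-subformula that must be witnessed inside it, all of them satisfying the common $\Box$-closure dictated by the boxes true there; and (iii) each cluster has at most $|\varphi|$ child clusters per agent. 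The only real departure from the plain $\k_k,\d_k,\t_k$ cases is to check that, assembling the tree level by level, the transitivity and euclideanness constraints can be maintained by closing each new cluster back into its ancestor cluster along the same agent, and — for $5$ — by making sibling clusters under the same agent see one another. Once this normal form is available, \PSPACE\ membership follows from a depth-first search: guess the root cluster (a family of at most $|\varphi|$ propositionally and modally consistent atoms over $\subc(\varphi)$, with $\varphi$ in one of them and the $\Box$-constraints met), then for each agent $\al$ and each pending diamond-demand recurse into the corresponding child cluster, reusing the stack frame once a subtree has been verified. The recursion depth is at most $md(\varphi)$ and each frame stores $O(|\varphi|)$ bits, so the search runs in polynomial space. (In the spirit of this paper, the same bound can alternatively be derived from a satisfiability-preserving translation into $\k_k$ together with the known \PSPACE\ bound for $\k_k$.)

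For the lower bound I split on whether $\LG$ contains the axiom $5$. If it does not — so $\LG\in\{\k,\d,\t,\kf,\df,\sr\}$ — then \PSPACE-hardness is inherited from the single-agent case (Theorem \ref{thm:ladhalp}): a single-agent formula over agent $\al_1$ is $\LG$-satisfiable iff it is $\LG_k$-satisfiable, since any single-agent model is padded to a $k$-agent one by interpreting every other agent's relation as the universal relation $\proc\times\proc$, which satisfies all of $D,T,4,5$, and the truth value of $\varphi$ is unaffected. When $\LG$ does contain $5$ the single-agent problem is only \NP-complete, so a genuine argument is required: with $k\ge 2$ agents, alternating modalities $[\al_1][\al_2][\al_1]\cdots$ cannot collapse the way a single \sv-modality does, so arbitrary effective modal depth is recovered. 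Following Halpern and Moses, I would give a polynomial-time reduction from a \PSPACE-complete problem — quantified Boolean formula validity, or, more uniformly, $\k_1$-satisfiability — encoding the tree unravelling of a $\k$-model inside two \sv-equivalence classes that alternate between $\al_1$ and $\al_2$, with a fresh propositional variable marking the parity of the level so that the tree structure survives inside each class.

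I expect the \PSPACE\ upper bound for the transitive and euclidean logics to be the main obstacle: the tree-of-clusters normal form has to be set up precisely, including the polynomial bound on cluster size and the book-keeping that re-establishes transitivity (and, for $5$, euclideanness between sibling clusters) every time a level is added — without which the depth-first search is unsound. For the genuinely multi-agent lower bound with axiom $5$, the obstacle is designing the reduction gadget so that the two equivalence relations faithfully simulate an unbounded-depth modal tree while staying polynomial in size.
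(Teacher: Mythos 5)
The paper does not actually prove this theorem: it is imported verbatim from Halpern and Moses \cite{Halpern1992}, with Remark~\ref{remark:Halpern-and-Moses-should-have-proved-more} noting only that their techniques extend from the cases treated explicitly there ($\k_k,\t_k,\sr_k,\logicize{KD45}_k,\sv_k$) to the remaining combinations of $D,T,4,5$. Your proposal is a reconstruction of that classical argument --- tree-of-clusters normal form plus depth-first search for membership, padding with the universal relation for hardness without $5$, and a two-agent simulation of $\k$ for hardness with $5$ --- so in spirit it matches the source the paper leans on, and the entire hardness half is correct as stated: the universal relation satisfies all of $D,T,B,4,5$ and leaves a formula mentioning only $\al_1$ unaffected, and the alternating-equivalence-class encoding is exactly how Halpern and Moses recover \PSPACE-hardness for $k\ge 2$ agents with axiom $5$.

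The one genuine gap is step (i) of your upper bound for the logics containing $4$ (and $5$). You bound the tree depth by $md(\varphi)$ on the grounds that a demand $\diam{\al}\psi$ at depth $i$ only needs a witness at depth $i+1$ and that modal depth decreases along a branch. Under transitivity it does not decrease: if $[\al]\chi$ holds at a world, then $[\al]\chi$ --- of the same modal depth --- must hold at every $\al$-successor, so witnesses keep regenerating diamond demands of undiminished depth and the naive recursion has no termination guarantee. The standard repair (Ladner's, and the one used in \cite{Halpern1992}) is different in kind: along any chain of distinct clusters for a transitive agent, the set of true subformulas of the form $[\al]\chi$ is monotone nondecreasing and must strictly increase from one cluster to the next, which bounds the chain length by the number of box-subformulas, hence by $|\varphi|$; termination of the depth-first search is argued from this monotonicity, not from modal depth. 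You do flag the transitive case as the expected obstacle, but the depth bound you actually assert is precisely the claim that fails there, and without the monotone-box-set argument the \PSPACE\ algorithm is not obviously sound. With that substitution the proof goes through along the lines you describe.
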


\begin{remark}\label{remark:Halpern-and-Moses-should-have-proved-more}
	Note that Halpern and Moses in \cite{Halpern1992} prove these bounds for the cases of $\k_k, \t_k, \sr_k, \logicize{KD45}_k,$ and $\sv_k$ only.
	Similarly, D'Agostino and Lenzi in \cite{Dagostino2013S5} only prove the \NP-completeness of satisfiability for $\sv^\mu$. However, it is not hard to see that their respective methods also work for the rest of the logics of Theorems \ref{thm:ladhalp} and \ref{thm:halpKkcomp}.
	\qedd
\end{remark}

%The following theorems give the complexity of satisfiability for the $\mu$-calculus and its fragments.

\begin{theorem}[\cite{Kozen1983a}]\label{prp:muCalc-sat}
	The satisfiability problem for the $\mu$-calculus 
	%	and \maxml\ are \PSPACE-hard and 
	is \EXP-complete.%, respectively.
\end{theorem}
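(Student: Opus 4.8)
The statement has two independent halves --- membership in \EXP\ and \EXP-hardness --- and only the first is genuinely hard. For the lower bound the plan is a polynomial-time reduction from a known \EXP-complete problem. The cleanest self-contained choice is the acceptance problem for alternating Turing machines running in linear space, which is \EXP-complete by the Chandra--Kozen--Stockmeyer equality of alternating polynomial space with deterministic exponential time: given such a machine and an input, I would write a single closed formula whose models are exactly the prunings of accepting computation trees, encoding one configuration per block of states, using $\diam{a}$ for existential branching and $\stbox{a}$ for universal branching, wrapping the local legality conditions of the transition relation inside a greatest fixed point of the shape $\mx X.(\psi \land \stbox{a}X)$ and the ``reaches an accepting configuration'' requirement inside a least fixed point. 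Alternatively, one observes that the $\mu$-calculus subsumes logics such as CTL and PDL whose satisfiability is already \EXP-hard --- their eventualities and invariants being exactly the alternation-free $\mn$/$\mx$ formulas --- and transfers their hardness directly. I would only sketch this part.

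The real work is the \EXP\ upper bound, which I would obtain by the automata-theoretic route. First, establish the \emph{tree model property}: a closed $\varphi$ is satisfiable iff it holds at the root of a tree model whose branching degree is at most the number of $\diam{a}$-subformulas of $\varphi$; this follows by unravelling an arbitrary model of $\varphi$ and then, at each state, keeping only the successors actually needed to witness diamonds, which leaves the semantics of Table~\ref{table:semantics} unchanged by a straightforward induction on $\subc(\varphi)$. Second, from $\varphi$ build an alternating parity tree automaton $\mathcal{A}_\varphi$ whose states are essentially $\sub(\varphi)$, whose transition function performs Boolean saturation and sends diamond and box states along tree edges, and whose priority of a state is determined by the $\leq$-order on fixed-point variables so that a branch is accepting precisely when the outermost fixed point unfolded infinitely often along it is a $\mx$ one. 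Then $\mathcal{A}_\varphi$ accepts exactly the tree models of $\varphi$, it has $O(|\varphi|)$ states, and its number of priorities is bounded by the nesting depth of fixed-point operators, hence by $|\varphi|$. Third, decide nonemptiness of $\mathcal{A}_\varphi$: either remove alternation via the simulation theorem, paying one exponential in the number of states while keeping the number of priorities polynomial, and then test nonemptiness of the resulting nondeterministic parity tree automaton; or phrase nonemptiness directly as a parity game on the powerset arena. In either case one obtains a parity game with exponentially many positions and polynomially many priorities. Fourth, invoke the fact that parity games are solvable in time polynomial in the number of positions for a bounded number of priorities, which places the whole procedure in \EXP.

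The main obstacle is the combination of the correctness of step two with the accounting in step three: one must prove that the parity index extracted from the nesting of $\mn$ and $\mx$ really characterises accepting runs, i.e. that $\mathcal{A}_\varphi$ is sound and complete with respect to the fixed-point semantics, and then verify that the single-exponential blow-up together with polynomially many priorities indeed yields an \EXP\ bound and not a worse one. A more self-contained alternative that sidesteps the explicit automata is the pre-model/tableau approach, which is in effect what the tableau system of Theorem~\ref{thm:tableaux} later specialises: build the graph whose nodes are locally consistent subsets of $\subc(\varphi)$ --- exponentially many --- equip it with a parity-style winning condition that forbids the infinite regeneration of least fixed points, and show that $\varphi$ is satisfiable iff the existential player wins the induced parity game; solving that game again lands in \EXP. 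Reusing that machinery would be the most economical way to close the proof.
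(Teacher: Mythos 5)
The paper never proves this theorem: it is quoted as known background, with a citation to Kozen's 1983 paper (historically the \EXP\ upper bound is due to the later automata-theoretic work of Streett--Emerson and Emerson--Jutla, and the hardness is inherited from Fischer and Ladner's \EXP-hardness of PDL, which the authors themselves invoke in the proof sketch of Theorem~\ref{prp:fragmantsMu}). So there is no in-paper argument to compare yours against line by line; the only fair comparison is with the literature the citation points to, and there your outline is precisely the standard modern proof. Both halves are sound: the lower bound by reduction from linear-space alternating Turing machine acceptance (or by embedding PDL/CTL eventualities as alternation-free $\mn$/$\mx$ formulas), and the upper bound via the bounded-branching tree model property, the alternating parity tree automaton $\mathcal{A}_\varphi$ with $O(|\varphi|)$ states and priorities read off the $\leq$-order on fixed-point variables, the simulation theorem, and the $n^{O(d)}$ parity-game algorithm, whose exponential-in-states, polynomial-in-priorities accounting does land in \EXP\ as you claim. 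The two points you flag as the real work --- proving that the parity acceptance condition is sound and complete for the fixed-point semantics, and checking that the blow-ups compose to a single exponential --- are indeed where all the difficulty sits; the first is the heart of Emerson and Jutla's proof and is far from a routine verification, so a complete write-up would be a paper-length undertaking rather than a lemma. Your closing suggestion to route the upper bound through locally consistent subsets of $\subc(\varphi)$ and a regeneration condition on least fixed points is essentially the pre-model/tableau method that Section~\ref{sec:multi} of the paper adapts for other purposes, so it would be the most economical self-contained alternative; but within this paper the intended treatment of the theorem is simply to cite it.
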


%It is known that satisfiability for the min-fragment of the $\mu$-calculus (on one or more action) is \EXP-complete. It is in \EXP\ due to Theorem \ref{prp:muCalc-sat}, 
%%as so is the satisfiability problem of the $\mu$-calculus  \cite{Kozen1983a}. Furthermore, 
%and 
%this fragment  suffices \cite{Pratt_1981} to describe the PDL formula that is constructed by the reduction used in \cite{fischer1979propositional} to prove \EXP-hardness for PDL, therefore the reduction can be adjusted to prove that the min-fragment of the $\mu$-calculus  is \EXP-complete.
%%We can similarly alter the PDL formula of the reduction in \cite{fischer1979propositional} so that it can be described by 
%%the min-fragment of the $\mu$-calculus. 
%Therefore, validity for the min- and max-fragments of the $\mu$-calculus (on one or more action) is \EXP-complete. To see that this lower bound transfers to \minml\ and \maxml, it suffices to use one extra action to represent propositional variables  
%(for example, variable $x_i$ can be replaced by $\diam{\al}^i\true$).

\begin{theorem}[\cite{emerson2001model}]\label{thm:mu-calc-MC}
	The model checking problem for the $\mu$-calculus is in $\NP \cap \coNP$.\footnote{In fact, the problem is known to be in $\UP \cap \coUP$ \cite{jurdzinski1998deciding}.}
\end{theorem}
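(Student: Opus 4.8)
The plan is to recast $\mu$-calculus model checking as the problem of deciding the winner of a parity game, and then to exploit the positional determinacy of parity games.

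First I would associate to a finite model $\M = (\proc, \tran, V)$, a state $s_0$, and a closed formula $\varphi$ (which, by our grammar, is already in positive normal form) a finite two-player \emph{evaluation game} $G(\M, s_0, \varphi)$. Its positions are pairs $(s, \psi)$ with $s \in \proc$ and $\psi \in \sub(\varphi)$. Verifier owns positions of the form $(s, \psi_1 \lor \psi_2)$ and $(s, \diam{\al}\psi)$ and chooses, respectively, a disjunct or an $\al$-successor of $s$; Refuter owns $(s, \psi_1 \land \psi_2)$ and $(s, [\al]\psi)$ and chooses a conjunct or an $\al$-successor; positions $(s, \mn X.\psi)$, $(s, \mx X.\psi)$ and $(s, X)$ have a unique move to $(s, \psi)$ where $\fix(X)$ has body $\psi$, and it is these moves that create the cycles witnessing repeated unfolding; positions $(s, p)$, $(s, \neg p)$, $(s, \true)$, $(s, \false)$, together with modal positions that offer no move, are terminal and declared winning for the appropriate player according to $V(s)$. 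To each position $(s, X)$ I assign a priority determined by the rank of $X$ in the $\leq$-order on fixed-point variables, arranged so that greatest-fixed-point variables get even priorities, least-fixed-point variables odd priorities, and $\leq$-larger variables dominate; every other position gets a neutral (say, maximal) priority. Verifier wins an infinite play iff the least priority seen infinitely often is even. The arena has size $O(|\varphi|\cdot|\proc|)$ and $O(|\varphi|)$ priorities, and is computable in polynomial time.

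Next I would prove the correctness of the construction: $\M, s_0 \models \varphi$ iff Verifier has a winning strategy in $G(\M, s_0, \varphi)$ from $(s_0,\varphi)$. I expect this to be the main obstacle. The standard route is via \emph{signatures}: one tags each approximant of each least-fixed-point subformula with an ordinal recording how often it has been unfolded, and shows by transfinite induction on the Knaster--Tarski approximation that from a position whose formula holds, Verifier can force the play either to a favourable terminal position or into a cycle whose dominating variable is a greatest-fixed-point variable, while strictly decreasing the signature along every $\mn$-unfolding; a symmetric argument handles Refuter. This is precisely the analysis of the model-checking game, and it is delicate because of the interaction of nested alternating fixed points.

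Finally I would invoke memoryless determinacy of parity games: from every position exactly one player has a winning strategy, and it may be taken positional. For the $\NP$ bound, guess a positional Verifier strategy $\sigma$ (polynomially many bits), restrict the arena to the moves allowed by $\sigma$, and verify in polynomial time that from $(s_0,\varphi)$ the resulting one-player graph has no reachable Refuter-winning terminal and no reachable cycle whose minimal priority is odd --- decidable by repeatedly removing the globally least priority and inspecting strongly connected components. By determinacy the complement problem is ``Refuter wins'', and the same guess-and-check argument with the roles of the players swapped places it in $\NP$ as well; hence model checking is in $\NP \cap \coNP$. For the sharper $\UP \cap \coUP$ bound of the footnote one replaces positional strategies by small progress measures, whose \emph{least} instance is unique and polynomially verifiable, making the accepting computation unique.
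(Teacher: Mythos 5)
The paper does not prove this statement: Theorem~\ref{thm:mu-calc-MC} is imported verbatim from the literature, with \cite{emerson2001model} cited for the $\NP\cap\coNP$ bound and \cite{jurdzinski1998deciding} for the sharper $\UP\cap\coUP$ bound in the footnote. What you have written is essentially a reconstruction of the cited Emerson--Jutla--Sistla argument, and it is correct in outline: the polynomial-size model-checking parity game over positions $(s,\psi)$ with priorities governed by the $\leq$-order on fixed-point variables, the signature/approximant argument for the equivalence between $\M,s_0\models\varphi$ and Verifier winning, and the guess-and-check of a positional strategy (whose induced one-player game is solvable in polynomial time by priority stripping and SCC analysis), with determinacy supplying the $\coNP$ half. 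You correctly flag the correctness lemma as the delicate step; in a full write-up that transfinite induction on approximants, interleaved with the nesting of alternating fixed points, is where all the work lives. One caveat on your last sentence: verifying that a candidate progress-measure assignment is the \emph{least} one is not obviously a polynomial-time check (being a fixed point of the lifting operator does not certify leastness), and Jurdzi\'nski's actual $\UP\cap\coUP$ proof instead routes through mean-payoff and discounted-payoff games, where the value vector is the \emph{unique} solution of a polynomially verifiable system of equations. Since that concerns only the footnote, it does not affect the $\NP\cap\coNP$ claim itself.
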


Finally we have the following initial known results about the complexity of satisfiability, when we have recursive operators. Theorems \ref{prp:fragmantsMu} and \ref{prp:EXP-hard-k2} have already been observed in \cite{AcetoAFI20}. 

\begin{theorem}\label{prp:fragmantsMu}
	The satisfiability problem for the min- and max-fragments of the $\mu$-calculus is \EXP-complete, even 
	when $|\act| = 1$. 
%	and $P = \emptyset$.
%	on one action 
%%	and for \muhml, 
%%	and even 
%%	or
%	and 
%	without propositional variables. 
%%	or without,
%%	.
\end{theorem}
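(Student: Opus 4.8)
The plan is to prove the two bounds separately. The \EXP\ upper bound is immediate: the min- and max-fragments are sublogics of the full $\mu$-calculus, so satisfiability for them reduces trivially to $\mu$-calculus satisfiability, which is in \EXP\ by Theorem \ref{prp:muCalc-sat}. The substance of the statement is therefore the \EXP-hardness, and the plan is to obtain it by a reduction from $\mu$-calculus satisfiability (which is \EXP-hard by Theorem \ref{prp:muCalc-sat}), showing that every $\mu$-calculus formula can be effectively translated, in polynomial time, into an equisatisfiable formula using only least fixed points (and, symmetrically, only greatest fixed points), even over a single agent.

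First I would handle the single-agent restriction. A standard encoding lets one simulate $k$ agents with a single modality by using fresh propositional variables $q_1,\dots,q_k$ to ``colour'' the outgoing edges, replacing $\diam{\al_i}\psi$ by $\Diamond(q_i \land \psi')$ and $[\al_i]\psi$ by $\Box(q_i \impl \psi')$, and conjoining a top-level invariant $Inv$ asserting that the colouring is functional on edges and that at least one colour is present; since this invariant is itself expressible with a single greatest fixed point $\mx X.(\cdots \land \Box X)$ — or, dually, its negation with a least fixed point — this step is compatible with the fragment restriction and costs only a polynomial blow-up. So it suffices to eliminate one kind of fixed point from a single-agent $\mu$-calculus formula.

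The core step is fixed-point elimination. The idea is that, by the duality $\mx X.\varphi \equiv \neg\mn X.\neg\varphi[\neg X/X]$, a formula in the full $\mu$-calculus can be manipulated so that all alternation is pushed to the top and the ``inner'' fixed points become uniform in type; more concretely, to prove \EXP-hardness for the \emph{min}-fragment it is enough to show that the satisfiability problem remains \EXP-hard when restricted to formulas all of whose fixed-point operators are $\mn X$, which one obtains by a careful rewriting that replaces each $\mx X.\psi$ by a least-fixed-point formula over an enriched signature — introducing a fresh proposition that is forced, via a global least-fixed-point ``eventually'' constraint of the form $Eve(\cdots)$, to mark exactly the states where the greatest-fixed-point obligation would fail, and then negating. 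Symmetrically for the max-fragment using $Inv$. I would verify correctness by the usual argument: given a model of the original formula one reads off a valuation of the fresh propositions making the translated formula true at the same state, and conversely any model of the translation restricts to a model of the original.

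The main obstacle I expect is the fixed-point elimination step: one must be careful that the rewriting preserves the well-formedness conditions on fixed-point variables (each $X$ occurring positively, each in a unique $\fix(X)$), that nested alternating fixed points are handled so that no $\mx X$ survives inside the scope of the freshly introduced $\mn$-operators, and that the whole construction stays polynomial. It is plausible that the cleanest route actually avoids a direct reduction from the $\mu$-calculus and instead re-proves \EXP-hardness from scratch — e.g.\ by encoding the halting problem of an alternating polynomial-space Turing machine, or the acceptance problem of a linearly-bounded alternating machine, directly into the min-fragment, using a single least fixed point to ``unfold time'' and propositional variables to encode configurations — thereby sidestepping the alternation-removal issue entirely; I would pursue whichever of these two routes yields the shorter correctness proof.
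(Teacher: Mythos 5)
Your upper bound is exactly the paper's (a fragment of an \EXP-decidable logic is in \EXP), but your primary route to the lower bound has a genuine gap. You propose a polynomial-time, equisatisfiability-preserving elimination of greatest fixed points, with the $\mx$-obligations re-encoded by fresh propositions constrained through a global least-fixed-point formula of the form $Eve(\cdots)$. The standard marking trick for a greatest fixed point $\mx X.\psi$ is to introduce a fresh $q$ and assert that $q$ denotes a post-fixed point \emph{everywhere in the reachable part of the model}, i.e.\ to assert $Inv(q \impl \psi[q/X])$ --- but $Inv$ is itself a greatest fixed point, so this is unavailable in the min-fragment, which is precisely the fragment you are targeting. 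Asserting $Eve(\cdots)$ positively only constrains \emph{some} reachable state and cannot impose an invariant; asserting the negation of an $Eve$-formula is again a $\nu$. The same problem breaks your single-agent encoding for the min-fragment, since the edge-colouring discipline must also be imposed as a global invariant. So the ``fixed-point elimination'' step, as sketched, does not go through, and no simple repair is in sight: the min-fragment simply cannot express global safety constraints, which is why one cannot expect a direct, syntactic reduction from full $\mu$-calculus satisfiability of this marking-based kind.

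Your fallback, however, is essentially what the paper does, and it is the right move: rather than reducing from $\mu$-calculus satisfiability, re-derive \EXP-hardness from the generic machine encoding. Concretely, the paper observes that the formula produced by the Fischer--Ladner \EXP-hardness reduction for PDL \cite{fischer1979propositional} uses only modalities expressible in the min- and max-fragments (via $\diam{a^*}\varphi = \mn X.(\varphi \lor \diam{a}X)$ and its dual, following \cite{Pratt_1981}), so that reduction can be read directly as an \EXP-hardness proof for each fragment, already over a single agent. If you pursue your own alternating-machine encoding instead, the point to check is that for the min-fragment you phrase the hardness via unsatisfiability/validity of the dual formula (or reduce from the complement), since the global consistency conditions of a computation-tree encoding naturally live in the max-fragment.
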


\begin{proof}[Proof sketch]
	It is known that satisfiability for the min- and max-fragments of the $\mu$-calculus (on one or more action) is \EXP-complete. It is in \EXP\ due to Theorem \ref{prp:muCalc-sat}, 
	%as so is the satisfiability problem of the $\mu$-calculus  \cite{Kozen1983a}. Furthermore, 
	and 
	these fragments  suffice \cite{Pratt_1981} to describe the PDL formula that is constructed by the reduction used in \cite{fischer1979propositional} to prove \EXP-hardness for PDL. Therefore, that reduction can be adjusted to prove that 
	satisfiability for the min- and max-fragments of the $\mu$-calculus  is \EXP-complete.
\end{proof}

It is not hard to express in logics with both frame constraints and recursion operators that formula $\varphi$ is common knowledge, with formula $\mx X.\varphi \land [\act]X$. Since validity for $\LG_k$ with common knowledge (and without recursive operators) and $k > 1$ is \EXP-complete \cite{Halpern1992}\footnote{Similarly to Remark \ref{remark:Halpern-and-Moses-should-have-proved-more}, \cite{Halpern1992} does not explicitly cover all these cases, but the techniques can be adjusted.}, $\LG_k^\mu$ 
\change{is}
%must be 
$\EXP$-hard.

\begin{proposition}\label{prp:EXP-hard-k2}
	Satisfiability for $\LG_k^\mu$, where $k>1$, is \EXP-hard.
\end{proposition}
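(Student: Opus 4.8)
The plan is to reduce $\LG_k$-satisfiability-with-common-knowledge to $\LG_k^\mu$-satisfiability, since the former is $\EXP$-complete for $k>1$ by the discussion preceding the proposition (based on \cite{Halpern1992}). First I would recall that a formula of the language of $\LG_k$ extended with a common-knowledge operator $C$ (over the full set of agents $\act$) is built from modal formulas together with subformulas $C\psi$. The key observation, already noted in the excerpt, is that over any Kripke model the common-knowledge operator $C\psi$ is semantically equivalent to $\mx X.(\psi \land [\act]X) = Inv(\psi)$: this greatest fixed point picks out exactly the states from which every $\act$-reachable state (reflexive-transitively) satisfies $\psi$. Hence the map sending $C\psi$ to $Inv(\psi^\star)$, applied recursively from the inside out (where $\psi^\star$ is the translation of $\psi$), is a polynomial-time computable translation from common-knowledge formulas to formulas of $\LG_k^\mu$.

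Next I would argue correctness: for any model $\M$ (which is a model for $\LG_k$ iff each agent relation satisfies the frame constraints of $\LG(\al)$ — a property that does not depend on whether we read $C$ natively or as $Inv$), a straightforward induction on formula structure shows $\M,s\models \varphi$ iff $\M,s\models \varphi^\star$, using the semantic equivalence of $C\psi$ and $Inv(\psi)$ at the induction step for common knowledge and the fact that $\trueset{Inv(\chi)}_\M$ is the standard unfolding $\{s \mid$ every finite $\act$-path from $s$ stays in $\trueset{\chi}_\M\}$. Consequently $\varphi$ is $\LG_k$-satisfiable-with-common-knowledge iff $\varphi^\star$ is $\LG_k^\mu$-satisfiable. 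Since the translation is polynomial (it only replaces each $C$ by a fixed two-line $\mu$-formula template and introduces fresh fixed-point variables), and since $\LG_k$-validity with common knowledge is $\EXP$-complete for $k>1$, its complement $\LG_k$-satisfiability with common knowledge is also $\EXP$-complete (as $\EXP$ is closed under complement), and therefore $\LG_k^\mu$-satisfiability is $\EXP$-hard.

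I would also make explicit which logics this covers. The cited bound of \cite{Halpern1992} is stated for a combination of constraints among $D,T,4,5$; by Remark \ref{remark:Halpern-and-Moses-should-have-proved-more} the argument extends to the full list, and adding $B$ does not cause difficulty since the $\EXP$-hardness proof for common knowledge is robust to such frame restrictions. So the reduction yields $\EXP$-hardness for $\LG_k^\mu$ for every logic $\LG$ in our family, as claimed.

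The main obstacle is not the translation itself — that is routine — but ensuring that the $\EXP$-hardness of the common-knowledge base logic genuinely holds for all the frame-constraint combinations we range over, rather than only the handful explicitly treated in \cite{Halpern1992}. I would handle this by appealing to Remark \ref{remark:Halpern-and-Moses-should-have-proved-more} and, if needed, sketching why the lower-bound construction (an alternating-Turing-machine or tiling encoding where common knowledge enforces global consistency across a reachable region) survives the imposition of seriality, reflexivity, symmetry, transitivity, or euclidicity on the individual agent relations, since these constraints can be accommodated in the model built by the reduction without collapsing the reachability structure that the hardness argument relies on.
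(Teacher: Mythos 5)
Your proposal matches the paper's own argument: the paper likewise observes that common knowledge is expressible as $\mx X.(\varphi\land[\act]X)$ and reduces from the \EXP-complete validity/satisfiability problem for $\LG_k$ with common knowledge \cite{Halpern1992}, handling the unlisted frame-condition combinations exactly as you do, via the remark that the Halpern--Moses techniques adapt. Your write-up simply makes explicit the routine details (inside-out substitution, the induction establishing semantic equivalence, and closure of \EXP\ under complement) that the paper leaves implicit.
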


%\section{Satisfiability  in \ML~with Recursion}
%
%\subsection{First Complexity Observations}
%\input{complexity-observ.tex}

\section{Complexity through Translations}
\label{sec:translations}
%!TeX root = draft.tex

In this section, 
we examine 
%the complexity of 
\LG-satisfiability.
We  use formula translations to reduce the satisfiability of one logic to the satisfiability of another.
We investigate the properties of these translations and how they compose with each other, and we achieve complexity bounds for several  logics.
%The case of a single-agent logic with negative introspection is, however, treated differently in Subsection \ref{sec:NI}. There, we replicate the results about fixed-point-free \ML\ of Halpern and R\^{e}go from \cite{Halpern2007Characterizing}, to show that the satisfiability problem for any single-agent logic with negative introspection is in \NP.

%present the concept of a formula translation. 
In the context of this paper,
%a formula translation $f$ is a 
a formula translation from logic $\LG_1$ to logic $\LG_2$ is a mapping $f$ on formulas such that each formula $\varphi$ is $\LG_1$ -satisfiable if and only if $f(\varphi)$ is $\LG_2$ -satisfiable.
%mapping on formulas, such that for logics $\LG_1, \LG_2$, any formula $\varphi$ is $\LG_1$-satisfiable if and only if $f(\varphi)$ is $\LG_2$-satisfiable.
%We use the notation 
%$\transl{\LG_1}{\LG_2}{-}$ 
%for such a translation and we say that $\transl{\LG_1}{\LG_2}{-}$ maps $\LG_1$ to $\LG_2$.
We only consider translations that can be computed in polynomial time, and therefore, our translations are polynomial-time reductions, transfering complexity bounds between logics.
%procedure during witch a formula is rewritten in a specific manner in order to preserve satisfiability, over a different class of frames. We will use the notation $\transl{x}{y}{\varphi}$. Variable $X$ stands for the frame property for which the formula $\varphi$ was originally interpreted on. Variable $Y$ stands for the frame property over which the output formula is interpreted on. 

According to Theorem \ref{prp:muCalc-sat}, $\k^\mu_k$-satisfiability is \EXP-complete, and therefore for each logic $\LG$, we aim to connect $\k^\mu_k$ and $\LG$ via a sequence of translations in either direction, to prove complexity bounds for $\LG$-satisfiability.
%We first consider the single

%Staring from the result described in Theorem \ref{prp:muCalc-sat} we have that on the general $\mu$-Calculus over general frames (logic denoted $\k^\mu$) the satisfiablity problem is \EXP-complete and therefore we use translations in the following manner:
%
%Having a translation for formulas from general frames to frames of property $Y$ (the function $\transl{k}{y}{\varphi}$) implies that the complexity for the satisfiability checking of a formula on those frames has to be \EXP-hard, as otherwise we would also have that the same "faster" algorithmic result would apply for the full $\mu$-Calculus over general frames. 
%
%Conversely, having a translation for frames of property $Y$ to formulas over general frames (the function $\transl{y}{k}{\varphi}$) implies that the complexity of the satisfiability checking of a formula of those frames has to be in \EXP as the known algorithm for the satisfiability checking of the general $\mu$-Calculus is in \EXP. 

\subsection{Translating Towards $\k_k$}
%	Upper Bounds} % for Single-agent Logics}
We begin by presenting translations from logics with more to logics with fewer frame conditions.
To this end, we study how taking the closure of a frame under one condition affects any other frame conditions.
%First, in order to present the complexity implications or the addition of recursion operators, we study simple modal logics. As mentioned above, in that case all of the logics  $\LG\in \{\k, \t, \d, \kf, \df, \sr\}$ and $\LG+5$ with one agent have known completeness results. The missing cases are very few and concern the combination of frame conditions (other than $5$) as well as the multi-agent case. However we take this small gap as an opportunity to present an easy and intuitive introduction to our general translation method. 

\subsubsection{Composing Frame Conditions}
\label{subsec:compositionality}
%!TeX root = draft.tex

%At this point we have already given translations from transitive to general frames and from serial to general frames. 

We now discuss how the conditions for frames affect each other. 
For example, to construct a transitive frame, one can take the transitive closure of a possibly non-transitive frame. 
The resulting frame will satisfy condition $4$. 
As we see, taking the closure of a frame under condition $x$ may affect whether that frame maintains condition $y$, depending on $x$ and $y$.
In the following we observe that one can apply the frame closures in certain orders that preserve the properties one aquires with each application.

%\ac{still here; the following will be restated using the same arguments}

%It is reasonable to wonder whether the translations given above are in a way ``compositional'' and could result for instance in a translation between serial \textit{and} transitive frames to general ones. 
%Here we give a general method to create new translations, by using properties of the proofs and of the frames involved. 

Let $F = (W,R)$ be a frame, $\al \in A \subseteq \act$, and $x$ a frame restriction among $T, B, 4, 5$.
Then, $\overline{R_\al}^{x}$ is the closure of $R_\al$ under $x$, 
$\overline{R}^{x,A}$ is defined by $\overline{R}^{x,A}_\beta = \overline{R_\beta}^{x}$, if $\beta \in A$, and $\overline{R}^{x,A}_\beta = R_\beta$, otherwise.
Then, $\overline{F}^{x,A} = (W,\overline{R}^{x,A})$. We make the following observation.

%\begin{theorem}\label{thm:compositionality}
%	Let $x$ be a frame restriction among $D, T, B, 4, 5$, and $y$ a frame restriction among $T, B, 4, 5$, such that $(x,y) \neq (4,B), (5,T), (5,B)$.
%	Then, for every frame $F$ that satisfies $x$, $\overline{F}^y$ also satisfies $x$.
%	For translations $\transl{x}{k}{\varphi} \text{~and~} \transl{y}{k}{\varphi}$, where $x,y \in \{\t,\d,\b,\kf, \kv\}$, we have that if the 
%	closure over property $\textsc{x}$ is preserved by applying the closure over property $\textsc{y}$, then 
%	$\transl{x}{k}{\transl{y}{k}{\varphi}} $ is satisfiable in general frames iff $\phi$ is satisfiable in frames of the property $\mathbf{XY}$.
%\end{theorem}

%\begin{proof} The proof can be found in Appendix \ref{appendix:compositionality}
%\end{proof}

\begin{lemma}\label{lem:conditionsarepreserved}
	Let $x$ be a frame restriction among $D, T, B, 4, 5$, and $y$ a frame restriction among $T, B, 4, 5$, such that $(x,y) \neq (4,B), (5,T), (5,B)$.
	Then, for every frame $F$ that satisfies $x$, $\overline{F}^y$ also satisfies $x$.
%	
	%\begin{enumerate}
	%\itemsep0em 
	%\item Seriality is preserved through transitive,reflexive,symmetric and euclidean closure (and virse-versa).
	%%\item Transitivity is preserved though serial transformation. 
	%\item Reflexivity is preserved through transitive and symmetric closure (and virse-versa) and through euclidean. 
	%
	%%\item Transitivity is preserved through reflexive closure. 
	%\item Symmetricity is preserved thought transitive and euclidean closure. 
	%
	%\item Transitivity is preserved though euclidean closure (and virse-versa). 
	%\end{enumerate}
\end{lemma}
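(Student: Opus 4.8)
The plan is to verify the claim by a routine case analysis over all admissible pairs $(x,y)$, checking directly that closing a frame satisfying $x$ under condition $y$ cannot destroy $x$. Since $y$ only adds edges to $R_\al$ (reflexive, symmetric, transitive, or euclidean closure all enlarge the relation), and seriality, reflexivity, transitivity, and euclidicity are all ``positive'' conditions in the sense that adding edges can only help — except for transitivity and euclidicity, which also require that certain \emph{derived} edges be present — most cases are immediate. First I would dispatch the easy cases: if $x \in \{D, T\}$, then $x$ asserts only the existence of certain edges out of each state, so any superset of $R_\al$ still satisfies $x$; this handles $x = D$ (all $y$) and $x = T$ (all $y$ except the excluded $(5,T)$).

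Next I would handle $x = 4$ (transitivity). Here I need: if $R_\al$ is transitive and we close under $y \in \{T, 5\}$ (the pair $(4,B)$ is excluded), the result is still transitive. For $y = T$: adding all loops $s\,R_\al\,s$ to a transitive relation preserves transitivity, since any new composable pair through a loop was already an edge. For $y = 5$: I claim the euclidean closure of a transitive relation is transitive — in fact one can note that on a transitive frame the euclidean closure adds edges $t\,R_\al\,r$ whenever $s\,R_\al\,t$ and $s\,R_\al\,r$, and then argue (possibly by describing the closure as obtained in one pass, or by a short induction on the stages of the closure) that transitivity is maintained at each stage. The symmetric case $x = 5$ (euclidicity) is the mirror image: with $y = 4$ allowed (and $(5,T), (5,B)$ excluded), I must show the transitive closure of a euclidean relation is euclidean, which again I would argue by induction on the stages of the transitive closure, using that in a euclidean relation $s\,R_\al\,t$ and $s\,R_\al\,r$ force $t\,R_\al\,r$.

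Finally $x = B$ (symmetry): the excluded pairs remove $y = 4$ is \emph{not} excluded here, wait — the excluded pairs are $(4,B),(5,T),(5,B)$, none of which has $x = B$, so I must check $x = B$ against all $y \in \{T, B, 4, 5\}$. For $y = T$ and $y = B$ this is trivial (adding loops or symmetrizing preserves symmetry). For $y = 4$: the transitive closure of a symmetric relation is symmetric, since if $s\,R_\al\,t$ is in the closure via a path $s = u_0\,R_\al\,u_1 \cdots R_\al\,u_n = t$, reversing the path and using symmetry of each edge gives a path from $t$ to $s$. For $y = 5$: if $R_\al$ is symmetric and we add $t\,R_\al\,r$ whenever $s\,R_\al\,t, s\,R_\al\,r$, then by symmetry we also have $t\,R_\al\,s$ and $r\,R_\al\,s$, and the euclidean closure will symmetrically add $r\,R_\al\,t$; making this precise again wants an induction on the construction stages of the closure.

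The main obstacle, such as it is, will be stating the closures precisely enough that the inductive arguments for the ``mixed'' cases $(4,5)$, $(5,4)$, $(B,4)$, and $(B,5)$ go through cleanly: one should fix a concrete description of $\overline{R_\al}^{y}$ (e.g.\ as the least relation containing $R_\al$ and closed under the relevant rule, equivalently the union of an increasing $\omega$-chain of stages), and then prove the invariant ``every stage still satisfies $x$'' by induction. I expect each individual step to be a one- or two-line verification, so the proof is essentially a bookkeeping exercise; the only genuine content is noticing which orderings fail — precisely the three excluded pairs, for which natural counterexamples exist (e.g.\ a two-point transitive frame whose symmetric closure is no longer transitive witnesses that $(4,B)$ must be excluded).
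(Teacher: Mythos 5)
Your overall plan---a case-by-case check that each admissible closure preserves each admissible condition, with a stage-by-stage induction for the non-trivial pairs---is the right shape, and most of your cases do go through as you describe: the $x\in\{D,T\}$ rows, $x=B$ against every $y$, $x=4$ against $y=T$, and $x=5$ against $y=4$ (where the subcase in which both premises of euclideanness are newly composed edges needs two applications of the hypothesis, but does close). Two cosmetic slips: the excluded pair $(5,T)$ has $x=5$ and $y=T$, so it removes nothing from the $x=T$ row (harmless, since reflexivity survives any edge-adding closure), and in the $x=4$ row you list only $y\in\{T,5\}$ and forget the trivial $y=4$.

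The genuine gap is the pair $(x,y)=(4,5)$. Your claimed invariant---``transitivity is maintained at each stage of the euclidean closure''---is false, and no amount of bookkeeping repairs it, because the conclusion itself fails for arbitrary transitive frames. Take $W=\{u,v,s,t,r\}$ and $R=\{(u,s),(u,t),(v,s),(v,r)\}$; this relation is (vacuously) transitive. Its euclidean closure must contain $(s,r)$, since $v$ sees both $s$ and $r$. But it never contains $(u,r)$: the euclidean rule only creates edges between points that are already successors of a common point, so $u$, which lies outside the range of $R$, never acquires new successors; concretely, $R\cup\{s,t,r\}^{2}$ is a euclidean relation containing $R$ and omitting $(u,r)$. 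Hence $\overline{F}^{5}$ contains $(u,s)$ and $(s,r)$ but not $(u,r)$, and is not transitive. So this case cannot be discharged from the hypothesis ``$F$ satisfies $4$'' alone: any correct treatment must invoke extra structure (for instance, that the frames actually being closed are tree unfoldings, where each state has at most one predecessor, or that reflexivity is also present, in which case the euclidean closure coincides with the equivalence closure and transitivity comes for free), or else $(4,5)$ has to join the excluded list. Your closing remark that ``the only genuine content is noticing which orderings fail---precisely the three excluded pairs'' is exactly where the proposal goes wrong.
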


According to Lemma \ref{lem:conditionsarepreserved}, frame conditions are preserved as seen in Figure \ref{fig:frame_hierarchy}. 
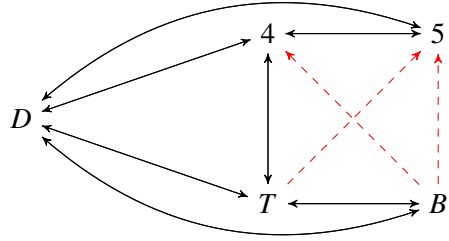
\begin{figure}
	\begin{center}
		\begin{tikzpicture}[scale=0.65,->,>=stealth',node distance=1.6cm, 
%			auto, 
			main node/.style={circle,draw,font=\large\bfseries}]
			
			%\tikzstyle{every node}+=[inner sep=2pt]
			\node (c) { };
			\node (4) [above left of=c] {$4$};
			\node (t) [below left of=c] {$T$};
			\node (b) [below right of=c] {$B$};
			\node (5) [above right of=c] {$5$};
			\node (d) [left= 4cm of c] {$D$};
			
			\path 
			(t) edge (4)
			edge [red,dashed] (5)
			edge (b)
			edge (d)
			
%			(d) edge [red,dashed] (t) 
%			edge [red,dashed, bend right] (b)
%			edge [red,dashed] (4)
%			edge [red,dashed, bend left] (5)

			(d) edge (t) 
			edge [bend right] (b)
			edge (4)
			edge [bend left] (5)
			
			(4) 
			edge (d)
			edge (5)
			edge (t)
%			(d) edge [bend left] (5)
%			edge (4)
%			edge (t)
%			edge (b)
			(b) edge [red,dashed] (4)
			edge [red,dashed] (5)
			edge (t)
			edge [bend left] (d)
			(5) edge (4)
			edge [bend right](d)
			;
		\end{tikzpicture}
	\end{center}
	\caption{The frame property hierarchy}
	\label{fig:frame_hierarchy}
\end{figure}
%
%\begin{proof}
%	The proof follows by easy algebraic property analysis. Note that when not explicitly stated, the converse does not hold. 
%\end{proof}
%
%
In Figure \ref{fig:frame_hierarchy}, an arrow from $x$ to $y$ indicates that  property $x$ is preserved though the closure of a frame under $y$. 
Dotted red arrows indicate one-way arrows.
%Condition $D$ is also included in the figure for completenes, although the closure of a transition relation is not well-defined.
For convenience, we define $\overline{F}^D = (W,\overline{R}^D)$, where $\overline{R}^D = R \cup \{ (a,a) \in W^2 \mid \centernot{\exists} (a,b) \in R \}$.
%trasformation the arrow i spointing at. 
%Dotted red arrows indicate one-way arrows.
%lines outlining transformations that preserve frame properties only in one direction. 
%As one can see, since each pair of properties has at least on path connecting it, every possible combination of frame properties can be applied sequentially while preserving the properties themselves. It suffices to always first apply the translation for the properties in the lower level (the order between them does not matter) in order to have to compositionality result. 

\begin{remark}
	We note that, in general, not all frame conditions are preserved through all closures under another condition. For example, the accessibility relation $\{(a,b),(b,b)\}$ is euclidean, but its reflexive closure $\{(a,b),(b,b),(a,a)\}$ is not.
\end{remark}

%We observe that t
There is at least one linear ordering of the frame conditions $D, T, B, 4, 5$, such that all preceding conditions are preserved 
by
%under 
closures under the following conditions. We call such an order a closure-preserving order.
%, and w
We use the linear order $D, T, B, 4, 5$ in the rest of the paper.

%\begin{corollary}
%	The satisfiability problem for any combination of frame properties of the above over the $\mu$-Calculus is in \PSPACE.
%\end{corollary}

%
%\begin{remark}
%Note that the class of frames symmetric and transitive is \textit{almost} an equivalence class. The only exception that makes it possibly several equivalence classes is the case of worlds with no neighbors, which induce their own equivalence class. This is very similar to the case of a model where the relation being an actual equivalence class, something that will become imminent later, in the many-agent case. 
%\end{remark}

\subsubsection{Modal Logics}
\label{subsec:trans-ml}
%!TeX root = draft.tex

We start with translations that map logics 
without recursive operators to logics with fewer constraints.
As mentioned in Subsection \ref{sec:known-results}, all of the logics  $\LG\in \{\k, \t, \d, \kf, \df, \sr\}$ and $\LG+5$ with one agent have known completeness results, and the complexity of modal logic is well-studied for multi-agent modal logics as well. 
The missing cases are very few and concern the combination of frame conditions (other than $5$) as well as the multi-agent case. However we take this 
%small gap as an 
opportunity to present an 
%easy and 
intuitive introduction to our general translation method.
%that can have up to one frame condition (for only one agent, these are the logics $\d,\t,\b,\kf,\kv$) to $\k$. 
%
%We start with translations that map logics that can have up to one frame condition (for only one agent, these are the logics $\d,\t,\b,\kf,\kv$) to $\k_k^\mu$.
%
%For this case, 
In fact,
the translations that we use for logics without recursion are surprisingly straightforward.
Each frame condition that we introduced in Section \ref{sec:back} is associated with an axiom for modal logic, such that whenever a model has the condition, every substitution instance of the axiom is satisfied in all worlds of the model (see \cite{blackburn2006handbook,MLBlackburnRijkeVenema,Fagin1995ReasoningAboutKnowledge}). We give  for each frame condition $x$ and agent $\al$, the axiom $\mathsf{ax}^x_\al$: 
\begin{multicols}{3}
	\begin{description}
		\item[$\ax^D_\al$: ] $\diam{\al}\true$
		\item[$\ax^T_\al$: ] $[\al]p\impl p$
		\item[$\ax^B_\al$: ] $\diam{\al}[\al]p \impl p $
		\item[$\ax^4_\al$: ] $[\al]p \impl [\al][\al]p$
		\item[$\ax^5_\al$: ] $\diam{\al}[\al]p\impl [\al]p$
	\end{description}
\end{multicols}

For each formula $\varphi$ and $d \geq 0$, let $Inv_d(\varphi) = \bigwedge_{
%	0 \leq 
	i \leq d}[\act]^i\varphi$.
%We first introduce 
Our first translations are straightforwardly defined from the above axioms.
%We now fix a certain closure-preserving order.

%We begin by presenting translations among all logics over some frame property to the logic over general frames. I.e. we give the translation of a formula that reduces the satisfiability problem for a logic over some frame property to $\k^{\mu}$. For this task, it turns out that the axioms corresponding to the frame properties (see preliminaries section) induce naturally a frame translation. 
%
%For the following definition, assume a frame property $x$, and the relevant axiom $ax^x $. 

\begin{translation}[One-step Translation]\label{translation:ml}
	Let $A\subseteq \act$ and let $x$ be one of the frame conditions.
	%	, and let $\LG_1 =\k + x$.
	%	Let $\LG_k$ be a logic such that $\LG(\al) = \LG_1$ when $\al \in A$, and $\k$ otherwise.
%	Then, f
	For every formula $\varphi$, let $d=md(\varphi)$ 
%	is the modal depth of $\varphi$, 
	if $x\neq 4$, and $d=md(\varphi)|\varphi|$, if $x=4$. We define:
	%$$ \transl{\LG_k^\mu}{\k_k^\mu}{\varphi} = \varphi \wedge Inv\Bigg(\bigwedge_{\substack{\psi \in \sub(\varphi) \\ \al \in A}} \mathsf{ax}^x_\al[cl(\psi)/ p]\Bigg).
	%$$
	$$ \transl{A}{x}{\varphi} = \varphi \wedge Inv_d\Bigg(\bigwedge_{\substack{\psi \in \subc(\varphi) \\ \al \in A}} \ax^x_\al[
%	cl(\psi)
	\psi 
	/ p]\Bigg).
	$$
	%where $ax^x[\psi\diagup p]$ stands for the axiom for logic of frame property x, with propositional formula $p$ replaced by $\psi$. 
\end{translation}

%%% If we keep the closure, we can add the following text:
%Since we do not consider recursion in the formulas, the closure of $\psi$ in the above definition is redundant.

%\begin{theorem}
%	Let $A\subseteq \act$, $x$ be one of the frame conditions, and let $\LG_1 =\k + x$.
%	Let $\LG_k$ be a logic such that $\LG(\al) = \LG_1$ when $\al \in A$, and $\k$ otherwise.
%	Then, $\varphi$ is $\LG_k$-satisfiable if and only if $ \transl{\LG_k^\mu}{\k_k}{\varphi}$ is $\k_k$-satisfiable.
%%	$$ \transl{\LG_k^\mu}{\k_k^\mu}{\varphi} = \varphi \wedge Inv\Big(\bigwedge_{\psi \in Sub(\varphi)} ax^x[\psi/ p]\Big).
%%	$$
%\end{theorem}

%\ac{need to alter and rearrange a bit}

\begin{theorem}\label{thm:no-fixpoint-translations-work}
	Let $A\subseteq \act$, $x$ be one of the frame conditions, and let $\LG_1, \LG_2$ be logics without recursion operators, such that 
	%	Let $\LG_k$ be a logic such that 
	$\LG_1(\al) = \LG_2(\al) + x$ when $\al \in A$, and $\LG_2(\al)$ otherwise, and $\LG_2(\al)$ only includes frame conditions that precede $x$ in the fixed order of frame conditions.
	Then, $\varphi$ is $\LG_1$-satisfiable if and only if $ \transl{A}{x}{\varphi}$ is $\LG_2$-satisfiable.
	%	$$ \transl{\LG_k^\mu}{\k_k^\mu}{\varphi} = \varphi \wedge Inv\Big(\bigwedge_{\psi \in Sub(\varphi)} ax^x[\psi/ p]\Big).
	%	$$
\end{theorem}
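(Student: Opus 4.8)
The claim is an equivalence, and for each direction I would relate a model of one logic to the model of the other obtained by closing its frame under $x$ for the agents in $A$.

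\emph{From $\LG_1$ to $\LG_2$.} This direction uses only the soundness of the axioms with respect to their frame conditions. If $M$ is an $\LG_1$-model with $M,s\models\varphi$, then $M$ is also an $\LG_2$-model, since $\LG_2$ imposes a subset of the constraints of $\LG_1$; and since $R_\al$ satisfies $x$ for each $\al\in A$, every substitution instance of $\ax^x_\al$ — in particular $\ax^x_\al[\psi/p]$ for $\psi\in\subc(\varphi)$ and $\al\in A$ — is true at every state of $M$. Hence the conjunction inside $Inv_d$ holds globally, so $M,s\models Inv_d(\cdots)$, and therefore $M,s\models\transl{A}{x}{\varphi}$.

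\emph{From $\LG_2$ to $\LG_1$.} Let $M$ be an $\LG_2$-model with $M,s\models\transl{A}{x}{\varphi}$, which I may assume is generated by $s$. Let $M'$ have the same states and valuation as $M$ and frame $\overline{F}^{x,A}$, that is, $R_\al$ closed under $x$ for $\al\in A$ and left unchanged for $\al\notin A$. First, $M'$ is an $\LG_1$-model: for $\al\notin A$ nothing changed, and for $\al\in A$ the relation $R_\al$ now satisfies $x$ by construction while still satisfying every constraint $c\in\LG_2(\al)$, because $c$ precedes $x$ in the closure-preserving order $D,T,B,4,5$, so $(c,x)\notin\{(4,B),(5,T),(5,B)\}$ and Lemma~\ref{lem:conditionsarepreserved} applies ($x=D$ being trivial, as nothing precedes $D$ and $\overline{F}^D$ only adds self-loops at dead ends). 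The core of the argument is a truth-preservation claim: at every state sufficiently close to $s$, $M$ and $M'$ agree on each $\psi\in\subc(\varphi)$; applied to $s$ and $\varphi$ this yields $M',s\models\varphi$. I would prove it by induction on $\psi$. The propositional and boolean cases and modalities $[\al]\psi'$, $\diam{\al}\psi'$ with $\al\notin A$ are immediate from the induction hypothesis. For a modality with $\al\in A$, the closure may add $\al$-edges out of a state $t$, and one must show that each new successor $r$ already satisfies $\psi'$ (for $[\al]\psi'$), resp.\ fails it (for $\diam{\al}\psi'$), in $M$, after which the induction hypothesis carries this to $M'$; this is exactly where the axiom instances are used. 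For $x=T$ the new edges are self-loops $t\to_\al t$, and $M,t\models[\al]\psi'$ together with $\ax^T_\al[\psi'/p]$ at $t$ gives $M,t\models\psi'$. For $x=B$ a new edge $t\to_\al r$ arises from $r\to_\al t$ in $M$, so $M,r\models\diam{\al}[\al]\psi'$, and $\ax^B_\al[\psi'/p]$ at $r$ gives $M,r\models\psi'$; $x=5$ is analogous, using a common predecessor of $t$ and $r$ and $\ax^5_\al$. For $x=D$ the only new edges are self-loops at $\al$-dead ends, which do not occur where $\ax^D_\al=\diam{\al}\true$ holds. The dual subcases for $\diam{\al}$ use the axiom instance for $\neg\psi'$, which is why the translation ranges over $\subc(\varphi)$ rather than $\sub(\varphi)$; and the ``sufficiently close'' bookkeeping is matched against the wrapping depth $d$ by the usual locality fact (a state at distance $i$ from $s$ is inspected only inside a subformula $\psi$ with $md(\psi)\le d-i$).

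\emph{The main obstacle: transitivity.} The case $x=4$ is where the larger bound $d=md(\varphi)\cdot|\varphi|$ is needed. Closing under $4$ adds an $\al$-edge $t\to_\al r$ whenever $M$ has an $\al$-path $t=t_0\to_\al t_1\to_\al\cdots\to_\al t_n=r$, and to deduce $M,r\models\psi'$ from $M,t\models[\al]\psi'$ one pushes $[\al]\psi'$ one step at a time along the path, applying $\ax^4_\al[\psi'/p]\colon[\al]\psi'\to[\al][\al]\psi'$ at $t_0,\dots,t_{n-1}$. The difficulty — absent for the ``local'' closures above — is that a single edge of $M'$ encodes an arbitrarily long path of $M$, so the distance of a state from $s$ in $M'$ gives no control over its distance in $M$, whereas $Inv_d$ only certifies the axiom instances within distance $d$ of $s$. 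Making the argument go through requires first replacing $M$ by an $\LG_2$-model of $\transl{A}{4}{\varphi}$ in which the paths that matter are short — for instance by a suitable unraveling and truncation, together with collapsing portions of paths that are redundant for $\varphi$ — and then checking that this preprocessing preserves both the $\LG_2$-constraints (so that closing under $4$ still yields an $\LG_1$-model) and the satisfaction of $\transl{A}{4}{\varphi}$ at $s$; the factor $|\varphi|$ is precisely what guarantees that $Inv_d$ still reaches every state and every intermediate state used in the propagation. Pinning down this reduction and verifying it uniformly over all admissible $\LG_2$ is the technical heart of the proof; the remaining work is routine.
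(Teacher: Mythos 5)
Your outline coincides with the paper's: the ``only if'' direction is just the validity of the axiom instances on frames with condition $x$, and the ``if'' direction closes the frame under $x$ for the agents in $A$, uses Lemma~\ref{lem:conditionsarepreserved} and the order $D,T,B,4,5$ to keep the $\LG_2$-conditions, and then argues by induction on subformulas that the axiom instances supplied by $Inv_d$ make every closure edge harmless. The gap is in \emph{where} you run this induction. You apply the closure directly to the point-generated $\LG_2$-model and appeal to ``the usual locality fact'' to match the reach of $Inv_d$ against the states inspected. That bookkeeping fails for $x=B$ and $x=5$: the witness states at which you must invoke $\ax^B_\al[\psi'/p]$ or $\ax^5_\al[\psi'/p]$ are $M$-\emph{predecessors} (or euclidean siblings) of the current state, and in an arbitrary generated model such a state can lie at $M$-distance greater than $d$ from $s$, outside the region where $Inv_d$ certifies the axiom instances; worse, a state reached along an $M'$-path that uses closure edges need not itself be within $M$-distance $d$ of $s$ (a symmetric edge can be a shortcut to a state that is very deep in $M$). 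This is exactly why the paper's construction first passes to the \emph{unfolding} of the $\LG_2$-model, so that every relevant state is a path from the root whose length is recorded and bounded by $d$ and the unique $M$-predecessor of a state is its parent, and only then reapplies the necessary closures (including re-establishing the $\LG_2$-conditions destroyed by unravelling) and performs the induction. You recognise the need for such preprocessing only for $x=4$, where you correctly diagnose the problem (one closure edge encodes an unboundedly long path) but leave its resolution open; the unfolding, together with the factor $|\varphi|$ in $d$ compensating for the fact that rule-of-thumb ``modal depth decreases along transitions'' fails when $[\al]\varphi$ is propagated by axiom $4$, is what the paper uses there too.

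Two further inaccuracies. Your case $x=5$ is not ``analogous'' to $x=B$: the euclidean closure is an iterated fixpoint, so a new edge $t\to_\al r$ need not arise from a common $M$-predecessor of $t$ and $r$ --- it may arise from a common predecessor under an intermediate stage of the closure --- and the induction must account for this; it is also the one case for which, per the paper, the negated subformulas in $\subc(\varphi)$ are genuinely needed. Relatedly, your explanation that the diamond subcases force the translation to range over $\subc(\varphi)$ in general is not how the paper sees it: a one-directional truth-preservation induction on formulas in negation normal form handles the diamond cases for $D,T,B,4$ without negations (an old $M$-successor is still an $M'$-successor), and only $x=5$ requires them. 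Your treatment of $x=D$ via $\overline{F}^D$ and of $x=T$ is fine.
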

We present here a short proof sketch of this theorem.
%, in order to demonstrate the necessary steps for establishing results like this. 
%(as there will be several similar ones in the remainder of the paper). 
%The full proof of Theorem \ref{thm:no-fixpoint-translations-work} can be found in Appendix \ref{appendix:Translations}
\begin{proof}[Proof sketch]
The proof of the ``only if'' direction is straightforward, as for any agent $\al$ with frame condition $x$, $\ax_\al^x$ is valid in $\LG_1$.
%The proof consists of two parts. First we prove that given a satisfiable formula $\varphi$ on frames of condition $x$ we can construct a model the its translation. In this case this direction is trivial, as any model that satisfies the frame condition $x$ also satisfies the invariant described by the translation at all reachable states, as the translation consists of the frame conditions axiom. 
%Moreover, any modal logic formula cannot describe a model at depth more than the  constant $d$, 
%in its models, 
%and t
Thus the translation holds on any $\LG_1$-model that satisfies $\varphi$.

The other, and more involved direction 
requires the construction of an $\LG_1$-model for $\varphi$ from an $\LG_2$-model for $\transl{A}{x}{\varphi}$.
%
%Moreover, any 
We make use of the observation that no
modal logic formula can describe a model at depth more than the  constant $d$.
Therefore, we use the unfolding of the $\LG_2$-model, to keep track  of the path that one takes to reach a certain state, and that path's length.
We then carefully reapply the necessary closures on the accessibility relations and we use induction on its subformulas, to prove that $\varphi$ is true in the constructed model.
%consists of demonstrating the existence of a model $\M'$ of frame property $x$, given that the translation of $\varphi$ is satisfiable in some general model $\M$. For this direction and these translations the model we produce is the closure under property $x$ of $\M$. Then we proceed to prove, by induction over open subformulas of $\varphi$, and for any context $X$, that the subformulas are preserved in the $x$-closure of $\M$. 
We do this as a separate case for each axiom.
It is worth noting that we pay special care for the case of $x=4$ to account for the fact that the translation does not allow the modal depth of the relevant subformulas to decrease with each transition during the induction; and that we needed to include the negations of subformulas of $\varphi$ in the conjunction of Translation \ref{translation:ml}, only for the case of $x=5$.
\end{proof}

\begin{corollary}\label{cor:modalupper}
	The satisfiability problem for 
	every logic without fixed-point operators
%	$\t, \d, \b, \kf, \kv$ 
	is in \PSPACE.
\end{corollary}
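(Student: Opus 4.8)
The plan is to obtain the bound purely by composing the one-step translations of Theorem~\ref{thm:no-fixpoint-translations-work}, so as to strip away all frame conditions from an arbitrary recursion-free logic $\LG$, one condition at a time, until only the base logic $\k_k$ remains, for which a \PSPACE\ upper bound is already available. The one design choice is the order in which the conditions are removed: since Theorem~\ref{thm:no-fixpoint-translations-work} requires, when eliminating a condition $x$ for a set of agents $A$, that the remaining conditions of those agents all \emph{precede} $x$ in the fixed order $D,T,B,4,5$, I would remove the conditions in the reverse of that order, i.e.\ first $5$, then $4$, then $B$, then $T$, then $D$.

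Concretely, fix the agent set $\act$ with $|\act| = k$ and put $\LG^{(0)} = \LG$. For $x$ running through $5,4,B,T,D$ in that order, given $\LG^{(i)}$ I would let $A_x$ be the set of agents $\al$ such that $x$ is a constraint of $\LG^{(i)}(\al)$, and let $\LG^{(i+1)}$ be obtained from $\LG^{(i)}$ by deleting $x$ from $\LG^{(i)}(\al)$ for every $\al \in A_x$. An easy induction on $i$ shows that, once the conditions lying strictly after $x$ in the order have already been removed, every $\al \in A_x$ satisfies $\LG^{(i)}(\al) = \LG^{(i+1)}(\al) + x$ with $\LG^{(i+1)}(\al)$ mentioning only conditions preceding $x$; hence Theorem~\ref{thm:no-fixpoint-translations-work} applies with $\LG_1 = \LG^{(i)}$, $\LG_2 = \LG^{(i+1)}$, $A = A_x$, giving that $\varphi$ is $\LG^{(i)}$-satisfiable iff $\transl{A_x}{x}{\varphi}$ is $\LG^{(i+1)}$-satisfiable. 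After the last step ($x = D$), every agent's logic is $\k$, so $\LG^{(5)} = \k_k$, and composing the five translations produces a single map $g$ with $\varphi$ being $\LG$-satisfiable iff $g(\varphi)$ is $\k_k$-satisfiable.

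It remains to check polynomiality: each translation of Translation~\ref{translation:ml} is computable in polynomial time, since the invariant depth is at most $md(\varphi)|\varphi|$ (even for $x = 4$) and the inner conjunction ranges over $\subc(\varphi)\times A$, so each output has size polynomial in its input; the composition of finitely many polynomial-time maps is polynomial-time, so $g$ is a polynomial-time reduction from $\LG$-satisfiability to $\k_k$-satisfiability. Finally, $\k_k$-satisfiability is in \PSPACE: for $k > 1$ by Theorem~\ref{thm:halpKkcomp} instantiated with the empty set of constraints, for $k = 1$ by Ladner's result recalled in Theorem~\ref{thm:ladhalp}, and for $k = 0$ because the logic is just propositional. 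Since \PSPACE\ is closed under polynomial-time reductions, $\LG$-satisfiability is in \PSPACE\ for every recursion-free $\LG$.

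I do not expect a genuine obstacle: all the substantive content is already in Theorem~\ref{thm:no-fixpoint-translations-work}. The one point that makes the argument more than a one-liner is that the translations are not symmetric in the conditions, so they must be applied in the reverse closure-preserving order to keep meeting the hypothesis ``$\LG_2(\al)$ only includes conditions preceding $x$'' of Theorem~\ref{thm:no-fixpoint-translations-work}; verifying that this order works is exactly the induction sketched above, together with tracking polynomial size across the (at most five) compositions.
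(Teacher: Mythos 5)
Your proposal is correct and follows exactly the route the paper intends: iterate the one-step translations of Theorem~\ref{thm:no-fixpoint-translations-work} in the reverse of the closure-preserving order $D,T,B,4,5$ (i.e.\ strip $5$, then $4$, $B$, $T$, $D$) so that the side condition ``$\LG_2(\al)$ only includes conditions preceding $x$'' holds at every stage, and then invoke the known \PSPACE\ bound for $\k_k$ via Theorems~\ref{thm:ladhalp} and~\ref{thm:halpKkcomp}. Your size-tracking remark is also consistent with the paper's observation that Translation~\ref{translation:ml} incurs at most a polynomial (quadratic) blow-up per step, so the fivefold composition remains a polynomial-time reduction.
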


%\begin{corollary}
%The satisfiability problem for 
%$\t^{\mu}, \d^{\mu}, \b^{\mu}, \kf^{\mu}, \kv^{\mu}$ 
%is in \EXP.
%\end{corollary}

%\subsubsection{\NI: The One-action Case}
%\label{sec:NI} 
%\input{negative.tex}

\subsubsection{Modal Logics with Recursion}
\label{sec:transl_mu_cal}
%!TeX root = draft.tex

In the remainder of this section we will modify our translations and proof technique, in order to lift our results to 
logics with fixed-point operators.
It is not clear whether the translations of Subsection \ref{subsec:trans-ml} can be extended straightforwardly in the case of logics with recursion, by using unbounded invariance $Inv$, instead of the bounded  $Inv_d$.
%It is easy to see that the translations given above would fail in the case of 
%logics with recursion -- even if we use unbounded recursion in the translation.
%existence of recursive properties. 
%This 
%is demostrated by the following example.
%can be easy seen if we take the 
%recursive property 
%formula
\begin{example}
	Let 
$\varphi_f=$ \change{$\mn$}$ X. \Box X$, which requires all paths in the model to be finite, and thus it 
is
%can 
%never be satisfied 
not satisfiable
in reflexive frames. 
In Subsection \ref{subsec:trans-ml}, to translate formulas from reflexive models, we did not need to add the negations of subformulas as conjuncts.
In this case, such a translation would give
%It is clear that 
%Assume that $\varphi_f$ is translated as 
$$\varphi_t := \varphi_f \land Inv((\Box \varphi_f \impl \varphi_f) \land (\Box\Box \varphi_f \impl \Box\varphi_f)
%\land 
%(\Box \neg\varphi_f \impl \neg\varphi_f) \land (\Box\Box \neg\varphi_f \impl \Box\neg\varphi_f)
).
$$
%$$\varphi_t := \varphi_f \land Inv((\Box \varphi_f \impl \varphi_f) \land (\Box\Box \varphi_f \impl \Box\varphi_f)
%\land 
%(\Box \neg\varphi_f \impl \neg\varphi_f) \land (\Box\Box \neg\varphi_f \impl \Box\neg\varphi_f)
%).
%$$
 \change{Indeed, on reflexive frames,} the formulas $\Box \varphi_f \impl \varphi_f$ and $\Box\Box \varphi_f \impl \Box\varphi_f$ are valid, and therefore $\varphi_t$ is equivalent to $\varphi_f$, which is \k-satisfiable.
% ; and 
% $\Box \neg\varphi_f \impl \neg\varphi_f$ and $\Box\Box \neg\varphi_f \impl \Box\neg\varphi_f$ are respectively equivalent to $\Diamond\true$ and $\Box \Diamond \true$.
%Therefore, $\varphi_t$ is equivalent to $\varphi_f$, a \k-satisfiable formula, then the translation does not preserve satisfiability in the corresponding models.
%
%Our translation though does not imply such a condition for its models and thus our proof fails. 
This was not an issue 
in Subsection \ref{subsec:trans-ml}, as
%before as 
the finiteness of the paths in a model 
cannot be expressed without recursion.

One would then naturally wonder whether conjoining over $\subc(\varphi_f)$ in the translation
 would make a difference.
The answer is affirmative, as the tranlation 
	$$ \varphi_f \wedge Inv\Bigg(\bigwedge_{\psi \in \subc(\varphi_f) } 
%	\mathsf{ax}^T[
%	cl(\psi)
\Box\psi \impl \psi 
%/ p]
\Bigg)
$$
would then yield a formula that is not satisfiable.
However, our constructions would not work to prove that such a translation preserves satisfiability.
For example, consider $\mn X.\Box(p \impl(r \land( q \impl X)))$, whose translation is satisfied on a pointed model that satisfies at the same time $p$ and $q$.
We invite the reader to verify the details.
\end{example}

%was a property not able to be captured from simple modal properties, and thus the models that had it could not be distinguished from those that did not. 
The only case \change{where} the 
%translation is still valid 
approach that we used for the logics without recursion can be applied 
is for 
the case of seriality (condition $D$), as $Inv(\diam{\al}\true)$ directly ensures the seriality of a model.
%serial frames ( i.e. the logic $\d_{\mu}^k$).

\begin{translation}
	$$ 
	\transl{A}{D^{\mu}}{\varphi}
	= \varphi  \wedge Inv\big(\bigwedge_{\al \in A}\diam{\al} \true 
	\big).$$  
\end{translation}

\begin{theorem}\label{thm:mu_calc_serial_transl}
	Let $A\subseteq \act$ and $|\act|=k$, 
	%	$x$ be one of the frame conditions, 
	and let $\LG$ be a logic, such that 
	%	Let $\LG_k$ be a logic such that 
	$\LG(\al) = \d$ when $\al \in A$, and $\k$ otherwise.
	Then, $\varphi$ is $\LG$-satisfiable if and only if 
	$\transl{A}{D^{\mu}}{\varphi}$ is $\k_k^\mu$-satisfiable.
	%For every formula $\varphi$, $\transl{A}{T^{\mu}}{\varphi}$ is satisfiable in general frames iff $\varphi$ satisfiable in reflexive frames.
\end{theorem}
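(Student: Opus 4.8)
The plan is to prove the two directions of the biconditional separately, using the standard pattern of the paper: the forward direction is an easy validity argument, while the backward direction requires turning a $\k_k^\mu$-model into an $\LG$-model.

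For the ``only if'' direction, suppose $\M, s \models \varphi$ for some $\LG$-model $\M = (W, R, V)$. Since $\LG(\al) = \d$ for each $\al \in A$, the relation $\trana$ is serial on $W$, so $\M, t \models \diam{\al}\true$ for every state $t$ and every $\al \in A$; hence $\M, t \models \bigwedge_{\al \in A}\diam{\al}\true$ at every state. Because $\Box$-type formulas of this shape are preserved along all transitions, one checks by the semantics of the greatest fixed point that $\M, s \models Inv\big(\bigwedge_{\al\in A}\diam{\al}\true\big)$ — more precisely, the set $W$ itself is a post-fixed point of the operator $S \mapsto \trueset{\bigwedge_{\al\in A}\diam{\al}\true \land [\act]X}_{\rho[X\mapsto S]}$, so it is contained in the interpretation of $Inv(\cdot)$. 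Any $\LG$-model is in particular a $\k_k^\mu$-model (an $\LG$-frame is a $\k$-frame for every agent, by definition), so $\transl{A}{D^\mu}{\varphi} = \varphi \land Inv\big(\bigwedge_{\al\in A}\diam{\al}\true\big)$ is satisfied at $s$ in the same model viewed as a $\k_k^\mu$-model.

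For the ``if'' direction, let $\M = (W, R, V)$ be a $\k_k^\mu$-model with $\M, s \models \transl{A}{D^\mu}{\varphi}$. I would first restrict attention to the submodel $\M'$ generated by $s$, i.e.\ the set of states reachable from $s$ by arbitrary finite sequences of $R_\beta$-steps ($\beta \in \act$); generated submodels preserve the truth of all $\muhml$-formulas at $s$, so $\M', s \models \transl{A}{D^\mu}{\varphi}$. Now every state $t$ of $\M'$ is reachable from $s$, and from $\M', s \models Inv\big(\bigwedge_{\al\in A}\diam{\al}\true\big)$ together with the semantics of $\mx$ one gets that $\M', t \models \diam{\al}\true$ for every $t$ in $\M'$ and every $\al \in A$; that is, $R'_\al$ is serial on (the state set of) $\M'$ for each $\al \in A$. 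Hence $\M'$ is already a model in which every agent's relation satisfies the required constraints: for $\al \in A$ it is serial, i.e.\ a $\d$-frame, and for $\al \notin A$ nothing beyond $\k$ is demanded. Therefore $\M'$ is an $\LG$-model, and since $\M', s \models \varphi$ (as $\varphi$ is the first conjunct of the translation), $\varphi$ is $\LG$-satisfiable.

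Finally I would note that $\transl{A}{D^\mu}{\cdot}$ is clearly computable in polynomial time — it adds a single $Inv(\cdot)$ wrapper around a conjunction of size $O(k)$ — so the translation is a polynomial-time reduction, as required in this section. The main (and really only) subtlety is the backward direction: one must be careful that seriality need only be ensured at reachable states, which is exactly why passing to the generated submodel does the job cleanly, and why no unfolding or re-closure of the relation is needed here, in contrast with the harder cases $x \in \{T, B, 4, 5\}$ discussed elsewhere in the paper.
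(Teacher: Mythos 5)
Your proof is correct and matches the approach the paper intends for this case: the forward direction is the validity of $\diam{\al}\true$ on serial frames, and the backward direction exploits that $Inv\big(\bigwedge_{\al\in A}\diam{\al}\true\big)$ directly forces seriality on all states reachable from the satisfying state, with the pass to the generated submodel being exactly the right device to discard unreachable non-serial states. No gaps.
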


For the cases of reflexivity and transitivity, our simple translations substitute the modal subformulas of a  formula to implicitly enforce the corresponding condition.

\begin{translation}
The operation $\transl{A}{T^{\mu}}{-}$ is defined to be such that
\begin{itemize}
	\item 
$
%\transl{T^{\mu}}{K^{\mu}}{\varphi} 
\transl{A}{T^{\mu}}{[\al]\varphi}
= \stbox{\al} \transl{A}{T^{\mu}}{\varphi} \wedge \transl{A}{T^{\mu}}{\varphi}$;
\item 
$
\transl{A}{T^{\mu}}{\diam{\al}\varphi}
= \diam{\al} \transl{A}{T^{\mu}}{\varphi} \vee \transl{A}{T^{\mu}}{\varphi}$;
\item and it commutes with all other operations.
\end{itemize}
\end{translation}

\begin{theorem}\label{thm:mu_calc_reflex_translation}
	Let $\emptyset \neq A\subseteq \act$, 
%	$x$ be one of the frame conditions, 
	and let $\LG_1, \LG_2$ be logics, such that 
	%	Let $\LG_k$ be a logic such that 
	$\LG_1(\al) = \LG_2(\al) + T$ when $\al \in A$, and $\LG_2(\al)$ otherwise, and $\LG_2(\al)$ at most includes frame condition $D$.
	Then, $\varphi$ is $\LG_1$-satisfiable if and only if 
	$\transl{A}{T^{\mu}}{\varphi}$ is $\LG_2$-satisfiable.
%For every formula $\varphi$, $\transl{A}{T^{\mu}}{\varphi}$ is satisfiable in general frames iff $\varphi$ satisfiable in reflexive frames.
\end{theorem}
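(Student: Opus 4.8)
The plan is to prove both directions by translating models back and forth, using the fact that the translation $\transl{A}{T^{\mu}}{-}$ simply ``absorbs'' one application of reflexivity into each modality. The key semantic observation to establish first is a substitution-style lemma: for any model $\M$, any state $s$, and any environment $\rho$, we have $s \in \trueset{\transl{A}{T^{\mu}}{\varphi}}_{\M,\rho}$ iff $s \in \trueset{\varphi}_{\M^T,\rho}$, where $\M^T$ is the model obtained from $\M$ by taking the reflexive closure of $R_\al$ for every $\al \in A$ (and leaving the other relations untouched). This is proved by induction on the structure of $\varphi$; the only interesting cases are the modalities for $\al \in A$, where $[\al]$ in $\M^T$ quantifies over $R_\al$-successors \emph{and} $s$ itself, which is exactly what the clause $\stbox{\al}\transl{A}{T^{\mu}}{\varphi} \wedge \transl{A}{T^{\mu}}{\varphi}$ captures; dually for $\diam{\al}$. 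The fixed-point cases go through because the translation commutes with $\mn,\mx$ and the semantics of fixed points is determined pointwise by the semantic operator, which is preserved by the induction hypothesis. One must be slightly careful that the induction hypothesis is stated for \emph{all} environments simultaneously so that it applies under the binder $\rho[X\mapsto S]$.

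With this lemma in hand, the two directions follow. For the ``only if'' direction, suppose $\varphi$ is $\LG_1$-satisfiable, say $\M,s \models \varphi$ with $\M$ an $\LG_1$-model. Since for $\al \in A$ the relation $R_\al$ is already reflexive in $\M$ (because $\LG_1(\al) = \LG_2(\al)+T$), we have $\M^T = \M$, so by the lemma $\M,s \models \transl{A}{T^{\mu}}{\varphi}$. It remains to check that $\M$ is an $\LG_2$-model: this holds because the frame of $\M$ for agent $\al$ satisfies all constraints of $\LG_1(\al)$, hence in particular all those of $\LG_2(\al) \subseteq \LG_1(\al)$. For the ``if'' direction, suppose $\M',s \models \transl{A}{T^{\mu}}{\varphi}$ for an $\LG_2$-model $\M'$. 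By the lemma, $(\M')^T, s \models \varphi$. So it suffices to show $(\M')^T$ is an $\LG_1$-model, i.e. for every $\al$, the frame $(W,\overline{R_\al}^T)$ (if $\al\in A$) or $(W,R_\al)$ (otherwise) is a frame for $\LG_1(\al)$. For $\al \notin A$ this is immediate since $\LG_1(\al) = \LG_2(\al)$. For $\al \in A$ we need that the reflexive closure preserves every constraint of $\LG_2(\al)$ — and here the hypothesis that $\LG_2(\al)$ includes at most the frame condition $D$ is exactly what is needed: by Lemma~\ref{lem:conditionsarepreserved}, taking the reflexive ($T$) closure preserves $D$ (the pair $(D,T)$ is not among the forbidden pairs $(4,B),(5,T),(5,B)$), and of course the closed relation satisfies $T$; so $(W,\overline{R_\al}^T)$ satisfies $D$ (if required) and $T$, hence is a frame for $\LG_1(\al) = \LG_2(\al)+T$.

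The main obstacle is the proof of the substitution lemma in the fixed-point cases: one has to argue that the semantic map $S \mapsto \trueset{\transl{A}{T^{\mu}}{\psi}}_{\M,\rho[X\mapsto S]}$ and the map $S \mapsto \trueset{\psi}_{\M^T,\rho[X\mapsto S]}$ coincide, and then conclude equality of the corresponding least/greatest fixed points. This is routine but requires the induction hypothesis in the strong (environment-universal) form noted above, together with the fact that $\transl{A}{T^{\mu}}{-}$ leaves logical variables $X$ untouched (so the ``plugging in'' of $S$ happens at the same syntactic positions on both sides). No Knaster--Tarski subtlety beyond monotonicity is involved, since the operators are monotone by the usual syntactic positivity of recursion variables. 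The remaining bookkeeping — that $\transl{A}{T^{\mu}}{-}$ is polynomial-time computable and maps closed formulas to closed formulas — is immediate from the recursive definition.
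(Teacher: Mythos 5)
Your proposal is correct and follows essentially the same route as the paper: the paper's (sketched) proof also takes the reflexive closure of the relations for agents in $A$ and shows by induction on subformulas, uniformly over environments, that $\trueset{\transl{A}{T^{\mu}}{\psi},\rho}_{\M} = \trueset{\psi,\rho}_{\M^{T}}$, exactly as in the paper's explicit argument for the transitive case (Theorem~\ref{thm:mu_calc_transitive_transl}). Your packaging of both directions into a single denotation-equality lemma, plus the appeal to Lemma~\ref{lem:conditionsarepreserved} to check that $(\M')^{T}$ remains an $\LG_1$-model, matches the intended proof.
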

\begin{proof}[Proof sketch]
	The ``only if'' \change{direction} is proven by taking the appropriate reflexive closure and showing by induction that the subformulas of $\varphi$ are preserved.
%The full proof can be found in Appendix \ref{appendix:proof_mu_calc_reflex_translation}
\end{proof}

\begin{translation}
	The operation $\transl{A}{4^{\mu}}{-}$ is defined to be such that
	\begin{itemize}
		\item $\transl{A}{4^{\mu}}{[\al] \psi} = Inv([\al](\transl{A}{4^{\mu}}{\psi})$, 
		\item $\transl{A}{4^{\mu}}{\diam{\al} \psi} = Eve(\diam{\al}(\transl{A}{4^{\mu}}{\psi})$\change{,}
		
		\item $\transl{A}{4^{\mu}}-$ commutes with all other operations.
		
	\end{itemize}
\end{translation}

%\ac{oh, are talking about one agent?} 

\begin{theorem}\label{thm:mu_calc_transitive_transl}
%	For every formula $\varphi$, $\transl{A}{4^{\mu}}{\varphi}$ is satisfiable if and only if $\varphi$ is satisfiable in transitive models.
	Let $\emptyset \neq A\subseteq \act$, 
	%	$x$ be one of the frame conditions, 
	and let $\LG_1, \LG_2$ be logics, such that 
	%	Let $\LG_k$ be a logic such that 
	$\LG_1(\al) = \LG_2(\al) + 4$ when $\al \in A$, and $\LG_2(\al)$ otherwise, and $\LG_2(\al)$ at most includes frame conditions $D, T, B$.
	Then, $\varphi$ is $\LG_1$-satisfiable if and only if 
	$\transl{A}{4^{\mu}}{\varphi}$ is $\LG_2$-satisfiable.
\end{theorem}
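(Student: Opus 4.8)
The plan is to mirror the structure of the proofs for the preceding translations, proving the two directions separately, with the ``if'' direction being the substantial one. For the ``only if'' direction, suppose $\varphi$ is $\LG_1$-satisfiable, say $M,s \models \varphi$ where $M$ is an $\LG_1$-model, so each $R_\al$ with $\al \in A$ is transitive (and each $R_\al$ satisfies the conditions of $\LG_2(\al)$, which among $D,T,B$ are all preserved under transitivity by Lemma \ref{lem:conditionsarepreserved}, though here we go in the easy direction and simply keep $M$ as is). I would prove by induction on $\psi \in \sub(\varphi)$ that for every state $t$, $M,t \models \psi$ iff $M,t \models \transl{A}{4^{\mu}}{\psi}$. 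The only interesting cases are the modal ones: for $[\al]\psi$ with $\al \in A$, transitivity of $R_\al$ means that the set of states reachable from $t$ by one $R_\al$-step equals the set reachable by one-or-more $R_\al$-steps, so $M,t \models [\al]\psi$ iff $\psi$ holds at all states $R_\al$-reachable in $\geq 1$ steps, which is exactly $M,t \models Inv([\al]\transl{A}{4^{\mu}}{\psi})$ by the induction hypothesis (recalling $Inv(\chi) = \mx X.(\chi \wedge [\act]X)$, so $Inv([\al]\xi)$ asserts $[\al]\xi$ at all reachable states including $t$ itself — one must check the bookkeeping that applying $[\al]$ at $t$ and at every state reachable from $t$ covers precisely the transitive closure, which holds since $R_\al \subseteq R_{\act}$ and $R_\al$ is already transitive); dually for $\diam{\al}\psi$ using $Eve$.

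The ``if'' direction is where the real work lies. Given an $\LG_2$-model $M = (W,R,V)$ with $M,s \models \transl{A}{4^{\mu}}{\varphi}$, I would build an $\LG_1$-model by taking, for each $\al \in A$, the transitive closure $\overline{R_\al}^{4}$, and leaving $R_\al$ unchanged for $\al \notin A$; call the result $M' = \overline{M}^{4,A}$. By Lemma \ref{lem:conditionsarepreserved} (using that $D$ is preserved under $4$, and $T,B$ precede $4$ in the fixed closure-preserving order, so they too are preserved), $M'$ is an $\LG_1$-model. It then remains to show $M',s \models \varphi$. The key lemma is: for every $\psi \in \sub(\varphi)$ and every state $t$, $M,t \models \transl{A}{4^{\mu}}{\psi}$ implies $M',t \models \psi$ — and symmetrically, or rather one proves the biconditional $M,t \models \transl{A}{4^{\mu}}{\psi} \iff M',t \models \psi$ by induction on $\psi$. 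In the $[\al]\psi$ case with $\al \in A$: $M,t \models Inv([\al]\transl{A}{4^{\mu}}{\psi})$ says $\transl{A}{4^{\mu}}{\psi}$ holds after every single $R_\al$-step taken from any state $R_{\act}$-reachable from $t$; in particular it holds after every $\overline{R_\al}^4$-step from $t$, since such a step decomposes into a chain of $R_\al$-steps each landing in an $R_{\act}$-reachable state — by the induction hypothesis this gives $M',u \models \psi$ for every $u$ with $t\,\overline{R_\al}^4\,u$, i.e. $M',t \models [\al]\psi$; the converse uses that $\overline{R_\al}^4 \supseteq R_\al$ and that $Inv$ threads the obligation along all $R_{\act}$-paths. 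The $\diam{\al}\psi$ case is dual, using $Eve$ and the fact that a finite witnessing $\overline{R_\al}^4$-path in $M'$ unfolds into an $R_\al$-path in $M$ realizing the disjunctive recursion. The case analysis for fixed-point operators $\mn Y.\psi$, $\mx Y.\psi$ goes through by the standard argument that $\transl{A}{4^{\mu}}{-}$ commutes with them and the semantic fixed-point computation transfers along the established equivalence at each approximant, provided one states the induction with an environment parameter $\rho$ handled uniformly on both models.

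The main obstacle I anticipate is the interaction between the unbounded $Inv$/$Eve$ wrappers and the transitive closure in the modal cases — specifically, making rigorous that ``every $\overline{R_\al}^4$-step from $t$ is covered by the $Inv$ obligation'' without circularity, since the obligation at a state $u$ reached from $t$ again mentions $\transl{A}{4^{\mu}}{\psi}$ and $u$ itself may only be reachable in $M$ but its closure-successors are what matter in $M'$. The clean way around this is to prove, as an auxiliary fact before the main induction, that $M,t \models Inv(\chi)$ iff $\chi$ holds at every state $R_{\act}$-reachable from $t$ in $M$, iff every state $\overline{R_{\act}}{}^{4,A}$-reachable from $t$ in $M'$ — the last equivalence holding because taking transitive closures of individual $R_\al$ does not change the set of $R_{\act}$-reachable states — and dually for $Eve$. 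With that reachability-preservation fact in hand, the modal induction steps reduce to the same one-step bookkeeping as in the ``only if'' direction. A secondary subtlety, as flagged in the proof sketch of Theorem \ref{thm:no-fixpoint-translations-work} for the $x=4$ case, is that modal depth does not decrease along transitions; but since we are now in the recursion setting and the induction is on subformula structure (with fixed points unfolded via the closure operation $cl$) rather than on a depth counter, this ceases to be an issue — which is precisely why $Inv$/$Eve$ replace $Inv_d$ here.
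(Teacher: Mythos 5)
Your overall strategy coincides with the paper's: for the ``if'' direction, pass to $M'=(W,R^+,V)$ where $R^+_\al$ is the transitive closure of $R_\al$ for $\al\in A$ (and $R_\al$ otherwise), and prove by induction on possibly open subformulas, with an environment parameter, that $\trueset{\transl{A}{4^{\mu}}{\psi},\rho}_{M}=\trueset{\psi,\rho}_{M'}$; the ``only if'' direction is then essentially the degenerate case in which $M$ is already transitive for the agents in $A$. So the decomposition and the key lemma are exactly the ones the paper uses.

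There is, however, a concrete gap in your modal cases, and it comes from the reading of $Inv$ and $Eve$ that you adopt explicitly, namely $Inv(\chi)=\mx X.(\chi\land[\act]X)$, i.e., invariance along \emph{all} agents' transitions. With $|\act|>1$ the biconditionals you assert then fail. In your ``only if'' direction you claim that $M,t\models[\al]\psi$ iff $M,t\models Inv([\al]\transl{A}{4^{\mu}}{\psi})$ on a transitive model; but if $t\,R_\alb\,u\,R_\al\,v$ with $\neg\psi$ at $v$ and $t$ has no $R_\al$-successors, then $[\al]\psi$ holds vacuously at $t$ while the $Inv$-formula fails, because the obligation $[\al]\transl{A}{4^{\mu}}{\psi}$ is propagated through $R_\alb$ to $u$. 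Symmetrically, in the ``if'' direction the inclusion $\trueset{Eve(\diam{\al}\xi)}_M\subseteq\trueset{\diam{\al}\chi}_{M'}$ fails: the witness produced by $Eve$ may sit at the end of a path that uses other agents' relations, so its $R_\al$-successor need not be an $R_\al^+$-successor of $t$. Your auxiliary reachability-preservation fact does not repair this, since the problem is not which states are reachable but which successors the $[\al]$ and $\diam{\al}$ obligations actually constrain. The induction closes only if the invariance and eventuality are relativized to the agent in question, i.e., $\mx X.([\al]\transl{A}{4^{\mu}}{\psi}\land[\al]X)$ and $\mn X.(\diam{\al}\transl{A}{4^{\mu}}{\psi}\lor\diam{\al}X)$, for which one has exactly $\trueset{\mx X.([\al]\xi\land[\al]X)}_M=\{t\mid v\in\trueset{\xi}_M\text{ for all }v\text{ with }t\,R_\al^+\,v\}=\trueset{[\al]\chi}_{M'}$ under the induction hypothesis $\trueset{\xi}_M=\trueset{\chi}_{M'}$. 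You should fix the agent-relativized reading of the modal clauses (or restrict attention to $|\act|=1$, where the two readings coincide) before running the induction; as written, the central equality your argument rests on is false for more than one agent.
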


\begin{proof}
	If $\transl{A}{4^{\mu}}{\varphi}$ is satisfied in a model $M = (W,R,V)$, let $M' = (W,R^+,V)$, where $R^+_\al$ is the transitive closure of $R_\al$, if $\al \in A$, and $R^+_\al = R_\al$, otherwise.
	It is now not hard to use induction on $\psi$ to show that for every (possibly open) subformula $\psi$ of $\varphi$, for every environment $\rho$,
	$\trueset{\transl{A}{4^{\mu}}{\psi},\rho}_\M = \trueset{\psi,\rho}_{\M'}$.
	The other direction is more straightforward.
\end{proof}

In order to produce a similar translation for symmetric frames, we needed to use a more intricate type of construction. Moreover, 
we only prove the correctness of the following translation for formulas without least-fixed-point operators.
%our proof, for now, only covers formulas that are \mn-free. For those we define: 

\begin{translation}\label{transl:mu_calc_symm}
The operation $\transl{A}{\b^{\mu}}{-}$ is defined 
%to be such that
as
$$\transl{A}{\b^{\mu}}{\varphi} = \varphi \land Inv\big(\stbox{\al} \diam{\al} p \wedge \bigwedge_{\psi \in \subc(\varphi)} (\psi \rightarrow [\al][\al](p \impl \psi))\big)\change{,}$$ \change{where $p$ is a new propositional variable, not occurring in $\varphi$.}
%$\transl{A}{\b^{\mu}}{\varphi} = Inv(\stbox{\al} \diam{\al} p) \wedge f_A^{{\b^{\mu}}}(\varphi)$, where
%\begin{itemize}%{2}
%%	\item $
%%	\transl{A}{\b^{\mu}}{\varphi} = Inv(\stbox{\al} \diam{\al} p) \wedge f_A^{{\b^{\mu}}}(\varphi)$,
%		\item $
%f_A^{{\b^{\mu}}}({\diam{\al}\psi})= \mx X. (\diam{\al} f_A^{{B^{\mu}}}(\psi) \wedge \stbox{\al}\stbox{\al}(p \rightarrow X))$,
%\item $
%f_A^{{\b^{\mu}}}({\stbox{\al}\psi}) = \mx X. (\stbox{\al} f_A^{{B^{\mu}}}(\psi) \wedge \stbox{\al}\stbox{\al} (p \rightarrow X))$,
%
%	\item $f_A^{{\b^{\mu}}}(\psi) = \mx X. (\psi \wedge \stbox{\al}\stbox{\al}(p \rightarrow X))$, 
%	if $\psi$ is a propositional or logical variable,
%	and
%%\item $f_A^{{\b^{\mu}}}(X) = \mx X. (X \wedge \stbox{\al}\stbox{\al}(p \rightarrow X))$, and 
%
%\item $f_A^{{\b^{\mu}}}(-)$ commutes with all other operations.
%\end{itemize}
%where 
%This translation is sound, in the sense that it is satisfiable in general frames iff $\varphi$ satisfiable in reflexive frames\footnote{See proof in Appendix \ref{appendix:alt_trans}}.
\end{translation}
%
%\begin{translation}\label{transl:mu_calc_symm}
%$$ \transl{B^{\mu}}{K^{\mu}}{\varphi} =  Inv\{\stbox{\al} \diam{\al} p\} \bigwedge \varphi [(\psi \wedge \stbox{\al}\stbox{\al}(p \rightarrow \psi))/{\psi}]$$  
%where $p$ is a propositional variable that does not occur in formula $\varphi$. 
%\end{translation}

\begin{theorem}\label{thm:mu_calc_symm_translation}
	Let $\emptyset \neq A\subseteq \act$, 
%	$x$ be one of the frame conditions, 
	and let $\LG_1, \LG_2$ be logics, such that 
	%	Let $\LG_k$ be a logic such that 
	$\LG_1(\al) = \LG_2(\al) + \b$ when $\al \in A$, and $\LG_2(\al)$ otherwise, and $\LG_2(\al)$ at most includes frame conditions $D,T$.
	Then, a  formula $\varphi$ that has no $\mn X$ operators is $\LG_1$-satisfiable if and only if 
	$\transl{A}{\b^{\mu}}{\varphi}$ is $\LG_2$-satisfiable.
%For every formula $\varphi$, $\transl{A}{T^{\mu}}{\varphi}$ is satisfiable in general frames iff $\varphi$ satisfiable in reflexive frames.
\end{theorem}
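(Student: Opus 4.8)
The strategy mirrors the proofs of Theorems~\ref{thm:mu_calc_reflex_translation}--\ref{thm:mu_calc_transitive_transl}: relate $\LG_1$- and $\LG_2$-models by closing (resp.\ unfolding) the accessibility relation, and transfer the truth of subformulas by induction. The difference is that $\transl{A}{\b^{\mu}}{\varphi}$ is not a recursive substitution but has the shape $\varphi \wedge Inv(\theta)$, with $\theta = \bigwedge_{\al \in A}\big(\stbox{\al} \diam{\al} p \wedge \bigwedge_{\psi \in \subc(\varphi)}(\psi \impl [\al][\al](p \impl \psi))\big)$, so the fresh variable $p$ and the invariant $\theta$ must be handled explicitly on both sides.

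For the ``if'' direction I would argue as follows. Suppose $\M, s_0 \models \transl{A}{\b^{\mu}}{\varphi}$ for an $\LG_2$-model $\M = (W,R,V)$. Passing to the submodel generated by $s_0$, we may assume $\theta$ holds at \emph{every} state of $\M$. Let $\M' = (W,R',V)$ with $R'_\al = R_\al \cup R_\al^{-1}$ for $\al \in A$ and $R'_\al = R_\al$ otherwise; then $\M'$ is symmetric on $A$, and since the only conditions $\LG_2$ may impose on an agent in $A$ are $D$ and $T$, which are preserved under this closure (Lemma~\ref{lem:conditionsarepreserved}), $\M'$ is an $\LG_1$-model. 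I would then prove, by induction on $\psi \in \subc(\varphi)$, that $\trueset{\psi,\rho}_\M = \trueset{\psi,\rho}_{\M'}$ for every environment $\rho$. The only nontrivial cases are $\stbox{\al}\psi$ and $\diam{\al}\psi$ with $\al \in A$, where the two relations differ, and here the invariant does the work: if $u$ is a ``new'' successor of $t$ (i.e.\ $u R_\al t$ in $\M$), then $\M,u\models\stbox{\al}\diam{\al}p$ yields an \emph{old} successor $r$ of $t$ with $p\in V(r)$, and the conjuncts of $\theta$ at $u$ for $\psi$ and for $\neg\psi$ (both in $\subc(\varphi)$) then force $\psi$ to hold at $u$ exactly when it holds at $r$; hence adding the reverse edges changes the truth of no $\stbox{\al}$- or $\diam{\al}$-subformula. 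The boolean and propositional cases are immediate, and in the cases $\mn X.\psi$ and $\mx X.\psi$ the inductive hypothesis for the body makes the two monotone operators on $2^W$ coincide, so their least/greatest fixed points coincide. Since $\varphi\in\subc(\varphi)$, we obtain $\M',s_0\models\varphi$, so $\varphi$ is $\LG_1$-satisfiable.

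The ``only if'' direction is the substantial part, and it is where the absence of $\mn X$ operators is used. Given a symmetric (on $A$) $\LG_1$-model $\M$ with $\M,s_0\models\varphi$, one must produce an $\LG_2$-model of $\varphi\wedge Inv(\theta)$. A generic symmetric model does not support a valuation of $p$ validating $\theta$: the second conjunct forces every $p$-state reachable in two $\al$-steps from a state where some $\psi\in\subc(\varphi)$ holds to be $\subc(\varphi)$-equivalent to it, whereas the first conjunct forces $p$-states to be abundant among second neighbours. The plan is therefore to first unfold $\M$ into a model $\M^*$ that (i) is bisimilar to $\M$ at the root, so that still $\M^*\models\varphi$, and (ii) leaves room to mark $p$ consistently: each node of $\M^*$ entered from its parent along an $\al$-edge is equipped with a dedicated $\al$-successor --- a ``bounce copy'' which $\theta$ guarantees is $\subc(\varphi)$-equivalent to the parent --- and exactly these bounce copies receive $p$. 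Reflexivity of $\LG_2(\al)$ is accommodated by self-loops, and seriality is automatic from symmetry of $\M$. One then checks both conjuncts of $\theta$ at every node of $\M^*$. The role of $\mn X$-freeness is that this unfolding must be pruned so that the only $p$-marked states two $\al$-steps away from a node are its own bounce copies; such pruning discards transitions and hence preserves only the safety behaviour captured by the max-fragment together with the modalities, not arbitrary least fixed points.

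The main obstacle is precisely this ``only if'' construction: the two conjuncts of $\theta$ pull in opposite directions, forcing $\M^*$ to have a rigid local shape around every $\al$-edge, and producing an explicit definition of $\M^*$ and of the valuation of $p$ that simultaneously keeps $\M^*$ bisimilar to $\M$, makes $\stbox{\al}\diam{\al}p$ true at every state, and makes $\psi\impl[\al][\al](p\impl\psi)$ true for every $\psi\in\subc(\varphi)$ --- and isolating exactly where $\mn X$-freeness becomes unavoidable --- is the technical heart of the argument. The ``if'' direction, by contrast, reduces after the generated-submodel step to a single routine induction.
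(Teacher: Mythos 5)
Your overall architecture is the natural one and matches what the paper hints at: the direction from a model of $\transl{A}{\b^{\mu}}{\varphi}$ to a model of $\varphi$ goes through the symmetric closure plus an induction that uses $\stbox{\al}\diam{\al}p$ together with the conjuncts $\psi \impl [\al][\al](p\impl\psi)$ to show that each new (reversed) successor agrees on $\subc(\varphi)$ with some old, $p$-marked successor; the converse direction unfolds the symmetric model and marks the ``bounce'' copies with $p$. Both ingredients are right, and your modal-case argument in the closure direction is correct as far as it goes for closed formulas.

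The genuine gap is where you locate the need for $\mn X$-freeness, and it infects the step you declare ``routine.'' In the closure direction you dismiss the fixed-point cases by saying the inductive hypothesis for the body makes the two operators on $2^W$ coincide. But to run that induction you need $\trueset{\psi,\rho}_{\M} = \trueset{\psi,\rho}_{\M'}$ for \emph{arbitrary} environments $\rho$, whereas your modal-case argument only transfers truth of the specific formulas occurring as conjuncts of the invariant --- in effect the closures $cl(\psi)$, i.e.\ the subformulas evaluated at the canonical environment $X\mapsto\trueset{cl(X)}_{\M}$. For an arbitrary $S=\rho(X)$ the invariant says nothing about whether a new successor and its old witness agree on membership in $S$, so the induction does not close at the binders. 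What survives is a one-directional post-fixed-point argument at the canonical environment, which works for $\mx X$ but not for $\mn X$; the restriction in the statement therefore lives in the closure direction, not in the unfolding direction where you place it. (Remark \ref{remark:finitemodelB5} corroborates this: the closure direction produces a finite model, so it cannot be sound for any fragment of $\b^{\mu}$ lacking the finite model property, and that is exactly the situation that is unresolved when $\mn X$ is allowed.) Your stated mechanism for the unfolding direction is also not right: a tree unfolding is a bisimulation and needs no pruning, and ``pruning transitions preserves the max-fragment'' is false in any case, since max-fragment formulas contain diamonds. Finally, when $\LG_2(\al)$ contains $T$ the unfolded model must carry self-loops, which make one-step neighbours two-step reachable and threaten the conjunct $\psi\impl[\al][\al](p\impl\psi)$; your sketch does not address this.
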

%\begin{proof}
%The proof can be found in Appendix \ref{appendix:proof_mu_calc_summ_translation}
%\end{proof}
%The case of transitivity is more straightforward:

\begin{remark}\label{remark:finitemodelB5}
	A translation for euclidean frames and for the full syntax on symmetric frames would need different approaches.
	 D'Agostino and Lenzi show in \cite{Dagostino2013S5} that $\sv_2^\mu$ does not have a finite model property, and their result can be easily extended to any logic $\LG$ with fixed-point operators, where there are at least two distinct agents $\al,\alb$, such that $\LG(\al)$ and $\LG(\alb)$ have constraint $B$ or $5$.
	 Therefore, 
	 as our constructions for the translations to $\k^\mu_k$ guarantee the finite model property to the corresponding logics, they do not apply to multimodal logics with $B$ or $5$.
%	 We note that 
%	 not all modal logics with recursion have a finite model property.
%	 Specifically,
\end{remark}
\subsection{Embedding $\k_n^\mu$}
% for Single-agent Logics}
In this subsection, we present translations from logics with fewer frame conditions to ones with more conditions.
This will allow us to prove \EXP-completeness in the following subsection.
%It is known that the satisfiability problem for the full $\mu$-Calculus is \EXP -hard. Reducing this problem to the satisfiability problem of the $\mu$-calculus over frames of some property $X$, would imply that the relevant logic is also \EXP -hard.
	Let
%	Let $\varphi$ be a formula and let 
$p,q$ be distinguished propositional variables that do not appear in 
our formulas.
We let $\vec{p}$ range over $p$, $\neg p$, $p \wedge q$, $p \wedge \neg q$,
and
$\neg p\wedge q$%
%, and $\neg p \wedge \neg q$
.

\begin{definition}[function $next$]
	$next(p \wedge q) = p \wedge \neg q$, $next(p\wedge \neg q) = \neg p \wedge q$, and $next(\neg p \wedge q) = p \wedge q$; and $next(p) = \neg p$ and $next(\neg p) = p$. 
\end{definition}

%\begin{translation} 
%%	For a formula $\varphi$ the translation we propose is $\transl{k}{d}{\varphi}$, defined inductively over the modalities of $\varphi$, while other cases are left identical: 
%The operation $\transl{A}{-D^\mu}{-}$ on formulas is defined such that:
%	\begin{itemize}
%		\item $\transl{A}{-D^\mu}{\diam{\al} \psi }= \diam{\al} (p \wedge \psi)$, if $\al \in A$;
%		\item $\transl{A}{-D^\mu}{[\al] \psi }=[\al](p \rightarrow \psi)$, if $\al \in A$;
%		\item $\transl{A}{-D^\mu}{-}$ commutes with all other operations.
%	\end{itemize}
%\end{translation}

We use a uniform translation from $\k^\mu_k$ to any logic with a combination of conditions $D,T,B$.

\begin{translation}\label{transl:lower}
%	$\varphi$.
%Assume a formula $\varphi$. The transformation we propose for
The operation 
$\transl{A}{\k^\mu}{-}$
%$\transl{A}{-T^\mu}{-}=\transl{A}{-B^\mu}{-}$ 
on formulas is defined such that:
\begin{itemize}
	\item $\transl{A}{\k^\mu}{\diam{\al} \psi} = \bigwedge_{\vec{p}}(\vec{p} \rightarrow \diam{\al} (next(\vec{p}) \wedge \transl{A}{\k^\mu}{\psi}))
%	\wedge (\neg p \wedge \diam{\al} (p \wedge \transl{A}{\k^\mu}{\psi}))
	$, if 
%	$\varphi = \diam{\al} \psi$ for some  
	$\al \in A$;
	\item $\transl{A}{\k^\mu}{[\al] \psi}=  \bigwedge_{\vec{p}} (\vec{p} \rightarrow [\al] (next(\vec{p}) \rightarrow \transl{A}{\k^\mu}{\psi}))
%	\wedge (\neg p \rightarrow [\al] (p \rightarrow \transl{A}{\k^\mu}{\psi}))
	$, if $\al \in A$;
	\item $\transl{A}{\k^\mu}{-}$ commutes with all other operations.
\end{itemize}
\end{translation}

We note that there are simpler translations for the cases of logics with only $D$ ot $T$ as a constraint, but the $\transl{A}{\k^\mu}{-}$ is uniform for all the logics that we consider in this subsection.

\begin{theorem}\label{thm:lower-translations}
	Let $\emptyset \neq A\subseteq \act$, $|\act|=k$, 
%	and $x \in \{\d,\t,\b\}$, 
	%	$x$ be one of the frame conditions, 
	and let $\LG$ be 
%	a logic
%	that includes only frame conditions from $\d,\t,\b$.
%	, 
	such that 
	%	Let $\LG_k$ be a logic such that 
	$\LG(\al)$ 
	includes only frame conditions from $\d,\t,\b$
	when $\al \in A$, and $\LG(\al)=\k$ otherwise.
%	, and $\LG_2(\al)$ at most includes frame conditions preceding $x$ in the closure-preserving ordering.
	Then, $\varphi$ is $\k_k^{\mu}$-satisfiable if and only if 
%	$\transl{A}{-x^{\mu}}{\varphi}$ 
	$\transl{A}{\k^{\mu}}{\varphi}$
	is $\LG$-satisfiable.
%$\transl{k}{t}{\varphi}$ (respectively $\transl{d}{t}{\varphi}$, $\transl{b}{t}{\varphi}$) satisfiable in reflexive (respectively serial, symmetric) frames iff $\varphi$ satisfiable in general frames.
\end{theorem}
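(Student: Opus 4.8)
The plan is to prove the equivalence in two directions, the ``if'' direction being routine and the ``only if'' direction carrying the real content. For the ``if'' direction, I would suppose $M=(W,R,V)$ is an $\LG$-model with $M,s_0\models\transl{A}{\k^{\mu}}{\varphi}$, and define a $\k_k^{\mu}$-model $M'=(W,R',V)$ on the same carrier and valuation by keeping every $\al$-transition for $\al\notin A$ and, for $\al\in A$, keeping exactly the \emph{phase-respecting} transitions $(u,v)\in R_\al$ for which there is some $\vec{p}$ with $M,u\models\vec{p}$ and $M,v\models next(\vec{p})$. I would then prove, by induction on the (possibly open) subformulas $\psi$ of $\varphi$ and over all environments $\rho$, that $\trueset{\transl{A}{\k^{\mu}}{\psi},\rho}_M\subseteq\trueset{\psi,\rho}_{M'}$: the propositional, boolean and variable cases are immediate; for $\diam{\al}\psi$ with $\al\in A$ one picks any colour $\vec{p}$ true at the point together with the witness provided by the corresponding conjunct of Translation \ref{transl:lower}, and observes that it is a phase-respecting successor; for $[\al]\psi$ one observes dually that every phase-respecting successor is caught by some conjunct; and for $\mn X$ and $\mx X$ the translation commutes with the operator, so a pointwise inclusion between the two defining functionals lifts to an inclusion of their least (resp.\ greatest) fixed points. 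Evaluating at $s_0$ yields $M',s_0\models\varphi$.

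For the ``only if'' direction, I would start from a $\k_k^{\mu}$-model with a state satisfying $\varphi$ and pass to its tree unraveling, which preserves satisfaction of $\mu$-calculus formulas. I would then decorate this tree with the fresh propositions $p,q$ so as to follow the orbit structure of $next$: colours are assigned so that every agent-$\al$ edge ($\al\in A$) runs from a state of colour $\vec{p}$ to a state of colour $next(\vec{p})$, cycling through an orbit of $next$ with the depth. Since one propositional valuation realises several of the five formulas $\vec{p}$ at once (a state making $p\wedge q$ true also makes $p$ true, and so on), each state must be \emph{split} into enough copies that for every colour it carries it has a successor of the matching $next$-colour carrying the translation of the relevant subformula; this leaves the structure a tree. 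Finally one closes each $R_\al$, $\al\in A$, under the frame conditions of $\LG(\al)$ — adding self-loops when $T$ (or $D$) is present and the reverses of all tree edges when $B$ is present. An induction on subformulas, mirroring the one above, then shows the decorated, closed tree satisfies $\transl{A}{\k^{\mu}}{\varphi}$ at the image of the original state, provided that the newly added transitions are \emph{invisible}, that is, never connect a state of colour $\vec{p}$ to a state of colour $next(\vec{p})$ for any $\vec{p}$ true at the source.

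The hard part will be exactly establishing this invisibility in the presence of $B$. Self-loops are always invisible because $next$ has no fixed point, but symmetry adds the reverse of each tree edge, and that reverse is invisible only if the source colour is not the $next$-image of a colour carried by its own successor; because of the ``shadow'' colours — $p$ sitting inside both $p\wedge q$ and $p\wedge\neg q$, and $\neg p$ inside $\neg p\wedge q$ — a plain two-colouring falls back into the involutive situation $next(next(\vec{p}))=\vec{p}$ and fails. This is what forces the translation to carry all five conjuncts and forces the decoration to use different orbits of $next$ for different agents (and is why, as the paper remarks, strictly simpler translations exist when only $D$ or $T$ occurs); arranging the colouring and the state-splitting so that simultaneously every colour's successor demand is met and every closure-added edge stays invisible is the delicate step. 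The remaining cases are free: if $\al\in A$ and $\LG(\al)=\k$ no transitions are added and the bare decorated tree already works, and for $\al\notin A$ the translation is the identity.
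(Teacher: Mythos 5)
The paper itself defers the details of this proof to the online preprint and records only an outline (one direction builds a new model via a closure, the other ``introduces new states that behave as each original state''), so the comparison below is against that outline plus the mathematics of the translation. Your ``if'' direction --- keep only the phase-respecting $\al$-transitions for $\al\in A$ and push a one-sided inclusion $\trueset{\transl{A}{\k^\mu}{\psi},\rho}_M\subseteq\trueset{\psi,\rho}_{M'}$ through the induction --- is correct and self-contained: every state satisfies $p$ or $\neg p$, so a witnessing colour always exists for the diamond case, and a pointwise inclusion of monotone functionals does lift to their least and greatest fixed points. This is, if anything, more direct than what the paper's remark suggests for that direction. Your ``only if'' direction also matches the paper's outline in spirit (unravel, colour, split states, close under the frame conditions), and its treatment of $D$ and $T$ is sound, since $\vec p\wedge next(\vec p)$ is unsatisfiable and hence self-loops are never visible.

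The gap is in the symmetric case, and it is not merely that you leave the ``delicate step'' open: the invariant your plan rests on --- that every closure-added edge is invisible --- is unattainable, for any colouring and any state-splitting. The translation always carries the conjuncts for $\vec p=p$ and $\vec p=\neg p$, and every state satisfies one of these. So at any state $s$ that must satisfy $\transl{A}{\k^\mu}{\diam{\al}\psi}$ and satisfies, say, $p$, the conjunct $p\rightarrow\diam{\al}(\neg p\wedge\cdots)$ forces a successor $t$ with $t\models\neg p$ (if your orbit-colouring supplies no such successor, this conjunct simply fails); the symmetric closure then adds $(t,s)$, which is visible via $\vec p=\neg p$ because $s\models p=next(\neg p)$. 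This already happens for $\varphi=\diam{\al}\true$. The three-element orbit of $next$ does yield the one one-way-visible pair ($p\wedge q$ to $p\wedge\neg q$, whose reverse is invisible), but it cannot discharge the $p$/$\neg p$ demands, so visible reverse edges are forced by the translation itself. A correct construction must therefore tolerate visible reverse edges and instead guarantee that each one lands on a state that genuinely satisfies the translated box-obligations of its source; a tree plus its symmetric closure cannot do this, because in a tree a box formula at a child constrains its children, not its parent. This is precisely the work that the paper's ``new states that behave as each original state'' must do, and it requires rewiring the copies, not just recolouring them; as written, your argument for agents with condition $B$ does not go through.
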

%\begin{proof}
%The proof of Theorem \ref{thm:lower-translations} can be found in Appendix \ref{appendix:proof_mu_calculus_lower}.
The proof of Theorem \ref{thm:lower-translations} can be found in \cite{CTMLR2022onlinepreprint}.
It is worth noting that the ``if'' direction uses the symmetric closure to construct a new model, while the ``only if'' direction requires the introduction of new \change{states} that behave as each original state in the model.
%\end{proof}

%
%\begin{theorem}
%$\varphi^*$ satisfiable in serial frames iff $\varphi$ satisfiable in general frames.
%\end{theorem}

%\begin{corollary}
%The satisfiability problem for $\mu$-Calculus on reflexive, serial and symmetric frames (denoted $\t^{\mu}, \d^{\mu}, \b^{\mu}$) is \EXP -complete.
%\end{corollary}

%\begin{remark}
%The translation $\transl{k}{b}{\varphi}$ can be easily proven to hold also for the \t, \d logics. Therefore applying it once can also produce the relevant results for all combinations of frame properties.  
%\end{remark}

%!TeX root = draft.tex

\subsection{Complexity results}

%\begin{remark}
	\change{We observe that our translations all result in formulas of size at most linear with respect to the original. The exceptions are Translations \ref{translation:ml} and \ref{transl:mu_calc_symm}, which have a quadratic cost.}
%\end{remark}

\begin{corollary}\label{cor:mu-upper1}
		If  $\LG$ only has frame conditions $D,T$, then its satisfiability problem is \EXP-complete;
		if $\LG$ only has frame conditions $D,T,4$, then its satisfiability problem is in \EXP.
%	The satisfiability problem for $\mu$-Calculus on reflexive and serial 
%%and symmetric 
%frames (denoted $\t^{\mu}, \d^{\mu} 
%%,\b^{\mu}
%$)
%%, and for the box-free $\mu$-Calculus on reflexive frames 
%is \EXP -complete.
\end{corollary}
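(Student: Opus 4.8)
The plan is to place each of these logics between $\k^\mu_k$ and itself by composing the translations developed in this section. For the upper bound I would strip the frame conditions off the agents, one condition at a time, proceeding in the reverse of the fixed closure-preserving order $D,T,B,4,5$, until nothing remains and we land in the ordinary $\mu$-calculus $\k^\mu_k$, whose satisfiability is in \EXP by Theorem \ref{prp:muCalc-sat}. For the lower bound in the $D,T$ case I would go the other way, embedding $\k^\mu_k$ via Theorem \ref{thm:lower-translations}.

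Fix a logic $\LG$ all of whose frame conditions are among $D,T$ (resp.\ $D,T,4$); since $\LG$-satisfiability reduces trivially to satisfiability for the corresponding logic with fixed points, we may assume recursion is available. First I would apply $\transl{A}{4^{\mu}}{-}$ with $A$ the set of agents whose logic includes condition $4$: by Theorem \ref{thm:mu_calc_transitive_transl}, $\varphi$ is $\LG$-satisfiable iff $\transl{A}{4^{\mu}}{\varphi}$ is $\LG'$-satisfiable, where $\LG'$ is $\LG$ with $4$ deleted from every agent; the side condition of that theorem holds because after removing $4$ the surviving conditions lie in $\{D,T\}\subseteq\{D,T,B\}$. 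Next I would apply $\transl{A'}{T^{\mu}}{-}$ with $A'$ the agents whose logic still carries $T$; by Theorem \ref{thm:mu_calc_reflex_translation} this yields an equivalent $\LG''$-satisfiability instance in which every agent carries at most the condition $D$, exactly as that theorem requires. Finally $\transl{A''}{D^{\mu}}{-}$, with $A''$ the agents whose logic has $D$, reduces by Theorem \ref{thm:mu_calc_serial_transl} to $\k^\mu_k$-satisfiability; any step whose agent set is empty is simply omitted. Each of the three is a polynomial-time reduction and polynomial-time reductions compose, so membership in \EXP follows from Theorem \ref{prp:muCalc-sat}.

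For the lower bound when the conditions are among $D,T$, take $A=\act\neq\emptyset$: every $\LG(\al)$ then uses only conditions from $\{D,T\}\subseteq\{D,T,B\}$, so Theorem \ref{thm:lower-translations} makes $\transl{\act}{\k^{\mu}}{-}$ a polynomial-time reduction from $\k^\mu_k$-satisfiability to $\LG$-satisfiability; since the former is \EXP-hard by Theorem \ref{prp:muCalc-sat}, so is the latter, giving \EXP-completeness. (No matching lower bound is claimed in the $D,T,4$ case, since Theorem \ref{thm:lower-translations} does not cover condition $4$.) I expect the only delicate point to be the bookkeeping in the upper bound --- checking at each of the three steps that the current intermediate logic still satisfies the frame-condition hypothesis of the theorem invoked --- which is exactly why the conditions must be peeled off in the reverse of the order $D,T,B,4,5$; everything of substance is already contained in the cited theorems.
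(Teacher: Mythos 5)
Your proposal is correct and follows essentially the same route as the paper, whose proof simply cites Theorems \ref{thm:mu_calc_serial_transl}, \ref{thm:mu_calc_reflex_translation}, \ref{thm:mu_calc_transitive_transl}, and \ref{thm:lower-translations}; you merely spell out the composition order (peeling $4$, then $T$, then $D$, so that each intermediate logic meets the hypotheses of the next translation theorem) and the lower-bound embedding that the paper leaves implicit.
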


\begin{proof}
	Immediately from Theorems \ref{thm:mu_calc_serial_transl}, \ref{thm:mu_calc_reflex_translation}, \ref{thm:mu_calc_transitive_transl}, and \ref{thm:lower-translations}.
\end{proof}

\begin{corollary}\label{cor:mu-upper2}
	If $\LG$ only has frame conditions $D,T,B$, then 
	\begin{enumerate}
		\item 
		$\LG$-satisfiability is \EXP-hard;  and 
		\item 
		the restriction of $\LG$-satisfiability on formulas without $\mn X$ operators is \EXP-complete.
	\end{enumerate}
%	The satisfiability problem for the $\mu$-Calculus on symmetric frames (denoted $\b^{\mu}$) is \EXP -hard. The satisfiability problem for the box-free $\mu$-Calculus on symmetric frames  is in \EXP.
\end{corollary}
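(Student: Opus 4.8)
The plan is to derive both parts by combining the lower-bound result of Proposition \ref{prp:EXP-hard-k2} with the chain of satisfiability-preserving translations established in this section. For part (1), the \EXP-hardness, I would start from Proposition \ref{prp:EXP-hard-k2}, which gives that $\k_k^\mu$-satisfiability is \EXP-hard for $k>1$. I then apply Theorem \ref{thm:lower-translations}: taking $A=\act$ and letting $\LG$ assign to each agent a single-agent logic whose frame conditions are drawn from $\{D,T,B\}$, the translation $\transl{\act}{\k^\mu}{-}$ is a polynomial-time reduction from $\k_k^\mu$-satisfiability to $\LG$-satisfiability. Since the source problem is \EXP-hard and the reduction is polynomial, $\LG$-satisfiability is \EXP-hard. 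One only needs to check that the hypothesis ``$\LG$ only has frame conditions $D,T,B$'' is exactly the requirement of Theorem \ref{thm:lower-translations} (with $A=\act$, so there is no agent assigned plain $\k$, which is fine since $\k$ is the case of no conditions and is subsumed). The quadratic-versus-linear size remark preceding the corollary is irrelevant here: $\transl{A}{\k^\mu}{-}$ is one of the linear-cost translations.

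For part (2), the \EXP-completeness on the $\mn X$-free fragment, hardness is inherited from part (1) once we observe that the reduction of Proposition \ref{prp:EXP-hard-k2} already produces $\mn X$-free formulas: it expresses common knowledge via the greatest fixed point $\mx X.(\varphi \wedge [\act]X)$ and uses no least fixed points, and the translation $\transl{\act}{\k^\mu}{-}$ commutes with $\mx X$ and the Boolean and modal operators, so it does not introduce any $\mn X$. Hence the composed reduction lands inside the $\mn X$-free fragment of $\LG$, giving \EXP-hardness there. For the matching upper bound, I would compose the translations towards $\k_k^\mu$ in the closure-preserving order $D, T, B$ fixed in the paper: apply Theorem \ref{thm:mu_calc_symm_translation} to eliminate $B$ (this is the step that requires the formula to have no $\mn X$ operators, which is exactly our hypothesis, and by Lemma \ref{lem:conditionsarepreserved} / Figure \ref{fig:frame_hierarchy} the symmetric closure preserves $D$ and $T$), then Theorem \ref{thm:mu_calc_reflex_translation} to eliminate $T$, then Theorem \ref{thm:mu_calc_serial_transl} to eliminate $D$, arriving at a $\k_k^\mu$-satisfiability instance; by Theorem \ref{prp:muCalc-sat} this is decidable in \EXP. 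Each translation is polynomial (Translation \ref{transl:mu_calc_symm} being quadratic, the others linear), so the composition is polynomial and the upper bound transfers.

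The main point requiring care — and the only genuine obstacle — is verifying that the $\mn X$-free restriction is respected all the way through the composition, and that the ordering in which the conditions are removed is legitimate. Concretely: (i) Theorem \ref{thm:mu_calc_symm_translation} only preserves satisfiability for $\mn X$-free formulas, so it must be applied \emph{first}, while the input is still $\mn X$-free; one must check that $\transl{A}{\b^{\mu}}{-}$ does not itself create $\mn X$ operators, which is clear since it only conjoins an $Inv(\cdot)$ (a $\mx X$) and Boolean/modal material. (ii) After removing $B$, the remaining logic has conditions among $D,T$ only, so the hypotheses ``$\LG_2(\al)$ at most includes $D,T$'' of Theorem \ref{thm:mu_calc_reflex_translation} and ``$\LG_2(\al)$ at most includes $D$'' of Theorem \ref{thm:mu_calc_serial_transl} are met in turn, using the closure-preservation of Lemma \ref{lem:conditionsarepreserved} to know the earlier conditions survive each later closure. (iii) For the hardness half of (2) one must double-check that Proposition \ref{prp:EXP-hard-k2} as stated can be instantiated so that its witnessing formulas are $\mn X$-free and that $\transl{\act}{\k^\mu}{-}$ preserves $\mn X$-freeness; both hold by inspection of the common-knowledge encoding and of Translation \ref{transl:lower}. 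Everything else is bookkeeping about polynomial-time composability of the reductions, which the paper has already flagged.
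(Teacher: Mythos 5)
Your proposal is correct and takes essentially the same route as the paper, whose proof simply cites Theorems \ref{thm:mu_calc_serial_transl}, \ref{thm:mu_calc_reflex_translation}, \ref{thm:mu_calc_symm_translation} and \ref{thm:lower-translations}: you compose the same translations in the same order (eliminate $B$ first while the input is still $\mn X$-free, then $T$, then $D$) and correctly identify the $\mn X$-free hypothesis of Theorem \ref{thm:mu_calc_symm_translation} as the reason part (2) is only a completeness result on that fragment. The one small fix is that for single-agent $\LG$ the base hardness should be drawn from Theorem \ref{prp:muCalc-sat} or the max-fragment hardness of Theorem \ref{prp:fragmantsMu} (which holds even for $|\act|=1$) rather than Proposition \ref{prp:EXP-hard-k2}, which is stated only for $k>1$.
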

\begin{proof}
Immediately from Theorems \ref{thm:mu_calc_serial_transl}, \ref{thm:mu_calc_reflex_translation}, \ref{thm:mu_calc_symm_translation}, and \ref{thm:lower-translations}.
\end{proof}

\section{Tableaux for $\LG_k^\mu$}
\label{sec:multi}
% !TeX root = draft.tex

We give a sound and complete tableau system for logic $\LG$.
Furthermore, if $\LG$ has a finite model property, then we give terminating conditions for its tableau. 
%the most general case, where each agent is based on its own modal logic.
The system that we give in this section is based on Kozen's tableaux for the $\mu$-calculus \cite{Kozen1983a} and the tableaux of Fitting \cite{fitting1972tableau} and Massacci \cite{Massacci1994} for \ML.
We can use Kozen's finite model theorem \cite{Kozen1983a} to help us ensure
the termination of the tableau for some of these logics.
%, and Lemma \ref{lem:NImodels} to produce fewer tableau prefixes. Lemma \ref{lem:NImodels} is given in Appendix \ref{appendix:proof_tableaux} to assist with the proof of the main theorem there. 
%Since the fixed-points in our formulas can generate a never-ending tableau, we use our translations to prove that we can impose equivalent closure conditions that ensure the termination of a tableau.

\begin{theorem}[\cite{Kozen1983a}]
	\label{thm:smallKmodel}
	There is a computable $\kappa:\nat \to \nat$, such that every $\logicize{K}_k^\mu$-satisfiable formula $\varphi$ is satisfied in a model with at most $\kappa(|\varphi|)$ states.\footnote{The tableau in 
		\cite{Kozen1983a} yields an upper bound of 
%		It is easy to extract an upper bound of 
		$2^{2^{O(n^3)}}$ for $\kappa_0(n)$, 
%		from the tableau in \cite{Kozen1983a}, 
		but that bound is not useful to 
%		extract 
		obtain
		a ``good'' decision procedure. The purpose of this section is not to establish any good upper bound for satisfiability testing, which is done in Section \ref{sec:translations}.}
\end{theorem}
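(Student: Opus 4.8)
The plan is to follow Kozen's tableau construction for the modal $\mu$-calculus \cite{Kozen1983a}. First I would work with the \emph{Fischer--Ladner closure} $\mathit{FL}(\varphi)$: the smallest set of formulas containing $\varphi$ and closed under direct subformulas, under replacing each fixed-point formula $\fix(X)$ (of the form $\mn X.\psi$ or $\mx X.\psi$) by its unfolding $\psi[\fix(X)/X]$, and under taking negations in positive normal form. A routine induction gives $|\mathit{FL}(\varphi)| = O(|\varphi|)$, and every formula that occurs in any tableau for $\varphi$ lies in $\mathit{FL}(\varphi)$; hence every tableau node carries one of at most $2^{O(|\varphi|)}$ finite subsets of $\mathit{FL}(\varphi)$ as its label.

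Next I would set up the tableau rules on such labels: Boolean rules split a node on its $\land$- and $\lor$-members, fixed-point rules replace $\fix(X)$ by its unfolding, and the modal rule, applied to a saturated label $S$ and an agent $\al$, creates for each $\diam{\al}\psi \in S$ a child containing $\psi$ together with every $\chi$ such that $[\al]\chi \in S$. A \emph{pre-model} is a finite graph obtained by exhausting these rules in which no label contains $\false$ or a complementary pair $p,\neg p$. The point that separates the $\mu$-calculus from plain \ML\ is the \emph{fixed-point trace condition}: along every infinite path, on every infinite \emph{trace} of formulas following the rule applications, the $\leq$-greatest recursion variable that is regenerated (unfolded) infinitely often must be a greatest-fixed-point variable; equivalently, no least-fixed-point variable is regenerated forever without being discharged. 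For \textbf{soundness} I would read a Kripke model straight off a trace-satisfying pre-model --- states are the nodes, $\trana$ is the modal-edge relation for $\al$, and $V$ is read from the propositional atoms in each label --- and prove by induction on the well-founded rank that the trace condition imposes on the $\mn$-formulas that every node satisfies every formula of its label; in particular the root satisfies $\varphi$. Pruning the pre-model so that labels do not repeat uselessly along a path bounds the number of states by a computable function $\kappa(|\varphi|)$; Kozen's bookkeeping of which $\mn$-variables have been regenerated since their last witness yields the explicit (if coarse) bound $2^{2^{O(|\varphi|^3)}}$, and any computable bound suffices for our purposes.

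For \textbf{completeness} --- that a satisfiable $\varphi$ admits a trace-satisfying pre-model, and hence a $\kappa(|\varphi|)$-state model --- I would start from a model $\M$ with $\M,s\models\varphi$ and build a tableau by chasing the satisfaction relation: at each saturated node pick the disjunct and the diamond witnesses realised in $\M$, then collapse nodes with equal labels. The trace condition then holds because in $\M$ a least fixed point is the union of its ordinal approximants, so along any infinite trace the approximation ordinal of a regenerated $\mn$-variable would strictly decrease, which is impossible. Combining the two directions with the state bound yields the required $\kappa$. (Alternatively, one could translate $\varphi$ to an alternating parity tree automaton, remove alternation at an exponential cost, and extract a regular model from a nonemptiness witness; this gives a cleaner and smaller bound, but is farther from the tableau-based development of Section~\ref{sec:multi}.)

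The main obstacle is the fixed-point trace condition: pinning down \emph{trace} and \emph{regeneration} precisely, proving that a trace-satisfying tableau genuinely defines a model (the well-foundedness argument for $\mn$-variables, which is the technical core of Kozen's paper), and conversely extracting a trace-satisfying tableau from a model. The closure bound, the Boolean and modal bookkeeping, and reading off the Kripke structure are all routine by comparison.
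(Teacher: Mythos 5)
The paper does not prove this statement at all: it is imported verbatim from Kozen's work, with the footnote merely recording the doubly-exponential bound that his tableau yields. Your sketch is a faithful reconstruction of exactly that tableau argument (Fischer--Ladner closure, trace condition on $\mn$-regeneration, soundness by reading a model off a trace-satisfying pre-model, completeness via ordinal approximants), so it is the same approach the paper relies on, and the $2^{2^{O(|\varphi|^3)}}$ figure matches the footnote. One step would fail as literally written, though: in the completeness direction, ``collapse nodes with equal labels'' is precisely the move that breaks for the full $\mu$-calculus, since identifying two nodes that carry the same subset of the closure but different signatures can create a new infinite trace on which a least-fixed-point variable regenerates forever, so the collapsed structure need not satisfy the trace condition. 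The collapse has to be performed on pairs (label, signature) --- with an argument that the relevant signatures can be bounded, which is where the second exponential in $\kappa$ comes from --- or replaced by the automata-theoretic route you mention parenthetically. You correctly flag the trace condition as the technical core, but the fix is not just ``pinning down the definitions'': the quotienting step itself needs to be signature-aware.
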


\begin{corollary}\label{cor:smallLmodel}
	If $\LG$ only has frame conditions $D, T, 4$, then 
	there is a computable $\kappa:\nat \to \nat$, such that  every $\LG$-satisfiable formula $\varphi$ is satisfied in a model with at most $\kappa(|\varphi|)$ states.
\end{corollary}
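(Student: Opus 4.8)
The plan is to transfer Kozen's finite model theorem (Theorem~\ref{thm:smallKmodel}) along the translations of Subsection~\ref{sec:transl_mu_cal}. Fix $\LG$ with $k=|\act|$ whose agents carry only conditions among $D,T,4$, and let $A_D,A_T,A_4\subseteq\act$ collect the agents to which $D$, $T$, resp.\ $4$ applies. First I would form, from a given formula $\varphi$, the composed translation
$$\varphi' \;:=\; \transl{A_D}{D^\mu}{\transl{A_T}{T^\mu}{\transl{A_4}{4^\mu}{\varphi}}}$$
(omitting any of the three translations whose index set is empty), thereby eliminating $4$, then $T$, then $D$ --- the reverse of the closure-preserving order $D,T,B,4,5$. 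At each stage the intermediate logic stays inside the side condition of the next theorem to be applied: after the $4$-step every agent has conditions among $D,T$, so Theorem~\ref{thm:mu_calc_transitive_transl} applies (its target permits $D,T,B$); after the $T$-step every agent has at most $D$, so Theorem~\ref{thm:mu_calc_reflex_translation} applies; and after the $D$-step only $\k$ remains, so Theorem~\ref{thm:mu_calc_serial_transl} applies with $A_D$ as its distinguished set. Chaining the three equivalences gives: $\varphi$ is $\LG$-satisfiable iff $\varphi'$ is $\k_k^\mu$-satisfiable. Each translation is polynomial-time computable, so there is a computable (indeed polynomial) $b:\nat\to\nat$ with $|\varphi'|\le b(|\varphi|)$.

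Now assume $\varphi$ is $\LG$-satisfiable, so $\varphi'$ is $\k_k^\mu$-satisfiable. By Theorem~\ref{thm:smallKmodel} there is a computable $\kappa_0$ and a $\k_k^\mu$-model $M_0$ with at most $\kappa_0(|\varphi'|)$ states and a state $s$ with $M_0,s\models\varphi'$. I would then run the satisfiability-reflecting directions of the three translation theorems backwards on $M_0$, in the order $D$, then $T$, then $4$: first pass to the submodel $M_1$ generated by $s$ (the conjunct $Inv(\bigwedge_{\al\in A_D}\diam{\al}\true)$ of $\transl{A_D}{D^\mu}{-}$ makes every state of $M_1$ have the required $A_D$-successors inside $M_1$, and $\mu$-calculus truth at $s$ is preserved under taking the generated submodel); then take the reflexive closure $M_2$ of the $A_T$-relations of $M_1$; then the transitive closure $M_3$ of the $A_4$-relations of $M_2$. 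By the model constructions in the proofs of Theorems~\ref{thm:mu_calc_serial_transl}, \ref{thm:mu_calc_reflex_translation} and \ref{thm:mu_calc_transitive_transl}, these satisfy at $s$, respectively, the formulas $\transl{A_T}{T^\mu}{\transl{A_4}{4^\mu}{\varphi}}$, then $\transl{A_4}{4^\mu}{\varphi}$, then $\varphi$; in particular $M_3,s\models\varphi$. Crucially, none of these steps increases the number of states: $M_1$ is a submodel of $M_0$, and reflexive and transitive closure leave the state set untouched. Hence $M_3$ has at most $\kappa_0(|\varphi'|)\le\kappa_0(b(|\varphi|))$ states, and setting $\kappa(n):=\kappa_0(b(n))$ --- computable, as $\kappa_0$ and $b$ are --- finishes the proof, once we know $M_3$ is a $\LG$-model.

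That last point is where I expect the only real care to be needed: each backward closure must preserve the frame conditions established at the earlier stages. Here Lemma~\ref{lem:conditionsarepreserved}, summarised by Figure~\ref{fig:frame_hierarchy}, does the work. The only closures taken are under $T$ (on the $A_T$-relations) and under $4$ (on the $A_4$-relations); each preserves $D$ and $T$ --- the arrows $D\to T$, $D\to 4$, $T\to 4$ are present --- and the forbidden pairs $(4,B),(5,T),(5,B)$ never occur since $\LG$ has neither $B$ nor $5$. Besides, a closure only enlarges a relation, so it cannot destroy seriality, and since each closure touches only the relations of $A_T$, resp.\ $A_4$, it does not affect any other agent's relation, hence condition. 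Therefore $M_3$ meets every frame constraint of $\LG$, and the argument is complete. Note that it invokes none of the quadratic translations and no machinery specific to $B$ or $5$.
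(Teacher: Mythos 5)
Your proof is correct and follows exactly the route the paper intends: the paper's own proof of this corollary is a one-line appeal to Theorems \ref{thm:smallKmodel}, \ref{thm:mu_calc_serial_transl}, \ref{thm:mu_calc_reflex_translation}, \ref{thm:mu_calc_transitive_transl} and Lemma \ref{lem:conditionsarepreserved}, and what you have written is precisely the unfolding of that chain. The details you add beyond the paper's bare citation --- composing the translations in the reverse of the closure-preserving order, observing that generated submodels and relational closures do not increase the state count, and checking via Lemma \ref{lem:conditionsarepreserved} that none of the forbidden closure pairs arises since $B$ and $5$ are absent --- are exactly the ones the paper leaves implicit.
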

\begin{proof}
	Immediately, from Theorems \ref{thm:smallKmodel}, \ref{thm:mu_calc_serial_transl}, \ref{thm:mu_calc_reflex_translation}, and \ref{thm:mu_calc_transitive_transl}, and Lemma \ref{lem:conditionsarepreserved}.
\end{proof}

\begin{remark}
	We note that not all modal logics with recursion have a finite model property -- see Remark \ref{remark:finitemodelB5}.
\end{remark}

\change{Intuitively, a tableau attempts to build a model that satisfies the given formula. 
When it needs to consider two possible cases, it branches, and thus it may generate several branches. Each branch that satisfies certain consistency conditions, which we define below, represents a corresponding model.}

Our tableaux use \emph{prefixed formulas}, that is, formulas of the form $\sigma~\varphi$, where $\sigma \in (\act\times L)^*$ and $\varphi \in L$; $\sigma$ is the prefix of $\varphi$ in that case\change{, and we say that $\varphi$ is prefixed by $\sigma$.
We note that we separate the elements of $\sigma$ with a dot.}
We say that the prefix $\sigma$ is $\al$-flat when $\al$ has axiom $5$ and $\sigma = \sigma'\nStt{\al}{\psi}$ for some $\psi$.
%We assume that in our formulas, each recursion variable $X$ appears in a unique fixed-point formula $\fix(X)$, which is either of the form $\mn X.\varphi$ or $\mx X.\varphi$.
\change{Each prefix possibly represents a state in a corresponding model, and a prefixed formula $\sigma~\varphi$ declares that $\varphi$ is satisfied in the state represented by $\sigma$. As we will see below, the prefixes from $(\act\times L)^*$ allow us to keep track of the diamond formula that generates a prefix through the tableau rules. For 
%	the cases of 
	agents with condition $5$, this allows us to restrict the generation of new prefixes and avoid certain redundancies, due to the similarity of euclidean binary relations to equivalence relations \cite{nagle_thomason_1985,Halpern2007Characterizing}.}

\begin{table}[]
	\centering 
	\begin{tabular}{c c c c } 
		%\begin{prooftree}
		\AxiomC{$\sigma~\pi X.\varphi$}
		\RightLabel{(\textsf{fix})}
		\UnaryInfC{$\sigma~\varphi$}
		%\end{prooftree}
		\DisplayProof &
		%\begin{prooftree}
		\AxiomC{$\sigma~X$}
		\RightLabel{(\textsf{X})}
		\UnaryInfC{$\sigma~\fix(X)$}
		%\end{prooftree}
		\DisplayProof 
		&
%		%\begin{prooftree}
%		\AxiomC{$\sigma~\mn X.\varphi$}
%		\RightLabel{(\textsf{min})}
%		\UnaryInfC{$\sigma~\varphi$}
%		%\end{prooftree}
%		\DisplayProof &
		%\begin{prooftree}
		\AxiomC{$\sigma~\varphi \lor \psi$}
		\RightLabel{(\textsf{or})}
		\UnaryInfC{$\sigma~\varphi\mid\sigma~\psi$}
		%\end{prooftree}
		\DisplayProof &		
		%\begin{prooftree}
		\AxiomC{$\sigma~\varphi \land \psi$}
		\RightLabel{(\textsf{and})}
		\UnaryInfC{$\sigma~\varphi$}
		\noLine
		\UnaryInfC{$\sigma~\psi$}
		%\end{prooftree}
		\DisplayProof 
		\\[4ex]
%	\end{tabular}
%\end{table}
%
%\begin{table}[h!]
%	\centering 
%\begin{tabular}{c c c c c} 
	%\begin{prooftree}
	\AxiomC{$\sigma~[\al]\varphi$}
	\RightLabel{(\textsf{B})}
	\UnaryInfC{$\sigma\nStt{\al}{\psi}~\varphi$}
	%\end{prooftree}
	\DisplayProof~&
	%\begin{prooftree}
	\AxiomC{$\sigma~\diam{\al}\varphi$}
	\RightLabel{(\textsf{D})}
	\UnaryInfC{$\sigma\nStt{\al}{\varphi}~\varphi$}
	%\end{prooftree}
	\DisplayProof~&
	%\begin{prooftree}
	\AxiomC{$\sigma~[\al]\varphi$}
	\RightLabel{(\textsf{d})}
	\UnaryInfC{$\sigma\nStt{\al}{\varphi}~\varphi$}
	%\end{prooftree}
	\DisplayProof~&
	%\begin{prooftree}
	\AxiomC{$\sigma~[\al]\varphi$}
	\RightLabel{(\textsf{4})}
	\UnaryInfC{$\sigma\nStt{\al}{\psi}~[\al]\varphi$}
	%\end{prooftree}
	\DisplayProof
	\\
\end{tabular}
%\end{table}
\\[2ex]
\noindent
where, for rules (\textsf{B}) and (\textsf{4}), 
\change{$\sigma\nStt{\al}{\psi}$} has already appeared in the branch;
and for (\textsf{D}), $\sigma$ is not $\al$-flat.
%Furthermore, [insert logic conditions here, depending on the formalism]
%\begin{description}
%	\item[
%(\textsf{B}),] 
%%and (\textsf{nD}), 
%$\sigma\nStt{\al}{\varphi}$ has already appeared in the branch; 
%\item[
%(\textsf{D}),]
%$\sigma\nStt{\al}{\varphi}$ is a new prefix, and $\sigma$ is not $\al$-flat.
%\end{description}
\\[2ex]%
%\begin{table}[h]
%	\centering 
\noindent
\begin{tabular}{c c c l l} 
%\begin{prooftree}
	\AxiomC{$\sigma\nStt{\al}{\psi}~[\al]\varphi$}
	\RightLabel{(\textsf{B5})}
	\UnaryInfC{$\sigma~[\al]\varphi$}
%\end{prooftree}
\DisplayProof &
%\begin{prooftree}
\AxiomC{$\sigma\nStt{\al}{\psi}~\diam{\al}\varphi$}
\RightLabel{(\textsf{D5})}
\UnaryInfC{$\sigma\nStt{\al}{\psi}\nStt{\al}{\varphi}~\varphi$}
%\end{prooftree}
\DisplayProof &
%\begin{prooftree}
\AxiomC{$\sigma\nStt{\al}{\psi}~[\al]\varphi$}
\RightLabel{(\textsf{b})}
\UnaryInfC{$\sigma~\varphi$}
%\end{prooftree}
\DisplayProof &
%\begin{prooftree}
\AxiomC{$\sigma~[\al]\varphi$}
\RightLabel{(\textsf{t})}
\UnaryInfC{$\sigma~\varphi$}
%\end{prooftree}
\DisplayProof
\\[4ex]
%\begin{prooftree}
\AxiomC{$\sigma\nStt{\al}{\psi}~[\al]\varphi$}
\RightLabel{(\textsf{B55})}
\UnaryInfC{$\sigma\nStt{\al}{\psi'}~[\al]\varphi$}
%\end{prooftree}
\DisplayProof &
%\begin{prooftree}
\AxiomC{$\sigma\nStt{\al}{\psi}\nStt{\al}{\psi'}~\diam{\al}\varphi$}
\RightLabel{(\textsf{D55})}
\UnaryInfC{$\sigma\nStt{\al}{\psi}\nStt{\al}{\varphi}~\varphi$}
%\end{prooftree}
\DisplayProof  & 
%\begin{prooftree}
\AxiomC{$\sigma\nStt{\al}{\psi}~[\al]\varphi$}
\RightLabel{(\textsf{b4})}
\UnaryInfC{$\sigma~[\al]\varphi$}
%\end{prooftree}
\DisplayProof
%&~
%%\begin{prooftree}
%\AxiomC{$\sigma\nStt{\al}{\varphi}\diam{\psi}~[\al]\varphi$}
%\RightLabel{(\textsf{B5ff})}
%\UnaryInfC{$\sigma\nStt{\al}{\varphi}~[\al]\varphi$}
%%\end{prooftree}
%\DisplayProof
	\\
\end{tabular}
%\end{table}
\\[2ex]
\noindent
where, 
for rule (\textsf{B55}),  
$\sigma\nStt{\al}{\psi'}$ has already appeared in the branch;
for rule (\textsf{D5}),  $\sigma$ is not $\al$-flat, and 
$\sigma~\diam{\al}\varphi$ does not appear in the branch;
%for rule (\textsf{B55}),  
%$\sigma\nStt{\al}{\psi'}$ has already appeared in the branch;
for rule (\textsf{D55}),  
$\sigma~\diam{\al}\varphi$ does not appear in the branch.
%
%$\sigma\nStt{\al}{\varphi}$ has already appeared in the branch;
%and for (\textsf{D}), $\sigma$ is not $\al$-flat.
%\\[2ex]
%\begin{tabular}{l } 
%	%\begin{prooftree}
%	\AxiomC{$\sigma\nStt{\al}{\psi}~[\al]\varphi$}
%	\RightLabel{(\textsf{b})}
%	\UnaryInfC{$\sigma~\varphi$}
%	%\end{prooftree}
%	\DisplayProof %&
%\end{tabular}
\caption{The tableau rules for $\logicize{L}=\logicize{L}^\mu_n$}
\label{tab:tableau}
\end{table}
The tableau rules that we use appear in Table \ref{tab:tableau}. These include  fixed-point and propositional rules, as well as rules that deal with modalities. Depending on the logic that each agent $\al$ is based on, a different set of rules applies for $\al$: for rule (\textsf{d}), $\logicize{L}(\al)$ must have condition $D$; for rule (\textsf{t}), $\logicize{L}(\al)$ must have condition $T$; for rule (\textsf{4}), $\logicize{L}(\al)$ must have condition $4$; for rule (\textsf{B5}), (\textsf{D5}), and (\textsf{D55}), $\logicize{L}(\al)$ must have condition $5$; for (\textsf{b}) $\logicize{L}(\al)$ must have condition $B$; and for (\textsf{b4}) $\logicize{L}(\al)$ must have both $B$ and $4$.
Rule (\textsf{or}) is the only rule that splits the current tableau branch into two.
A tableau branch is propositionally closed when $\sigma~\false$ or both $\sigma~p$ and $\sigma~\neg p$ appear in the branch for some prefix $\sigma$.
For each prefix $\sigma$ that appears in a \change{fixed} tableau \change{branch}, let $\form(\sigma)$ be the set of formulas prefixed by $\sigma$ \change{in that branch}. 
We use the notation $\sigma \prec \sigma'$ to mean that $\sigma'=\sigma.\sigma''$ for some $\sigma''$, 
in which case
%and we say that 
$\sigma$ is an ancestor of $\sigma'$.
%We also use
%the notation $\sigma \sim \sigma'$ to mean that $\sigma=\sigma_1\nStt{\al}{\psi_1}$, $\sigma=\sigma_2\nStt{\al}{\psi_2}$ for some $\sigma_1,\sigma_2,\al$, and $\form(\sigma)=\form(\sigma')$;
%and the notation $\sigma \equiv \sigma'$ to mean that $\sigma \sim \sigma'$ and $\sigma \prec \sigma'$, in which case we say that $\sigma$ and $\sigma'$ are equivalent.

%If $\fix(X)$ is a least-fixed-point formula, then $X$ is called a least-fixed-point variable.
%We can define a partial order on fixed-point variables, such that 
%$X\leq Y$ iff $\fix(X)$ is a subformula of $\fix(Y)$, and $X < Y$ when $X \leq Y$ and $X \neq Y$.
We define the relation $\xrightarrow{X}$ on prefixed formulas in a tableau \change{branch} as $\chi_1 \xrightarrow{X} \chi_2$, if $\frac{\chi_1}{\chi_2}$ is a tableau rule and $\chi_1$ is not of the form $\sigma~Y$, where $X<Y$; then, $\xrightarrow{X}^+$ is the transitive closure of $\xrightarrow{X}$ and $\xrightarrow{X}^*$ is its reflexive and transitive closure.
We can also extend this relation to prefixes, so that $\sigma \xrightarrow{X} \sigma'$, if and only if $\sigma~\psi \xrightarrow{X} \sigma'~\psi'$, for some $\psi\in\form(\sigma)$ and $\psi'\in\form(\sigma')$.
%We mark a prefixed formula $\sigma~\varphi$ in a branch with variable $X$ when $\sigma~\varphi \xrightarrow{X}^* \sigma'~X$.
%The colour of a 
%,
%or a tableau rule produces a prefixed formula in the branch that is not of the form $\sigma'~\fix(X)$ and is marked by $X$.
If in a branch
there is a
%n infinite 
$\xrightarrow{X}$-sequence where $X$ is a least fixed-point and appears infinitely often,
%, $\sigma \equiv \sigma'$ and every $\sigma \prec \sigma''\prec \sigma'$ has a formula marked with $X$ and $X$ appears prefixed with such a $\sigma''$, where $\fix(X)$ is a least-fixed-point formula, 
%and $\form(\sigma)=\form(\sigma')$ 
then the branch is called fixed-point-closed.
A branch is closed when it is either fixed-point-closed or propositionally closed; if it is not closed, then it is called open.

Now, assume that there is a 
$\kappa:\nat \to \nat$, such that  every $\LG$-satisfiable formula $\varphi$ is satisfied in a model with at most $\kappa(|\varphi|)$ states.
An open tableau branch is called (\emph{resp. locally}) \emph{maximal} when 
%it is closed, or 
all tableau rules (\emph{resp. the tableau rules that do not produce new prefixes}) have been applied.
A branch is called \emph{sufficient} for $\varphi$ when it is locally maximal and for every $\sigma~\psi$ in the branch, for which a rule can be applied and has not been applied to $\sigma~\psi$, 
$|\sigma| > |\act|\cdot\kappa(|\varphi|)^{|\varphi|^2}\cdot 2^{2|\varphi|+1}$.
%$|\sigma| > 2^{|\varphi|^2}\cdot |\act|^2$.
%$|\sigma| > 2^{|\varphi|}\cdot |\varphi|^m\cdot |\act|$, where $m$ is the number of least-fixed-point subformulas of $\varphi$.
%
A tableau is called maximal when all of its open branches are maximal, and closed when all of its branches are closed.
It is called sufficiently closed for $\varphi$ if it is propositionally closed, or for some least fixed-point variable $X$, it 
has a $\xrightarrow{X}$-path, where $X$ appears at least $\kappa(|\varphi|)+1$ times.
%is sufficient and for every disjoint pair of prefixes $\sigma_1\prec\sigma_2$, if $\sigma_1 \equiv \sigma_2$, then there is a least-fixed-point variable $X$, such that $\sigma_1 \xrightarrow{X}^* \sigma_2$.
%
A sufficient branch for $\varphi$ that is not sufficiently closed is called sufficiently open for $\varphi$.

A tableau for $\varphi$ starts from $\varepsilon~\varphi$ and is built using the tableau rules of Table \ref{tab:tableau}.
A tableau proof for $\varphi$ is a closed tableau for the negation of $\varphi$.

\begin{theorem}[Soundness, Completeness, and Termination of $\LG_k^\mu$-Tableaux]
	\label{thm:tableaux}
	From the following, the first two are equivalent for any formula $\varphi\in L$ and any logic $\LG$. Furthermore, if 
	there is a 
	$\kappa:\nat \to \nat$, such that  every $\LG$-satisfiable formula $\varphi$ is satisfied in a model with at most $\kappa(|\varphi|)$ states, then all the following are equivallent.
	\begin{enumerate}
	\itemsep0em 
		\item $\varphi$ has a maximal $\logicize{L}$-tableau with an open branch;
		\item $\varphi$ is $\logicize{L}$-satisfiable; and
		\item $\varphi$ has an $\logicize{L}$-tableau with a sufficiently open branch for $\varphi$.
	\end{enumerate}
\end{theorem}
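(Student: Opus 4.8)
The plan is to prove the three equivalences by establishing the implications $2 \Rightarrow 1$, $1 \Rightarrow 2$, and then, under the finite-model hypothesis, $2 \Rightarrow 3$ and $3 \Rightarrow 2$ (the latter being essentially free since $3 \Rightarrow 1$ follows from the fact that a sufficiently open branch can be extended to a maximal open branch, or $3 \Rightarrow 2$ directly by the model-construction argument). The heart of the matter is a \emph{soundness} direction and a \emph{completeness} direction, and a separate \emph{termination} argument that bounds how large a tableau needs to be.

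For completeness ($2 \Rightarrow 1$), I would start from an $\LG$-model $\M$ with $\M, s_0 \models \varphi$ and build a maximal tableau branch by interpreting each prefix $\sigma$ as a state $w_\sigma$ of $\M$, maintaining the invariant that whenever $\sigma~\psi$ appears in the branch, $\M, w_\sigma \models \psi$. The base case sets $w_\varepsilon = s_0$. For the propositional and fixed-point rules this invariant is preserved by the semantics in Table \ref{table:semantics} (using that $\trueset{\pi X.\varphi} = \trueset{\varphi[\fix(X)/X]}$ for $\pi \in \{\mu,\nu\}$, i.e., the fixed-point unfolding is sound); for rule \textsf{(or)} we choose the disjunct that holds in $w_\sigma$. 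For the modal rules, rule \textsf{(D)} creates a fresh successor witnessing a diamond, and the various closure rules \textsf{(d)}, \textsf{(t)}, \textsf{(b)}, \textsf{(4)}, \textsf{(B5)}, etc., are each sound precisely because $\LG(\al)$ satisfies the corresponding frame condition — this is where the side conditions on the rules ("$\LG(\al)$ must have condition $D$", etc.) are used. The constructed branch is propositionally open because each $\form(\sigma) \subseteq \{\psi : \M, w_\sigma \models \psi\}$ is satisfiable, hence consistent. It is not fixed-point-closed by the standard well-foundedness argument for least fixed points: an infinite $\xrightarrow{X}$-thread with a least-fixed-point variable $X$ recurring infinitely often would, by tracking the ordinal approximant of $\trueset{\mu X.\chi}$ at the corresponding states, produce an infinite strictly descending sequence of ordinals, a contradiction. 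This last point — getting the fixed-point thread condition to interact correctly with the prefix structure and with the "$X < Y$" restriction in the definition of $\xrightarrow{X}$ — is the most delicate part of completeness.

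For soundness ($1 \Rightarrow 2$), given a maximal open branch I would construct a model $\M$ whose states are the prefixes appearing in the branch (with $\al$-flat identifications and euclidean-closure bookkeeping for agents with axiom $5$), with $\sigma~\trana~\sigma'$ essentially when $\sigma \prec \sigma'$ via an $\al$-step (plus the closures forced by $\LG(\al)$), and $V(\sigma) = \{p : \sigma~p \text{ in the branch}\}$. The task is to show $\M, \varepsilon \models \varphi$, which follows from the truth lemma: for every $\sigma~\psi$ in the branch, $\M, \sigma \models \psi$. This is proved by induction on $\psi$, with the box and diamond cases relying on local maximality (every applicable modal rule has fired, so $\form(\sigma)$ is "saturated"), and the least-fixed-point case relying on the branch \emph{not} being fixed-point-closed — i.e., the absence of a bad $\xrightarrow{X}$-thread guarantees that the least-fixed-point formulas are genuinely satisfied at their approximant levels, via a signature/rank argument assigning to each occurrence of a $\mu X.\chi$-formula on a prefix an ordinal that strictly decreases along $\xrightarrow{X}$-steps. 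One also checks that the constructed frame satisfies the frame conditions of $\LG(\al)$ for each $\al$, which is exactly what the modal closure rules and their side conditions were designed to guarantee (here Lemma \ref{lem:conditionsarepreserved} and the closure-preserving order are relevant for chaining the conditions consistently).

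For termination, assuming $\kappa$ bounds the model size: $2 \Rightarrow 3$ uses the completeness construction but starting from a model with at most $\kappa(|\varphi|)$ states, so that the branch built above never needs to introduce prefixes longer than the stated bound $|\act|\cdot\kappa(|\varphi|)^{|\varphi|^2}\cdot 2^{2|\varphi|+1}$ — the exponents come from bounding the number of distinct "configurations" (state of $\M$, together with the set of subformulas/closure-formulas currently tracked, which lives in $\subc(\varphi)$) that can occur along a path before one repeats, at which point the branch can be closed off locally; the resulting branch is sufficient and, being open, is also not sufficiently closed, hence sufficiently open. Conversely $3 \Rightarrow 2$: a sufficiently open branch for $\varphi$ can be "folded" into a finite model by collapsing repeated configurations (loop-checking), and the sufficiency bound together with the "not sufficiently closed" condition ensures that no least-fixed-point thread is unfolded $\kappa(|\varphi|)+1$ times without being discharged, so the folded structure has no bad infinite $\mu$-thread and satisfies $\varphi$ at $\varepsilon$; this reuses the soundness truth lemma on the folded model. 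The main obstacle throughout is the careful bookkeeping of the fixed-point thread condition across prefixes — making the ordinal/signature assignment respect both the tableau rules and the prefix ancestry, especially in the presence of the $5$-specific rules that move formulas "backwards" along prefixes — and getting the quantitative loop-checking bound to line up cleanly with $\kappa$.
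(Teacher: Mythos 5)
Your proposal is correct in outline and matches the paper's (sketched) argument in all essentials: the soundness direction $1 \Rightarrow 2$ is the usual model construction from a maximal open branch with a truth lemma whose least-fixed-point case is handled by a signature/rank argument, and the completeness direction is the model-guided construction of a branch in which the well-foundedness of ordinal approximants rules out fixed-point closure. The one place where you genuinely diverge is in how item 3 is tied back into the equivalence: the paper closes the cycle $1 \Rightarrow 2 \Rightarrow 3 \Rightarrow 1$, proving $3 \Rightarrow 1$ by detecting appropriate repeating portions of a sufficiently open branch and pumping them to obtain a maximal open branch, whereas you primarily go $3 \Rightarrow 2$ by folding the branch into a finite model. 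Both routes rest on the same loop-checking idea; the paper's has the advantage of reusing the already-established truth lemma for maximal branches verbatim (no separate truth lemma for a folded structure is needed), at the cost of having to argue that the repetition creates no bad least-fixed-point thread, which your folding avoids. Relatedly, for $2 \Rightarrow 3$ the paper appeals to determinacy results for infinite games (Zielonka) to make the disjunct and diamond-witness choices positional, which is what bounds the branch length against $\kappa$; your ordinal-signature formulation achieves the same end, but note that your clause ``for (\textsf{or}) we choose the disjunct that holds in $w_\sigma$'' must be sharpened to ``choose the disjunct of least signature'' --- with an arbitrary satisfying disjunct the approximants need not decrease and the descending-ordinal contradiction does not go through. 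You flag this delicacy yourself, so the gap is one of precision rather than of substance.
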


\begin{proof}[Proof sketch]
	The direction from 1 to 2 uses the usual model construction, but where one needs to take into account the fixed-point formulas; the direction from 2 to 3 uses techniques and results from \cite{Kozen1983a,zielonka1998infinite}, including Corollary \ref{cor:smallLmodel}; and the direction from 3 to 1 shows how to detect appropriate parts of the branch to repeat until we safely get a maximal branch.
%	The full proof can be found in Appendix \ref{appendix:proof_tableaux}.
\end{proof}

\begin{corollary}
	$\logicize{L}$-tableaux are sound and complete for $\logicize{L}$.
\end{corollary}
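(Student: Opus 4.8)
The plan is to unfold the definition of the proof system and reduce the claim to the equivalence of conditions~1 and~2 of Theorem~\ref{thm:tableaux}, which holds for every formula and every logic $\LG$. Recall that a tableau proof of $\varphi$ is a closed $\LG$-tableau for $\neg\varphi$, and that a tableau is closed exactly when none of its branches is open. Hence ``$\LG$-tableaux are sound and complete for $\LG$'' is the assertion that $\varphi$ is $\LG$-valid iff $\neg\varphi$ has a closed $\LG$-tableau; since $\varphi$ is $\LG$-valid iff $\neg\varphi$ is not $\LG$-satisfiable, it is enough to show that $\neg\varphi$ is $\LG$-satisfiable iff \emph{every} $\LG$-tableau for $\neg\varphi$ has an open branch.

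For completeness I would argue the contrapositive. Assume $\neg\varphi$ is not $\LG$-satisfiable and build a maximal $\LG$-tableau $\mathcal{T}$ for $\neg\varphi$ by applying the rules of Table~\ref{tab:tableau} fairly, allowing $\mathcal{T}$ to be infinite (this is needed because a branch can only become fixed-point-closed through an infinite $\xrightarrow{X}$-sequence, so maximality is a limit notion); fairness guarantees that in $\mathcal{T}$ every open branch is maximal. If some branch of $\mathcal{T}$ were open, then by the implication $1\Rightarrow 2$ of Theorem~\ref{thm:tableaux} the formula $\neg\varphi$ would be $\LG$-satisfiable, a contradiction. So all branches of $\mathcal{T}$ are closed, $\mathcal{T}$ is closed, and $\mathcal{T}$ is a tableau proof of $\varphi$.

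For soundness, suppose $\varphi$ has a tableau proof, i.e.\ some $\LG$-tableau $\mathcal{T}$ for $\neg\varphi$ is closed, and suppose towards a contradiction that $\neg\varphi$ is $\LG$-satisfiable. The satisfiability-preservation argument underlying the $1\Leftrightarrow 2$ part of Theorem~\ref{thm:tableaux} --- each tableau rule keeps some successor satisfiable under the appropriate prefix realization, with least-fixed-point variables controlled via the signature/ordinal-rank technique of \cite{Kozen1983a,zielonka1998infinite} --- applies to \emph{any} tableau, in particular to $\mathcal{T}$, and produces an open branch of $\mathcal{T}$, contradicting that $\mathcal{T}$ is closed. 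Hence $\neg\varphi$ is unsatisfiable and $\varphi$ is $\LG$-valid. The one point that keeps this from being a one-line consequence of Theorem~\ref{thm:tableaux} as stated is the mismatch between ``closed tableau'' (a condition on all branches of one fixed tableau) and the existential phrasing of condition~1: bridging it needs exactly the two routine supplements used above, namely that fair rule application always yields a maximal tableau, and that the invariant behind $1\Rightarrow 2$ is branch-local and hence works for an arbitrary tableau.
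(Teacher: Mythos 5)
Your argument is correct and follows the route the paper intends: the corollary is stated without proof as an immediate consequence of the equivalence of conditions~1 and~2 in Theorem~\ref{thm:tableaux}, which is exactly the reduction you perform. Your two supplements --- that fair rule application yields a maximal (possibly infinite) tableau, and that the satisfiability invariant behind $1\Rightarrow 2$ is branch-local so that \emph{every} tableau for a satisfiable formula has an open branch --- correctly bridge the quantifier mismatch between condition~1 and the definition of a tableau proof, a point the paper leaves implicit.
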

\newcommand{\closed}{$\mathtt{x}$}

\begin{example}
	\change{Let $\act = \{a,b\}$ and $\LG$ be a logic, such that 
		$\LG(a) = \k^\mu$ 
%		has no conditions 
		and $\LG(b) = \kv^\mu$. 
%		has condition $5$. 
		Let
%		Consider the formulas 
	\begin{align*}
		\varphi_1 &= (p \land \diam{a}p) \land \mn X.(\neg p \vee [a]X)& &and& &\varphi_2 = \diam{b}p \land \mn X.([b]\neg p \vee [b])X.
	\end{align*}
%	$\varphi_1 = (p \land \diam{a}p) \land \mn X.(\neg p \vee [a]X)$ and $\varphi_2 = \diam{b}p \land \mn X.[b]\neg p \vee [b]X$, 
%	where $a$ has no conditions and $b$ has condition $5$.
	As we see in Figure \ref{tab:tableauex}, the tableau for $\varphi_1$ produces an open branch, while the one for $\varphi_2$ has all of its branches closed, the leftmost one due to an infinite $\xrightarrow{X}$-sequence.
}
	
	\alwaysRootAtTop    % Henceforth puts the root at the top
	\def\defaultHypSeparation{\hskip .1in}
		
%	\begin{prooftree}
%		\AxiomC{$A\lor B$}
%		\AxiomC{$[A]$}
%		\noLine
%		\UnaryInfC{$C$}
%		\AxiomC{$[B]$}
%		\noLine
%		\UnaryInfC{$C$}
%		\TrinaryInfC{$C$}
%	\end{prooftree}

\begin{figure}
\begin{subfigure}[b]{0.5\textwidth}
%\begin{align*}
$
%	\AxiomC{?}
	\AxiomC{$a\diam{p}~[a]X$}
	\AxiomC{\closed}
	\noLine
	\UnaryInfC{$a\diam{p}~\neg p$}
	\BinaryInfC{$a\diam{p}~\neg p \vee [a]X$}\RightLabel{\scriptsize(\textsf{fix})}
	\UnaryInfC{$a\diam{p}~\mn X.(\neg p \vee [a]X)$}\RightLabel{\scriptsize(\textsf{X})}
	\UnaryInfC{$a\diam{p}~X$}\RightLabel{\scriptsize(\textsf{B})}
%	\noLine
	\UnaryInfC{$a\diam{p}~p$}\RightLabel{\scriptsize(\textsf{D})}
	\UnaryInfC{$\varepsilon~[a]X$}
	\AxiomC{\closed}
	\noLine
	\UnaryInfC{$\varepsilon~\neg p$}
	\BinaryInfC{$\varepsilon~\neg p \vee [a]X$}\RightLabel{\scriptsize(\textsf{fix})}
%	\noLine
	\UnaryInfC{$\varepsilon~\diam{a}p$}
	\noLine 
	\UnaryInfC{$\varepsilon~p$}
	\UnaryInfC{$\varepsilon~p \land \diam{a}p$}
	\noLine 
	\UnaryInfC{$\varepsilon~\mn X.(\neg p \vee [a]X)$}
	\UnaryInfC{$\varepsilon~(p \land \diam{a}p) \land \mn X.(\neg p \vee [a]X)$}
	\DisplayProof $
\end{subfigure}
\begin{subfigure}[b]{0.5\textwidth}
$ \AxiomC{\vdots}\RightLabel{\scriptsize(\textsf{X})}
	\UnaryInfC{$b\diam{p}~X$}\RightLabel{\scriptsize(\textsf{B})}
	\UnaryInfC{$\varepsilon~[b]X$}\RightLabel{\scriptsize(\textsf{B5})}
	\UnaryInfC{$b\diam{p}~[b]X$} 
	\AxiomC{\closed}\noLine 
	\UnaryInfC{$b\diam{p}~\neg p$}\RightLabel{\scriptsize(\textsf{B})}
	\UnaryInfC{$\varepsilon~[b]\neg p$}\RightLabel{\scriptsize(\textsf{B5})}
	\UnaryInfC{$b\diam{p}~[b]\neg p$} 
	\BinaryInfC{$b\diam{p}~[b]\neg p \vee [b]X$}\RightLabel{\scriptsize(\textsf{fix})}
	\UnaryInfC{$b\diam{p}~\mn X.([b]\neg p \vee [b]X)$}\RightLabel{\scriptsize(\textsf{X})}
	\UnaryInfC{$b\diam{p}~X$}\RightLabel{\scriptsize(\textsf{B})}
	\UnaryInfC{$\varepsilon~[b]X$}
	\AxiomC{\closed}
	\noLine
	\UnaryInfC{$b\diam{p}~\neg p$}\RightLabel{\scriptsize(\textsf{B})}
	\UnaryInfC{$\varepsilon~[b]\neg p$}
	\BinaryInfC{$\varepsilon~[b]\neg p \vee [b]X$}\RightLabel{\scriptsize(\textsf{fix})}
	%	\noLine
%	\UnaryInfC{$\varepsilon~\diam{b}p$}
%	\noLine 
	\UnaryInfC{$b\diam{p}~p$}\RightLabel{\scriptsize(\textsf{D})}
	\UnaryInfC{$\varepsilon~\diam{b}p$}
	\noLine 
	\UnaryInfC{$\varepsilon~\mn X.([b]\neg p \vee [b]X)$}
	\UnaryInfC{$\varepsilon~\diam{b}p \land \mn X.([b]\neg p \vee [b]X)$}
	\DisplayProof $
\end{subfigure}
\caption{\change{Tableaux for $\varphi_1$ and $\varphi_2$. The dots represent that the tableau keeps repeating as from the identical node above. The \texttt{x} mark represents a propositionally closed branch.}}
\label{tab:tableauex}
\end{figure}
\end{example}

%\label{sec:NImulti} 
%\input{negative-multi.tex}

\section{Conclusions}
\label{sec:conclusion}
% !TeX root = draft.tex

We studied multi-modal logics with recursion. These logics mix the frame conditions from epistemic modal logic, and the recursion of the $\mu$-calculus.
We gave simple translations among these logics that connect their satisfiability problems.
This allowed us to offer complexity bounds for
satisfiability and to prove certain finite model results. We also presented a sound and complete tableau
that has termination guarantees, conditional on a logic’s finite model property.

\paragraph{Conjectures and Future Work}
We currently do not posses full translations for the cases of symmetric and euclidean frames. 
What is interesting is that we also do not have a counterexample to prove that the translations that we already have, as well as other attempts, are \emph{not} correct. 
%Specifically, we do have some intricate translations, but we have not managed to prove them correct, while also lacking a counterexample of their validity.
In the case of symmetric frames, we have managed to prove that our construction works 
for formulas without least-fixed-point operators.
%when the relative formula is in the $\mn$-free $\mu$-Calculus. 
A translation for euclidean frames and for the full syntax on symmetric frames
%relative correctness proof 
is left as future work. 
We know that we cannot use the same model constructions that preserve the finiteness of the model as in Subection \ref{sec:transl_mu_cal} (see Remark \ref{remark:finitemodelB5}).
%We conjecture that a translation similar to 
%Translation \ref{transl:mu_calc_symm}, or a mixture of that approach and of the translation for reflexivity
%the case for symmetric frames 
%would work for this scenario.
%, and we do have a candidate solution, for which we are in the process of proving the correctness. 
%However, even though we are unable to produce any counterexample for our construction, or our proof method, we still seem to be missing a few cases for completing a proof. 
%We would be interested to either prove the correctness of a translation or to find good counterexamples for the ones that we have.
%
%Finding more translations for the cases of logics with conditions $B$ and $5$, would give better complexity bounds, and it would provide unconditional termination guarrantees for the tableaux.

%This final result, along with the completion of the proof for the full $\mu$-Calculus over symmetric frames would yield the final upper bound for modal logic with recursion (and characterise all such logics in \EXP). Moreover as we will see later in Section \ref{sec:multi} these specific translations provide us with a very useful \textit{small model property}, as a gereralisation Kozen's result (\ref{thm:smallKmodel}) for the full $\mu$-Calculus. Thus, not only the established translations are useful for proving the termination of the tableaux we present later on, but the ones we conjecture  to exist would also immediately imply the termination of the relative presented tableaux. 

We do not prove the finite model property on all logics. We note that although it is known  that \change{logics} with recursion with at least two agents with either $B$ or $5$ do not have this property (see \ref{remark:finitemodelB5}, \cite{Dagostino2013S5}), the situation is unclear if there is only one such agent.

We further conjecture that it is not possible to prove \EXP-completenes for all the single-agent cases. Specifically, 
we expect $\kf^\mu$-satisfiability to be in \PSPACE,  similarly to how 
$\kv^\mu$-satisfiability is in \NP \cite{Dagostino2013S5}.
As such, we do not expect Translation \ref{transl:lower} to be correct for these cases.

\change{The model checking problem for the $\mu$-calculus is an important open problem.
The problem does not depend on the frame restrictions of the particular logic, though one may wonder whether additional frame restrictions would help solve the problem more efficiently.
We are not aware of a way to use our translations to solve model checking more efficiently.}

%This work lacks a few tight bounds for the axioms $\b,\kf$, and $\kv$ in the single axiom case. For $\kf$, the only bounds we have are the upper bound from Theorem \ref{thm:mu_calc_transitive_transl} which classifies it in \EXP, and a lower bound from Theorem \ref{thm:ladhalp} that it is \PSPACE-hard.  Regarding $\b$ and $\kv$, we conjecture that our translation \ref{transl:mu_calc_symm}, as well as a variation for it for euclidean frames, will provide us with an upper bound for the general $\mu$-Calculus, and prove that their complexity is in \EXP, and extend Theorem \ref{thm:tableaux} to symmetric and euclidean frames.

%We expect the exact complexity of $\kf$ to be in \PSPACE~ rather than \EXP-hard. This is because the satisfiability problem differs fundamentally from the one in general frames, as the maximum distance among worlds that can contribute to satisfiable subformulae is always at most $1$. 
%Another indication is that $\b$ which has been proven in our work to be \EXP-hard, while $\kv$ is only \NP-complete. However, $\kv$ as a frame property is extremely similar to $\mathbf{B4}$, with the difference that one world might not be part of the equivalence class in $\kv$ (aside from worlds with no transitions). Thus it is expected that the complexity ``loss'' could be related to the transitive part of the $\mathbf{B4}$ frame class.

%We conjecture that the bounds that we presented are tight. 
As, to the best of our knowledge, most of the logics described in this paper have not been explicitly defined before, with notable exceptions such as \cite{DAGOSTINO20104273transitive,Dagostino2013S5,alberucci2009modal}, they also lack any axiomatizations and completeness theorems.
We do expect the classical methods from \cite{Kozen1983a,ladnermodcomp,Halpern1992} and others to work out in these cases as well. However it would be interesting to see if there are any unexpected situations that arise.
%To us, it was somewhat surprising that the \NP--\PSPACE\ gap, with respect to \NI, has been preserved (and expanded to \NP--\EXP) for one-action logics, even with recursive operators.

Given the importance of common knowledge for epistemic logic and the fact that it has been known that common knowledge can be thought of as a (greatest) fixed-point already from \cite{harman1977review,Barwise:1988:TVC:1029718.1029753}, we consider the logics that
we presented to be natural extensions of \ML.
Besides the examples given in Section \ref{sec:back}, we are interested in exploring what other natural concepts can be defined with this enlarged language.

%\section*{References}
%
%
%
%\bibliographystyle{elsarticle-num}
\bibliographystyle{eptcs}
\bibliography{mybib}
%

%\newpage
%
%\appendix
%\section*{Appendix}
%\section{Proofs of Section \ref{sec:translations}}\label{appendix:Translations}
%\input{appendix_sect3Proofs.tex}
%\section{Proofs of Section \ref{sec:multi}}
%\label{appendix:proof_tableaux}
%\input{AppxTabeauProof.tex}
%
%
%\input{appendixA}
\end{document}